\newcommand{\defeq}{\vcentcolon=}
\newcommand{\eqdef}{=\vcentcolon}
\newcommand{\norm}[1]{\left\|{#1}\right\|}
\newcommand{\diag}{\operatorname{diag}}
\newcommand{\rank}{\operatorname{rank}}
\newcommand{\vect}{\operatorname{vec}}
\newcommand{\sgn}{\operatorname{sgn}}
\newcommand{\polylog}{\operatorname{polylog}}
\newcommand{\ind}[1]{\mathbf{1}\left(#1\right)}
\newcommand{\rmd}{\mathrm{d}}
\newcommand{\rmF}{\mathrm{F}}
\newcommand{\rmU}{\mathcal{U}}
\newcommand{\rmG}{\mathcal{G}}
\newcommand{\rmD}{\mathcal{D}}
\newcommand{\bfM}{\mathbf{M}}
\newcommand{\bbR}{\mathbb{R}}
\newcommand{\bbN}{\mathbb{N}}
\newcommand{\bbP}{\mathbb{P}}
\newcommand{\calX}{\mathcal{X}}
\newcommand{\calM}{\mathcal{M}}
\newcommand{\calU}{\mathcal{U}}
\newcommand{\calV}{\mathcal{V}}
\newcommand{\calA}{\mathcal{A}}
\newcommand{\calB}{\mathcal{B}}
\newcommand{\dB}{\Delta\calB}
\newcommand{\dX}{\Delta\calX}
\newtheorem{theorem}{Theorem}[section]
\newtheorem{lemma}[theorem]{Lemma}
\newtheorem{corollary}[theorem]{Corollary}
\newtheorem{proposition}[theorem]{Proposition}
\newtheorem{definition}[theorem]{Definition}
\begin{document}
\title{Optimal Sample Complexity for Stable Matrix Recovery\thanks{This work was supported in part by the National Science Foundation (NSF) under Grant IIS 14-47879. This paper was presented in part at ISIT 2016 \cite{Li2016}.}}
\author[1]{Yanjun Li}
\author[2]{Kiryung Lee}
\author[1]{Yoram Bresler}
\affil[1]{CSL and ECE, University of Illinois, Urbana-Champaign}
\affil[2]{ECE, Georgia Institute of Technology}
\date{}
\maketitle

\doublespacing

\abstract
Tremendous efforts have been made to study the theoretical and algorithmic aspects of sparse recovery and low-rank matrix recovery. This paper fills a theoretical gap in matrix recovery: the optimal sample complexity for stable recovery without constants or log factors. We treat sparsity, low-rankness, and potentially other parsimonious structures within the same framework: constraint sets that have small covering numbers or Minkowski dimensions. 
We consider three types of random measurement matrices (unstructured, rank-1, and symmetric rank-1 matrices), following probability distributions that satisfy some mild conditions. In all these cases, we prove a fundamental result -- the recovery of matrices with parsimonious structures, using an optimal (or near optimal) number of measurements, is stable with high probability.


\section{Introduction}

Matrix recovery plays a central role in many applications of signal processing and machine learning. It is widely known that an unknown matrix can be recovered from an underdetermined system of linear measurements, by exploiting parsimonious structures of the matrix, such as sparsity or low-rankness \cite{Eldar2012a,Davenport2016}. A special case where the unknown matrix is a sparse vector has been of particular interest in the context of compressed sensing and variable selection in linear regression.

Linear measurements of an unknown matrix are obtained through linear functionals, i.e., inner products with measurement matrices, which take different forms in different applications. In matrix completion \cite{Candes2009}, blind deconvolution via lifting \cite{Li2015d}, and bilinear regression \cite{Nzabanita2015}, the measure matrices have rank-1. In phase retrieval via lifting \cite{Candes2013a}, and in covariance matrix estimation via sketching \cite{Bahmani2015}, the measurement matrices are symmetric (or Hermitian) rank-1 matrices.

Given noise-free measurements, it is of interest to determine when the unknown matrix can be identified as the unique solution to an underdetermined system with a parsimonious prior. Sufficient conditions for the unique identification have been studied for recovery of low-rank and/or sparse matrices \cite{Donoho2003,Eldar2012}. As special cases with structured measurements, the uniqueness in bilinear inverse problems, especially blind deconvolution and blind calibration, is studied separately \cite{Li2015,Li2015b,Li2015d,Li2015e}. These results provided tight sample complexities for the exact recovery of the unknown matrix. 

However, in practice, measurements are corrupted with additive noise. It is therefore of interest to answer the question: under what conditions can the unknown matrix be estimated stably from noisy measurements. 
Many stability results have been shown by demonstrating the effectiveness of convex relaxation.
As for the recovery of sparse vectors, early results using the restricted isometry property (RIP) \cite{Candes2005,Candes2006} showed that stable recovery of $s$-sparse vectors of length $n$ is guaranteed with $m = O(s \log(n/s))$ i.i.d. Gaussian random measurements. 
Later RIPless analysis showed that, for a larger class of measurement functionals, $m = O(s \log(n/s))$ measurements are sufficient for stable recovery.
The results on stable recovery of sparse vectors were extended to the case of low-rank matrices \cite{Recht2010}, guaranteeing the recovery of $n\times n$ matrices of rank-$r$ from $m = O(r n \log n)$ linear measurements. 
Cand\`{e}s and Plan \cite{Candes2011} sharpened sample complexity to $m = O(r n)$. 
Chandrasekaran et al. unified the parsimonious models including low-rank matrices and sparse vectors as atomic sparsity models \cite{Chandrasekaran2012}. Using the Gaussian width of a tangent cone, they computed sample complexities for stable recovery that coincide with the empirical phase transition using convex relaxation. Recently, recovery of matrices that are sparse and low-rank has been studied (e.g., \cite{Lee2013}).
As for rank-1 measurement matrices, Cai and Zhang \cite{Cai2015} showed that stable recovery of $n_1\times n_2$ matrices of rank $r$ is achieved by $m = O(r (n_1 + n_2))$ measurements. Recently, stable recovery in blind deconvolution and phase retrieval \cite{Ahmed2014,Candes2013a,Lee2015a} has been studied by lifting to matrix recovery.

Another line of work studies the information-theoretic fundamental limit of sparse or low-rank matrix recovery, establishing the sample complexities achieved by an optimal decoder (practical or not). Wu and Verd\`{u} studied the performance of the optimal stable decoder for compressed sensing in a Bayesian framework \cite{Wu2012}. Also for compressed sensing, Reeves showed, without a prior distribution on the unknown sparse vector, the optimal sample complexity for stable recovery from i.i.d. Gaussian random measurements \cite{Reeves2014}. 

Riegler et al. studied the information-theoretic limit for the unique recovery of matrices in a set of small Minkowski dimension, using unstructured or rank-1 measurement matrices \cite{Riegler2015}. However, a relevant result on stable matrix recovery has been missing. Many key results in this paper build on the brilliant work by Riegler et al. \cite{Stotz2013,Riegler2015}. Our contributions include the following: (i) we refine the covering number argument used in \cite{Riegler2015} to achieve stability under the same sample complexity; (ii) we provide a simpler proof that gets rid of some unnecessary technicalities; (iii) we derive a concentration of measure bound with better constants for the case of uniformly distributed measurements treated by Stotz et al. \cite{Stotz2013}, and provide additional results for Gaussian random measurements. We provide more detailed comparisons later in the paper.

In this paper, we address the fundamental question of stable matrix recovery: how many measurements are sufficient to guarantee the existence of stable decoder? Similar to the paper by Riegler et al. \cite{Riegler2015}, our analysis covers a large category of problems, including compressed sensing, low-rank matrix recovery, phase retrieval, etc.

\section{Problem Statement}\label{sec:probstat}
\subsection{Notations}
The transpose of a matrix $A$ is denoted by $A^T$. The inner product of two matrices $A$ and $X$ are denoted by $\left<A,X\right>=\operatorname{trace}(A^TM)$. We use $\norm{\cdot}_0$ and $\norm{\cdot}_{\mathrm{r},0}$ to denote the numbers of nonzero entries and nonzero rows in a matrix, respectively. We use $\norm{\cdot}_2$ to denote the $\ell_2$ norm of a vector or the spectral norm of a matrix, and $\norm{\cdot}_\rmF$ to denote the Frobenious norm of a matrix. We use $[n]$ to denote the set of integers $\{1,2,\cdots,n\}$. If $J\subset [n]$, then the complement of $J$ is denoted by $J^c=[n]\backslash J$. We use $a^{(j)}$ to denote the $j$th entry of $a$, and $a^{(j_1:j_2)}$ to denote the subvector of $a$ consisting of the entries indexed by $j_1,j_1+1,\cdots,j_2$. Borrowing the colon notation from MATLAB, we use $X^{(J,:)}$ to denote the submatrix of $X$ consisting of the rows indexed by $J$. 

We use $\ind{\cdot}$ to denote the indicator function. Suppose $\Omega$ is the state space of a random variable $A$, and $E(A)$ is a statement about $A$ (also known as an event). Then $p_\rmD(\cdot)$ and $\bbP_\rmD[E(A)]$ denote the probability density function (PDF) of a distribution $\rmD$, and the probability of $E(A)$ when $A$ follows distribution $\rmD$. We have $\bbP_\rmD[E(A)] = \int_{\Omega} \ind{E(A)}\cdot p_\rmD(A) ~\rmd A$, which involves a minor abuse of notation -- the random variable and its value are both denoted by $A$.

We say a set $\Omega_\calX\in\bbR^{n_1\times n_2}$ is a cone, if for every $X\in\Omega_\calX$ and every $\sigma>0$, the scaled matrix $\sigma X\in \Omega_\calM$. The unit ball (with respect to the $\ell_2$ norm) in $\bbR^{n}$ centered at the origin is denoted by $\calB_{n}$. Then $x+R\calB_{n}$ denotes the ball in $\bbR^{n}$ of radius $R$ centered at $x$. Similarly, the unit ball (with respect to the Frobenius norm) in $\bbR^{n_1\times n_2}$ centered at the origin is denoted by $\calB_{n_1\times n_2}$. Then $X+R\calB_{n_1\times n_2}$ denotes the ball in $\bbR^{n_1\times n_2}$ of radius $R$ centered at $X$. We use $V_{n}=\int_{\calB_{n}} ~\rmd x$ to denote the volume of a unit ball in $\bbR^{n}$. Then the volume of a ball in $\bbR^n$ of radius $R$ is $R^nV_n$.

\subsection{Matrix Recovery}\label{sec:bdstat}
In this paper, we study the constrained matrix recovery (MR) problem. Suppose $X_0$ is an unknown $n_1\times n_2$ matrix. We have $m$ linear measurements of $X_0$, $y=\calA(X_0)+e\in\bbR^m$, where $\calA(X_0)$ is in the form of $\calA(X_0)=[\left<A_1,X_0\right>,\left<A_2,X_0\right>,\cdots,\left<A_m,X_0\right>]^T$, $A_1,A_2,\cdots,A_m\in\bbR^{n_1\times n_2}$ denote the measurement matrices, and $e= [e^{(1)},e^{(2)},\cdots,e^{(m)}]^T\in\bbR^m$ denotes the noise or other distortions in the measurement. The matrix recovery problem  refers to estimating the unknown matrix $X_0$ from $y$. We consider three models for the measurement matrices in this paper:
\begin{enumerate}
	\item Unstructured measurement matrices $\{A_j\}_{j=1}^m$.
	\item Rank-1 measurement matrices $\{A_j=a_j b_j^T\}_{j=1}^m$.
	\item Symmetric rank-1 measurement matrices $\{A_j=a_j a_j^T\}_{j=1}^m$, for which $n_1=n_2$.
\end{enumerate}
In this paper, we assume that the matrices $\{A_j\}_{j=1}^m$ (resp. vectors $\{a_j\}_{j=1}^m$, $\{b_j\}_{j=1}^m$) are i.i.d. random matrices (resp. vectors), following a probability distribution on $\bbR^{n_1\times n_2}$ (resp. $\bbR^{n_1}$, $\bbR^{n_2}$) that satisfies a mild concentration of measure inequality, which can be proved for a large category of probability distributions (e.g., uniform distribution on a ball, i.i.d. Gaussian distribution). More discussion is provided in Section \ref{sec:main}.

In matrix recovery, the number of measurements $m$ is often smaller than $n_1n_2$ -- the number of entries in $X_0$. For matrix recovery to be well-posed, the unknown matrix $X_0$ is assumed to belong to a known constraint set $\Omega_\calX\subset\bbR^{n_1\times n_2}$, which encodes our prior knowledge of $X_0$. As examples, we consider the following constraint sets:
\begin{enumerate}
	\item \emph{Matrices in a subspace.} The constraint set is a subspace of $\bbR^{n_1\times n_2},$ of dimension $t<n_1n_2$, which has an orthonormal basis $M_1, M_2,\cdots,M_t$. Then
\begin{equation}
\Omega_\calX = \{X \in\bbR^{n_1\times n_2}:\exists \beta\in\bbR^t,~\text{s.t.}~X = \sum_{i=1}^{t}\beta^{(i)} M_i\}. \label{eq:cssp}
\end{equation}
Examples of such subspaces include the sets of Hankel matrices, Toeplitz matrices, and symmetric matrices. Hankel (resp. Toeplitz) matrices, in which each skew-diagonal (resp. diagonal) is constant, i.e., $X^{(j,k)}=X^{(j+1,k-1)}$ (resp. $X^{(j,k)}=X^{(j+1,k+1)}$), reside in a subspace of dimension $t = n_1+n_2-1$. Symmetric matrices, which are square matrices equal to their transposes, i.e., $n_1=n_2=n$ and $X^{(j,k)}=X^{(k,j)}$, reside in a subspace of dimension $t = n(n+1)/2$. Symmetric Toeplitz matrices reside in a subspace of dimension $t = n$.

	\item \emph{Sparse matrices.} The constraint set is the set of $s$-sparse matrices over a dictionary, whose atoms are $M_1, M_2,\cdots,M_t$. Then
\begin{equation}
\Omega_\calX = \{X\in\bbR^{n_1\times n_2}: \exists \beta\in\bbR^t,~\text{s.t.}~\norm{\beta}_0\leq s,~X = \sum_{i=1}^{t}\beta^{(i)} M_i\}. \label{eq:css}
\end{equation}
When $\{M_i\}_{i=1}^{t}$ are symmetric matrices, the constraint set is the set of sparse symmetric matrices. When $n_2 = 1$, the sparse matrix recovery problem reduces to sparse vector recovery.

The following notations will be used in Section \ref{sec:cn}. Let $\bfM=[\vect(M_1),\vect(M_2),\cdots,\vect(M_t)]$, then we have $\vect(X)=\bfM\beta$. Define
\begin{align*}
\sigma_{s,\min} = \underset{\norm{\beta}_2 = 1,\norm{\beta}_0\leq s}{\min} \norm{\bfM\beta}_2,\qquad \sigma_{s,\max} = \underset{\norm{\beta}_2 = 1,\norm{\beta}_0\leq s}{\max} \norm{\bfM\beta}_2,\qquad \kappa_s = \frac{\sigma_{s,\max}}{\sigma_{s,\min}}. 
\end{align*}
For example, if $\bfM$ is an orthonormal basis (e.g., the standard basis),  then $\kappa_s = \sigma_{s,\min}=\sigma_{s,\max}=1$. If $\bfM$ has a restricted isometry constant $\delta_s$ \cite{Candes2005}, then $\sigma_{s,\min} \geq \sqrt{1-\delta_s}$, $\sigma_{s,\max} \leq \sqrt{1+\delta_s}$, and $\kappa_s \leq \sqrt{\frac{1+\delta_s}{1-\delta_s}}$. In this paper, we assume that $\sigma_{4s,\min}>0$ and hence $\kappa_{4s}<\infty$. 
	\item \emph{Low-rank matrices.} The constraint set is the set of matrices of rank at most $r$, i.e., 
\begin{equation}
\Omega_\calX = \{X\in\bbR^{n_1\times n_2}: \rank(X)\leq r\}. \label{eq:csl}
\end{equation}
	\item \emph{Sparse low-rank matrices.} We consider the special set of matrices that have at most rank $r$, have at most $s_1$ nonzero rows, and have at most $s_2$ nonzero columns ($r<\min\{s_1,s_2\}$). The constraint set is
\begin{equation}
\Omega_\calX = \{X \in\bbR^{n_1\times n_2}: \rank(X)\leq r, \norm{X}_{\mathrm{r},0}\leq s_1, \norm{X^T}_{\mathrm{r},0}\leq s_2\}. \label{eq:cssl}
\end{equation}
	\item \emph{Symmetric low-rank matrices.} Symmetry can be combined with low-rank structures in \eqref{eq:csl} and \eqref{eq:cssl}, the results of which are:
\begin{align}
\Omega_\calX =& \{X\in\bbR^{n\times n}: X=X^T,~\rank(X)\leq r\}. \label{eq:csl2}\\
\Omega_\calX =& \{X \in\bbR^{n\times n}: X=X^T,~\rank(X)\leq r,~\norm{X}_{\mathrm{r},0}\leq s\}. \label{eq:cssl2}
\end{align}
\end{enumerate}  

Note that all the above constraint sets are cones. For all practical purposes, the matrix $X_0$ has finite energy. Hence it suffices recover $X_0$ subject to the constraint set restricted to a ball, whose radius is sufficiently large. 
We define the following shorthand notations for the rest of this paper:
\begin{align}
\Omega_\calB \defeq & \Omega_\calX \bigcap \calB_{n_1\times n_2}, \label{eq:setB}\\
\Omega_{\dB} \defeq & \Omega_\calB-\Omega_\calB = (\Omega_\calX \bigcap \calB_{n_1\times n_2})-(\Omega_\calX \bigcap \calB_{n_1\times n_2}),  \label{eq:setdB} \\
\Omega_{\dX} \defeq & (\Omega_\calX-\Omega_\calX)\bigcap \calB_{n_1\times n_2}.  \label{eq:setdX}
\end{align}
Then we can estimate $X_0$, for example, by solving the following constrained least squares problem:
\begin{align*}
\text{(MR)}\qquad\underset{X}{\min.}~~& \norm{\calA(X)-y}_2,\\
\text{s.t.}~~& X\in \sigma\Omega_\calB.
\end{align*}
If the radius $\sigma<\infty$, (MR) has a bounded constraint set
\[
\sigma\Omega_\calB = \Omega_\calX \bigcap \sigma\calB_{n_1\times n_2} = \{X\in\Omega_\calX: \norm{X}_\rmF\leq \sigma\}.
\]
If $\sigma = \infty$, the constraint set becomes $\Omega_\calX$, which is unbounded.

\subsection{Stability}

We introduce the following notions of stability:
\begin{definition}~
\begin{enumerate}
	\item \textbf{Single point stability:} We say that the recovery of $X_0\in\sigma\Omega_\calB$ using measurement operator $\calA$ is stable at level $(\delta,\varepsilon)$, if for all $X\in\sigma\Omega_\calB$ such that $\norm{\calA(X)-\calA(X_0)}_2\leq \delta$, we have $\norm{X-X_0}_\calX\leq \varepsilon$.
	\item \textbf{Uniform stability:} We say that the recovery on $\sigma\Omega_\calB$ using measurement operator $\calA$ is uniformly stable at level $(\delta,\varepsilon)$, if for all $X_1,X_2\in \sigma\Omega_\calB$ such that $\norm{\calA(X_1)-\calA(X_2)}_2\leq \delta$, we have $\norm{X_1-X_2}_\calX\leq \varepsilon$.
\end{enumerate}
In both definitions, $\norm{\cdot}_\calX$ can either be the Frobenius norm $\norm{\cdot}_\rmF$ or the spectral norm $\norm{\cdot}_2$, and $\varepsilon = \varepsilon(\delta)$ is a function of $\delta$ that vanishes as $\delta$ approaches $0$.
\end{definition}

The stability, as defined above, would guarantee the accuracy of the constrained least squares estimation. Let $X_1$ denote the solution to (MR) with noisy measurement. Suppose the perturbation in the measurement is small, $\norm{e}_2\leq \frac{\delta}{2}$ for some small $\delta >0$. Then the deviation of $\calA(X_1)$ from $\calA(X_0)$ is small, i.e.,
\begin{align*}
\norm{\calA(X_1)-\calA(X_0)}_2 \leq \norm{\calA(X_1)-y}_2+\norm{\calA(X_0)-y}_2 \leq 2\norm{\calA(X_0)-y}_2 = 2\norm{e}_2 \leq \delta.
\end{align*}
By the definition of single point stability or uniform stability, we have $\norm{X_1-X_0}_\calX\leq \varepsilon$, which is also a small quantity.

If the recovery of $X_0$ is stable, then for every $\varepsilon>0$, there exists $\delta>0$ such that for all $X\in\sigma\Omega_\calB$ that satisfies $\norm{\calA(X)-\calA(X_0)}_2\leq \delta$, we have $\norm{X-X_0}_\calX\leq \varepsilon$. If the recovery of all matrices in $\sigma\Omega_\calB$ is stable, then for every $\varepsilon>0$, there exists $\delta>0$ such that for all $X_1,X_2\in\sigma\Omega_\calB$ that satisfies $\norm{\calA(X_1)-\calA(X_2)}_2\leq \delta$, we have $\norm{X_1-X_2}_\calX\leq \varepsilon$. If $\calA$ (restricted to the domain $\sigma\Omega_\calB$) is invertible, i.e., there exists $\calA^{-1}: \calA(\sigma\Omega_\calB)\rightarrow \sigma\Omega_\calB$, then \emph{single point stability} at $X_0$ implies that $\calA^{-1}$ is continuous at $\calA(X_0)$; \emph{uniform stability} on $\sigma\Omega_\calB$ implies that $\calA^{-1}$ is uniformly continuous on $\calA(\sigma\Omega_\calB)$.

Suppose $\Omega_\calX$ is a cone, and we need to evaluate the stability on a bounded constraint set $\sigma\Omega_\calB$ ($\sigma<\infty$). We can scale $X_0$ and the radius of the ball by $\frac{1}{\sigma}$ simultaneously. If for all $X\in\Omega_\calB$ such that $\norm{\calA(X)-\calA(\frac{X_0}{\sigma})}_2\leq \delta$, we have $\norm{X-\frac{X_0}{\sigma}}_\calX\leq \varepsilon(\delta)$, then for all $X\in\sigma\Omega_\calB$ such that $\norm{\calA(X)-\calA(X_0)}_2\leq \delta$, we have $\norm{X-X_0}_\calX\leq \sigma\varepsilon(\frac{\delta}{\sigma})$. In other words, stability on $\Omega_\calB$ implies stability on any bounded subset of $\Omega_\calX$. Therefore, in this paper, we consider $\Omega_\calB$ and $\Omega_\calX$ as representatives for bounded and unbounded constraint sets. The main results bound the probability of three events:
\begin{enumerate}
	\item Single point stability on bounded constraint set $\Omega_\calB$.
	\item Uniform stability on bounded constraint set $\Omega_\calB$.
	\item Uniform stability on unbounded constraint set $\Omega_\calX$.
\end{enumerate}

\subsection{Modeling Error}
In practice, the true matrix $X_0$ may not belong to the constraint set $\Omega_\calX$, but may be close to it. Let $\widehat{X}_0 = \arg\min_{X\in\Omega_\calX} \norm{X-X_0}_\calX$ denote the projection of $X_0$ onto $\Omega_\calX$, and suppose we have the following bounds on the modeling error $\norm{X_0-\widehat{X}_0}_\calX$ and the operator norm of $\calA$:
\[
\norm{X_0-\widehat{X}_0}_\calX = \underset{X\in\Omega_\calX}{\min} \norm{X-X_0}_\calX\leq \varepsilon_M,
\]
\[
\norm{\calA}_{\calX\rightarrow 2} = \max\limits_{X\in\bbR^{n_1\times n_2}, \norm{X}_\calX = 1} \norm{\calA(X)}_2 \leq L.
\]
Then the error in the estimator $X_1$ is bounded by
\begin{equation}
\norm{X_1-X_0}_\calX \leq \norm{X_1-\widehat{X}_0}_\calX+\norm{X_0-\widehat{X}_0}_\calX \leq \varepsilon(2L\varepsilon_M + \delta) + \varepsilon_M, \label{eq:model_error}
\end{equation}
where the bound on the first term follows from the stability of matrix recovery at $\widehat{X}_0\in\Omega_\calX$ (the recovery error $\varepsilon(\cdot)$ is a function of the measure error), and from the following bound on the measurement error:
\begin{align*}
\norm{\calA(X_1)-\calA(\widehat{X}_0)}_2 \leq & \norm{\calA(X_1)-y}_2 + \norm{\calA(\widehat{X}_0)-y}_2 \\
\leq & 2\norm{\calA(\widehat{X}_0)-y}_2 \\
\leq & 2\norm{\calA(\widehat{X}_0)-\calA(X_0)}_2 + 2\norm{\calA(X_0)-y}_2 \\
\leq & 2\norm{\calA}_{\calX\rightarrow 2}\norm{\widehat{X}_0-X_0}_\calX + 2\norm{e}_2 \\
\leq & 2L\varepsilon_M + 2\times\frac{\delta}{2} = 2L\varepsilon_M + \delta.
\end{align*}
The first and third lines follow from triangle inequality, and the second line follows from the optimality of $X_1$ in (MR).

By \eqref{eq:model_error}, even in the presence of modeling error, stability of recovery can guarantee that the recovery error is bounded by a small quantity that is a function of the modeling error ($\varepsilon_M$) and the measurement error ($\delta$).


\section{Covering Number and Minkowski Dimension} \label{sec:cnmd}
The conditions for stability of the matrix recovery problem (MR) are expressed in terms of the covering number or the Minkowski dimension of the constraint set $\Omega_\calX$, which are defined as follows.
\begin{definition}\label{def:minkowski}
The lower and upper Minkowski dimensions of a nonempty bounded set $\Omega\subset\bbR^{n_1\times n_2}$ are
\[
\underline{\dim}_\mathrm{B}(\Omega)\eqdef \underset{\rho\rightarrow 0}{\lim\inf}\frac{\log N_{\Omega}(\rho)}{\log\frac{1}{\rho}},\qquad
\overline{\dim}_\mathrm{B}(\Omega)\eqdef \underset{\rho\rightarrow 0}{\lim\sup}\frac{\log N_{\Omega}(\rho)}{\log\frac{1}{\rho}},
\]
where $N_{\Omega}(\rho)$ denotes the covering number of set $\Omega$ given by
\[
N_{\Omega}(\rho) = \min\biggl\{k\in\bbN: \Omega\subset \underset{i\in\{1,2,\cdots,k\}}{\bigcup}(X_i+\rho\calB_{n_1\times n_2}),~X_i\in\bbR^{n_1\times n_2}\biggr\}.
\]
If $\underline{\dim}_\mathrm{B}(\Omega) = \overline{\dim}_\mathrm{B}(\Omega)$, then it is simply called the Minkowski dimension, denoted by $\dim_\mathrm{B}(\Omega)$.
\end{definition}

The covering number of a set characterizes its description complexity. As will be shown in Section \ref{sec:main}, bounds on the covering numbers of $\Omega_\calB$ play an important role in the sample complexities of matrix recovery problems that guarantee single point stability. Similarly, sample complexities that guarantee uniform stability are expressed in terms of bounds on the covering numbers of $\Omega_{\dB}$ and $\Omega_{\dX}$, defined by \eqref{eq:setdB} and \eqref{eq:setdX}, respectively. 

\subsection{Bounds on Covering Numbers}\label{sec:cn}
We prove in Appendix \ref{app:cn} the following bounds on the covering numbers of the constraint sets defined in Section \ref{sec:bdstat}:
\begin{proposition}\label{pro:cnB}
If $\Omega_\calX$ is \eqref{eq:cssp} -- \eqref{eq:cssl2}, then the covering number of $\Omega_\calB$, defined by \eqref{eq:setB}, satisfies $N_{\Omega_\calB}(\rho) \leq C_1\left(\frac{1}{\rho}\right)^{d_1}$ for all $0<\rho<1$, where $d_1$ and $C_1$ are global constants that only depend on $n_1$, $n_2$, $s$, $r$, $s_1$, and $s_2$. The expressions for $d_1$ and $C_1$ are summarized in Table \ref{tab:cnB}.
\end{proposition}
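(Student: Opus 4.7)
The plan is to handle each of the six constraint sets \eqref{eq:cssp}--\eqref{eq:cssl2} by tailoring a covering argument to its parametric structure, then packaging the results into the unified bound $N_{\Omega_\calB}(\rho) \leq C_1 (1/\rho)^{d_1}$. In every case the basic ingredient is the classical volumetric fact that the unit Euclidean ball in $\bbR^d$ admits a $\rho$-net (in $\ell_2$ or Frobenius norm) of size at most $(3/\rho)^d$; the work consists of reducing each constraint set to a finite union of images of such low-dimensional balls under a controlled Lipschitz map, and tracking the resulting exponent and prefactor.

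For the subspace case \eqref{eq:cssp}, orthonormality of $M_1,\ldots,M_t$ makes $\beta\mapsto\sum_i \beta^{(i)}M_i$ an isometry from $\bbR^t$ onto the subspace, so $\Omega_\calB$ is isometric to the unit ball in $\bbR^t$ and the volumetric bound immediately gives $d_1=t$. For the sparse case \eqref{eq:css}, I would decompose $\Omega_\calB$ as the union, over supports $J\subset[t]$ with $|J|=s$, of the slices in which $\beta$ is supported on $J$. On each such slice, the admissible $\beta^{(J)}\in\bbR^s$ satisfies $\norm{\beta^{(J)}}_2 \leq 1/\sigma_{s,\min}$ by the definition of $\sigma_{s,\min}$, and the map $\beta^{(J)}\mapsto X$ is Lipschitz with constant $\sigma_{s,\max}$. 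Covering the $\beta^{(J)}$-ball at scale $\rho/\sigma_{s,\max}$ then yields a $\rho$-net of the slice with at most $(3\kappa_s/\rho)^s$ centers, and a union bound over $\binom{t}{s}$ supports produces the claimed estimate with $d_1=s$.

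The low-rank case \eqref{eq:csl} carries the main technical content. I would parametrize each $X\in\Omega_\calB$ by an SVD $X=U\Sigma V^T$ with $U\in\bbR^{n_1\times r}$, $V\in\bbR^{n_2\times r}$ having orthonormal columns and $\Sigma$ diagonal with $\norm{\Sigma}_\rmF\leq 1$. Each factor has controlled covering number: $U$ and $V$ sit inside spectral-norm unit balls of $\bbR^{n_1\times r}$ and $\bbR^{n_2\times r}$, respectively, and are covered (in spectral norm) by at most $(C/\rho)^{n_1 r}$ and $(C/\rho)^{n_2 r}$ points via the volumetric argument applied in Frobenius norm; $\Sigma$ lies in the unit ball of $\bbR^r$, covered by at most $(3/\rho)^r$ points. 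The Lipschitz-type bound
\[
\norm{U\Sigma V^T - U'\Sigma' V'^T}_\rmF \leq \norm{\Sigma}_\rmF\,\norm{U-U'}_2 + \norm{\Sigma-\Sigma'}_\rmF + \norm{\Sigma'}_\rmF\,\norm{V-V'}_2,
\]
valid when $U',V$ have orthonormal columns, then shows that refining each individual net by a factor of three produces a product net of resolution $\rho$ in Frobenius norm. This yields $d_1 = r(n_1+n_2+1)$. The symmetric case \eqref{eq:csl2} is identical after replacing the SVD by an eigendecomposition $X=U\Sigma U^T$, which removes one Stiefel factor and gives $d_1 = r(n+1)$.

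The sparse low-rank variants \eqref{eq:cssl} and \eqref{eq:cssl2} combine the two ideas: first union-bound over the $\binom{n_1}{s_1}\binom{n_2}{s_2}$ choices of row and column supports, then apply the low-rank (or symmetric low-rank) covering to the resulting $s_1\times s_2$ nonzero submatrix. The main obstacle I anticipate is not conceptual but rather the bookkeeping: the exponents and multiplicative constants in Table~\ref{tab:cnB} must be tracked precisely through the Lipschitz composition and the union-bound steps, and in particular the Stiefel covering step deserves to be isolated as a short standalone lemma so that it can be reused cleanly across \eqref{eq:csl}--\eqref{eq:cssl2}.
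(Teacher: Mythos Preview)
Your overall strategy is sound, and the subspace and sparse cases go through (in fact your per-support argument for \eqref{eq:css} yields $\kappa_s$ rather than the table's $\kappa_{2s}$, which is a harmless improvement since $\kappa_s\le\kappa_{2s}$). The real discrepancy is in the low-rank cases \eqref{eq:csl}--\eqref{eq:cssl2}: your three-factor covering of $(U,\Sigma,V)$ produces exponents $d_1=(n_1+n_2+1)r$, $(n+1)r$, etc., each exceeding the values in Table~\ref{tab:cnB} by $r$. Since the proposition asserts exactly the constants in that table, and since these $d_1$ feed directly into the sample-complexity thresholds $m>d_i$ downstream, your argument as written does not prove the proposition as stated.

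The paper avoids the extra $r$ by absorbing the singular values into the outer factors rather than covering $\Sigma$ separately. Writing $X=(U\Sigma^{1/2})(V\Sigma^{1/2})^T$ and noting via Cauchy--Schwarz that $\norm{U\Sigma^{1/2}}_\rmF=\norm{V\Sigma^{1/2}}_\rmF=(\sum_i\sigma_i)^{1/2}\le r^{1/4}$, one reduces to covering two Frobenius balls of radius $r^{1/4}$ in $\bbR^{n_1\times r}$ and $\bbR^{n_2\times r}$; a simple product-Lipschitz lemma (the map $(\tilde U,\tilde V)\mapsto\tilde U\tilde V^T$ on $r^{1/4}\calB_{n_1\times r}\times r^{1/4}\calB_{n_2\times r}$) then gives $d_1=(n_1+n_2)r$ with $C_1=(6\sqrt r)^{(n_1+n_2)r}$. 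The symmetric cases \eqref{eq:csl2}--\eqref{eq:cssl2} need one additional idea you did not mention: since eigenvalues can be negative, one writes $X=(U\Lambda_+^{1/2})\Lambda_{\sgn}(U\Lambda_+^{1/2})^T$ and takes a finite union over the $r+1$ possible sign patterns of $\Lambda_{\sgn}$, which is precisely the origin of the $(r+1)$ prefactor in $C_1$. With these two modifications your outline matches the paper's proof and yields the stated constants.
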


\begin{table}[htbp]%
\renewcommand{\arraystretch}{1.3}
\begin{center}
\begin{tabular}{ >{\raggedright\arraybackslash}m{2.5in} >{\centering\arraybackslash}m{0.8in} >{\centering\arraybackslash}m{1.8in} >{\centering\arraybackslash}m{0in} }
\hline
\multicolumn{1}{c}{$\Omega_\calX$} & $d_1$ & $C_1$ &\\
\hline
\eqref{eq:cssp}: $t$-dimensional subspace  & $t$ & $3^t$ &\\
\eqref{eq:css}: $s$-sparse matrices  & $s$ & $(3\kappa_{2s})^s\cdot {t\choose s}$ & \\
\eqref{eq:csl}: rank-$r$ matrices  & $(n_1+n_2)r$ & $\left(6\sqrt{r}\right)^{(n_1+n_2)r}$ & \\
\eqref{eq:cssl}: sparse rank-$r$ matrices  & $(s_1+s_2)r$ & $\left(6\sqrt{r}\right)^{(s_1+s_2)r}\cdot{n_1 \choose s_1}{n_2 \choose s_2}$ & \\
\eqref{eq:csl2}: symmetric rank-$r$ matrices  & $nr$ & $(r+1)\left(6\sqrt{r}\right)^{nr}$ & \\
\eqref{eq:cssl2}: symmetric sparse rank-$r$ matrices  & $sr$ & $(r+1)\left(6\sqrt{r}\right)^{sr}\cdot {n\choose s}$ & \\
\hline
\end{tabular}
\end{center}
\caption{A summary of the constants in Proposition \ref{pro:cnB} .}
\label{tab:cnB}
\end{table}

\begin{proposition}\label{pro:cndB}
If $\Omega_\calX$ is \eqref{eq:cssp} -- \eqref{eq:cssl2}, then the covering number of the difference set $\Omega_{\dB}=\Omega_\calB-\Omega_\calB$, defined by \eqref{eq:setdB}, satisfies $N_{\Omega_{\dB}}(\rho) \leq C_2\left(\frac{1}{\rho}\right)^{d_2}$ for all $0<\rho<1$, where $d_2$ and $C_2$ are global constants that only depend on $n_1$, $n_2$, $s$, $r$, $s_1$, and $s_2$. The expressions for $d_2$ and $C_2$ are summarized in Table \ref{tab:cndB}.
\end{proposition}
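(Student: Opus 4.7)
The plan is to reduce the bound on $N_{\Omega_{\dB}}(\rho)$ directly to Proposition \ref{pro:cnB}. The key observation is that for each of the constraint sets \eqref{eq:cssp}--\eqref{eq:cssl2}, the difference $\Omega_{\dB} = \Omega_\calB - \Omega_\calB$ sits inside $2\widetilde{\Omega}_\calB$, where $\widetilde{\Omega}_\calX$ has exactly the same functional form as $\Omega_\calX$ but with the parsimony parameters doubled: for the subspace case $\widetilde{\Omega}_\calX = \Omega_\calX$ (subspaces are closed under subtraction); for the $s$-sparse case the difference of two $s$-sparse expansions over a common dictionary is $2s$-sparse; for rank-$r$ the difference has rank at most $2r$; for the sparse low-rank case it has rank $\leq 2r$ with row and column supports bounded by $2s_1$ and $2s_2$; for the symmetric cases, symmetry is preserved under subtraction and we simply double $r$ (and $s$). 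The Frobenius-norm bound of $2$ comes from the triangle inequality applied to elements of $\calB_{n_1\times n_2}$.

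Given these inclusions, I would apply the scaling identity $N_{2\widetilde{\Omega}_\calB}(\rho) = N_{\widetilde{\Omega}_\calB}(\rho/2)$ and invoke Proposition \ref{pro:cnB} on $\widetilde{\Omega}_\calB$, which yields
\[
N_{\Omega_{\dB}}(\rho) \;\leq\; N_{\widetilde{\Omega}_\calB}(\rho/2) \;\leq\; \widetilde{C}_1\,(2/\rho)^{\widetilde{d}_1} \;=\; \bigl(\widetilde{C}_1\cdot 2^{\widetilde{d}_1}\bigr)\left(\tfrac{1}{\rho}\right)^{\widetilde{d}_1},
\]
so that $d_2 = \widetilde{d}_1$ and $C_2 = \widetilde{C}_1 \cdot 2^{\widetilde{d}_1}$ are read directly off Table \ref{tab:cnB} with the substitutions $s\mapsto 2s$, $r\mapsto 2r$, $s_1\mapsto 2s_1$, $s_2\mapsto 2s_2$. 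For example, the subspace case gives $d_2 = t$, $C_2 = 6^t$; the $s$-sparse case gives $d_2 = 2s$, $C_2 = (6\kappa_{4s})^{2s}\binom{t}{2s}$ (note $\kappa_{2\cdot 2s}=\kappa_{4s}$, matching the assumption $\sigma_{4s,\min}>0$ already made in the paper); the rank-$r$ case gives $d_2 = 2(n_1+n_2)r$, $C_2 = (12\sqrt{2r})^{2(n_1+n_2)r}$, and analogously for the remaining rows of Table \ref{tab:cndB}.

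Essentially no obstacle is expected: the whole argument is a short set-theoretic inclusion together with a scaling of the covering radius, and Proposition \ref{pro:cnB} carries the analytic load. The only real bookkeeping issue is being explicit about why the parameter-doubling inclusions hold in the sparse low-rank and symmetric cases (where row/column supports and symmetry must be checked simultaneously), and confirming that the $\binom{n}{2s}$ (resp.\ $\binom{n_i}{2s_i}$) in the resulting $C_2$ column is well-defined, i.e., that $2s\leq t$ (resp.\ $2s_i\leq n_i$); if not, the corresponding constraint set is trivially contained in a smaller one and the bound still applies with the natural convention.
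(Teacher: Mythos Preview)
Your argument is mathematically sound and yields valid covering-number bounds with the correct exponents $d_2$, but it is \emph{not} the route the paper takes, and the constants $C_2$ you obtain do \emph{not} coincide with those in Table~\ref{tab:cndB}. The paper handles the subspace case \eqref{eq:cssp} exactly as you do, but for \eqref{eq:css}--\eqref{eq:cssl2} it invokes Lemma~\ref{lem:difference}, namely $N_{\Omega_\calB-\Omega_\calB}(\rho)\le N_{\Omega_\calB}(\rho/2)^2$, and then reads off $d_2=2d_1$, $C_2=2^{2d_1}C_1^2$ directly from Table~\ref{tab:cnB} \emph{without} doubling the structural parameters. That is why Table~\ref{tab:cndB} features $\kappa_{2s}$ and $\binom{t}{s}^2$ (not $\kappa_{4s}$ and $\binom{t}{2s}$), and $(12\sqrt{r})^{2(n_1+n_2)r}$ (not $(12\sqrt{2r})^{2(n_1+n_2)r}$). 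Your parameter-doubling idea is in fact exactly what the paper uses for Proposition~\ref{pro:cndX} on $\Omega_{\dX}$, where the intersection with the unit ball removes the factor of $2$ and one is forced to pass to the larger constraint class; compare your computed constants with Table~\ref{tab:cndX}. In short: your proof establishes a true inequality of the claimed form, but if the goal is to reproduce the specific constants recorded in Table~\ref{tab:cndB}, you should replace the inclusion $\Omega_{\dB}\subset 2\widetilde{\Omega}_\calB$ by the squaring lemma $N_{\Omega_{\dB}}(\rho)\le N_{\Omega_\calB}(\rho/2)^2$.
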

\begin{table}[htbp]%
\renewcommand{\arraystretch}{1.3}
\begin{center}
\begin{tabular}{ >{\raggedright\arraybackslash}m{2.5in} >{\centering\arraybackslash}m{0.8in} >{\centering\arraybackslash}m{1.8in} >{\centering\arraybackslash}m{0in} }
\hline
\multicolumn{1}{c}{$\Omega_\calX$} & $d_2$ & $C_2$ &\\
\hline
\eqref{eq:cssp}: $t$-dimensional subspace & $t$ & $6^t$ &\\
\eqref{eq:css}: $s$-sparse matrices & $2s$ & $(6\kappa_{2s})^{2s}\cdot {t\choose s}^2$ &\\
\eqref{eq:csl}: rank-$r$ matrices & $2(n_1+n_2)r$ & $\left(12\sqrt{r}\right)^{2(n_1+n_2)r}$ &\\
\eqref{eq:cssl}: sparse rank-$r$ matrices & $2(s_1+s_2)r$ & $\left(12\sqrt{r}\right)^{2(s_1+s_2)r}\cdot{n_1 \choose s_1}^2{n_2 \choose s_2}^2$ &\\
\eqref{eq:csl2}: symmetric rank-$r$ matrices & $2nr$ & $(r+1)^2\left(12\sqrt{r}\right)^{2nr}$ & \\
\eqref{eq:cssl2}: symmetric sparse rank-$r$ matrices & $2sr$ & $(r+1)^2\left(12\sqrt{r}\right)^{2sr}\cdot {n\choose s}^2$ & \\
\hline
\end{tabular}
\end{center}
\caption{A summary of the constants in Proposition \ref{pro:cndB} .}
\label{tab:cndB}
\end{table}

\begin{proposition}\label{pro:cndX}
If $\Omega_\calX$ is \eqref{eq:cssp} -- \eqref{eq:cssl2}, then the covering number of the difference set $\Omega_{\dX}=\Omega_\calX-\Omega_\calX$, defined by \eqref{eq:setdX}, satisfies $N_{\Omega_{\dX}}(\rho) \leq C_3\left(\frac{1}{\rho}\right)^{d_3}$ for all $0<\rho<1$, where $d_3$ and $C_3$ are global constants that only depend on $n_1$, $n_2$, $s$, $r$, $s_1$, and $s_2$. The expressions for $d_3$ and $C_3$ are summarized in Table \ref{tab:cndX}.
\end{proposition}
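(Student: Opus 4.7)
The plan is to observe that, for each of the constraint sets \eqref{eq:cssp}--\eqref{eq:cssl2}, the Minkowski difference $\Omega_\calX-\Omega_\calX$ is itself contained in a constraint set of the \emph{same form}, but with all structural parameters doubled. Granting this containment, $\Omega_{\dX}=(\Omega_\calX-\Omega_\calX)\cap\calB_{n_1\times n_2}$ is contained in the analogue of $\Omega_\calB$ for a ``doubled'' constraint set, so Proposition~\ref{pro:cnB} applied with doubled parameters immediately delivers a bound of the desired form $N_{\Omega_{\dX}}(\rho)\leq C_3(1/\rho)^{d_3}$.

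The containments themselves are elementary and proceed case by case. For \eqref{eq:cssp}, $\Omega_\calX$ is a subspace, so $\Omega_\calX-\Omega_\calX=\Omega_\calX$ and nothing needs doubling. For \eqref{eq:css}, the difference of two $s$-sparse combinations over the dictionary $\{M_i\}_{i=1}^t$ uses at most $2s$ atoms of the same dictionary, so $\Omega_\calX-\Omega_\calX$ lies in the $2s$-sparse constraint set. For \eqref{eq:csl}, subadditivity of rank gives $\rank(X_1-X_2)\leq 2r$. For \eqref{eq:cssl}, rank, row-support, and column-support are each subadditive, yielding simultaneously $\rank\leq 2r$, row-sparsity $\leq 2s_1$, and column-sparsity $\leq 2s_2$. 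For \eqref{eq:csl2} and \eqref{eq:cssl2}, symmetry is preserved under subtraction, so the same subadditivity arguments apply on top of the symmetric ambient space.

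With the containments in hand, I apply Proposition~\ref{pro:cnB} to the doubled constraint set (i.e., substitute $2s$ for $s$, $2r$ for $r$, $(2s_1,2s_2)$ for $(s_1,s_2)$, and analogously for the symmetric variants), and read off $d_3$ and $C_3$ from the corresponding row of Table~\ref{tab:cnB} after these substitutions. The remaining work is bookkeeping: tracking the binomial coefficients and the polynomial prefactors that appear in the symmetric and sparse cases.

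The only subtle point worth flagging is the $s$-sparse case: Proposition~\ref{pro:cnB} applied with sparsity $2s$ produces the factor $\kappa_{2\cdot 2s}=\kappa_{4s}$, which is precisely why the blanket assumption $\sigma_{4s,\min}>0$ was stipulated at the end of Section~\ref{sec:bdstat}; without it the constant $C_3$ would not be finite. Beyond this, I do not anticipate a genuine technical obstacle, since the entire proof is a direct reduction to an already-proved statement.
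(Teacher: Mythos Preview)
Your proposal is correct and follows essentially the same approach as the paper: the paper's proof of Proposition~\ref{pro:cndX} also observes that $\Omega_\calX-\Omega_\calX$ lies in a constraint set of the same form with doubled parameters (and that $\Omega_\calX-\Omega_\calX=\Omega_\calX$ in the subspace case), then invokes Proposition~\ref{pro:cnB} with $2s,2r,2s_1,2s_2$ in place of $s,r,s_1,s_2$. Your remark about $\kappa_{4s}$ is exactly the reason the standing assumption $\sigma_{4s,\min}>0$ appears in Section~\ref{sec:bdstat}.
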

\begin{table}[htbp]%
\renewcommand{\arraystretch}{1.3}
\begin{center}
\begin{tabular}{ >{\raggedright\arraybackslash}m{2.5in} >{\centering\arraybackslash}m{0.8in} >{\centering\arraybackslash}m{1.8in} >{\centering\arraybackslash}m{0in} }
\hline
\multicolumn{1}{c}{$\Omega_\calX$} & $d_3$ & $C_3$ &\\
\hline
\eqref{eq:cssp}: $t$-dimensional subspace & $t$ & $3^t$ &\\
\eqref{eq:css}: $s$-sparse matrices & $2s$ & $(3\kappa_{4s})^{2s}\cdot {t\choose 2s}$ &\\
\eqref{eq:csl}: rank-$r$ matrices & $2(n_1+n_2)r$ & $\left(6\sqrt{2r}\right)^{2(n_1+n_2)r}$ &\\
\eqref{eq:cssl}: sparse rank-$r$ matrices & $4(s_1+s_2)r$ & $\left(6\sqrt{2r}\right)^{4(s_1+s_2)r}\cdot{n_1 \choose 2s_1}{n_2 \choose 2s_2}$ &\\
\eqref{eq:csl2}: symmetric rank-$r$ matrices & $2nr$ & $(2r+1)\left(6\sqrt{2r}\right)^{2nr}$ & \\
\eqref{eq:cssl2}: symmetric sparse rank-$r$ matrices & $4sr$ & $(2r+1)\left(6\sqrt{2r}\right)^{4sr}\cdot {n\choose 2s}$ & \\
\hline
\end{tabular}
\end{center}
\caption{A summary of the constants in Proposition \ref{pro:cndX} .}
\label{tab:cndX}
\end{table}

\subsection{Alternative Bounds Using Minkowski Dimensions}\label{sec:cn_alt}
Given the bounds on the covering numbers in Proposition \ref{pro:cnB}, the upper Minkowski dimensions of the three constraint sets (sparse matrices, low-rank matrices, and sparse low-rank matrices), are bounded by $s$, $(n_1+n_2)r$, and $(s_1+s_2)r$, respectively. On the other hand, if we are given a bound on the upper Minkowski dimension of a set, we can bound its covering number.
\begin{proposition}\label{pro:cnmd}
If $\overline{\dim}_\mathrm{B}(\Omega)\leq d$, then there exists $\rho_0>0$, such that
\[
N_{\Omega}(\rho) \leq \left(\frac{1}{\rho}\right)^{d+1},\quad \forall~ 0<\rho < \rho_0.
\]
\end{proposition}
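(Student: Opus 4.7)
The plan is to apply the definition of the upper Minkowski dimension essentially verbatim, with the gap parameter chosen so that $d+\varepsilon = d+1$. Recall that
\[
\overline{\dim}_\mathrm{B}(\Omega) = \underset{\rho\to 0}{\lim\sup}\,\frac{\log N_\Omega(\rho)}{\log\frac{1}{\rho}} = \lim_{\delta \to 0}\,\underset{0<\rho<\delta}{\sup}\,\frac{\log N_\Omega(\rho)}{\log\frac{1}{\rho}}.
\]
The assumption $\overline{\dim}_\mathrm{B}(\Omega)\le d$ together with the standard characterization of $\limsup$ says: for every $\varepsilon>0$, there exists $\rho_\varepsilon>0$ such that $\log N_\Omega(\rho)/\log(1/\rho) < d+\varepsilon$ for all $0<\rho<\rho_\varepsilon$.

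First I would specialize this to $\varepsilon=1$, producing some threshold $\rho_1>0$ such that $\log N_\Omega(\rho)/\log(1/\rho) < d+1$ for every $0<\rho<\rho_1$. Next I would set $\rho_0 \defeq \min\{\rho_1,1\}$; the reason for capping at $1$ is that the inequality has to be rearranged by multiplying through by $\log(1/\rho)$, which is only positive when $\rho<1$. For $0<\rho<\rho_0$, this rearrangement gives
\[
\log N_\Omega(\rho) < (d+1)\log\tfrac{1}{\rho},
\]
and exponentiating yields $N_\Omega(\rho) < (1/\rho)^{d+1}$, which is slightly stronger than the claimed bound.

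There is no real obstacle here; the statement is essentially just translating the $\limsup$ definition into a one-sided tail bound on the covering number, with the tolerance $+1$ absorbed into the exponent. The only subtlety worth flagging in the write-up is the domain restriction $\rho<1$ (needed so that $\log(1/\rho)>0$), which explains why the proposition asserts existence of a threshold $\rho_0$ rather than a bound valid on the whole range $0<\rho<1$ as in Propositions \ref{pro:cnB}--\ref{pro:cndX}.
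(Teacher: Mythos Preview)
Your proposal is correct and follows exactly the same route as the paper's proof: both simply invoke the definition of $\limsup$ with slack $1$ to obtain a threshold $\rho_0$ below which $\log N_\Omega(\rho)/\log(1/\rho)\le d+1$, and then rearrange. Your write-up is actually slightly more careful than the paper's, in that you explicitly enforce $\rho_0\le 1$ so that $\log(1/\rho)>0$ before multiplying through; the paper's version leaves this implicit.
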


Combining Proposition \ref{pro:cnmd} with bounds on the Minkowski dimensions of the sets $\Omega_\calB$, $\Omega_{\dB}$, and $\Omega_{\dX}$ derived in Appendix \ref{app:cn}, we have the following alternative bounds for the covering numbers.
\begin{corollary}\label{cor:cn_alt}
If $\Omega_\calX$ is the set of low-rank matrices \eqref{eq:csl} or sparse low-rank matrices \eqref{eq:cssl}, then the covering numbers of $\Omega_\calB$, $\Omega_{\dB}$, and $\Omega_{\dX}$ satisfy:
\begin{enumerate}
	\item There exists $\rho_1>0$, such that $N_{\Omega_\calB}(\rho) \leq \left(\frac{1}{\rho}\right)^{d_1}$ for all $0<\rho<\rho_1$.
	\item There exists $\rho_2>0$, such that $N_{\Omega_{\dB}}(\rho) \leq \left(\frac{1}{\rho}\right)^{d_2}$ for all $0<\rho<\rho_2$.
	\item There exists $\rho_3>0$, such that $N_{\Omega_{\dX}}(\rho) \leq \left(\frac{1}{\rho}\right)^{d_3}$ for all $0<\rho<\rho_3$.
\end{enumerate}
The expressions for $d_1$, $d_2$, and $d_3$ are summarized in Table \ref{tab:cn_alt}.
\end{corollary}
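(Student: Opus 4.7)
The plan is to derive Corollary \ref{cor:cn_alt} as a direct combination of Proposition \ref{pro:cnmd} with upper bounds on the upper Minkowski dimensions of the three sets $\Omega_\calB$, $\Omega_\dB$, and $\Omega_\dX$ in the low-rank and sparse low-rank cases. The overall strategy is simply ``bound the Minkowski dimensions, then quote the proposition.''

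First I would establish (this part belongs to Appendix \ref{app:cn}) upper bounds on $\overline{\dim}_\mathrm{B}(\Omega_\calB)$, $\overline{\dim}_\mathrm{B}(\Omega_\dB)$, and $\overline{\dim}_\mathrm{B}(\Omega_\dX)$ for both constraint sets. For the rank-$r$ set \eqref{eq:csl}, I would parametrize rank-$r$ matrices by their thin SVD $U\Sigma V^T$, which exhibits the rank-$r$ variety as the image of a smooth manifold of dimension $r(n_1+n_2-r)$ once the gauge redundancy is quotiented out; intersecting with $\calB_{n_1\times n_2}$ leaves this as an upper Minkowski-dimension bound. Because the difference of two rank-$r$ matrices has rank at most $2r$, both $\Omega_\dB$ and $\Omega_\dX$ lie inside the rank-$2r$ variety and inherit an upper Minkowski-dimension bound of $2r(n_1+n_2-2r)$. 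For the sparse low-rank set \eqref{eq:cssl}, I would stratify by support pattern: for each pair of row/column supports $(J_1,J_2)$ of sizes $(s_1,s_2)$, the stratum embeds into the $s_1\times s_2$ low-rank set; the full set (resp.\ difference set) is a \emph{finite} union of such strata, with support sizes up to $(s_1,s_2)$ (resp.\ $(2s_1,2s_2)$). Since a finite union does not increase the upper Minkowski dimension, the same SVD-based bounds transfer with $n_1,n_2$ replaced by $s_1,s_2$ or $2s_1,2s_2$ as appropriate.

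With each Minkowski-dimension bound $d$ in hand, Proposition \ref{pro:cnmd} supplies $\rho_i>0$ such that $N(\rho)\leq (1/\rho)^{d+1}$ for all $0<\rho<\rho_i$. Setting the exponent $d_i$ in Corollary \ref{cor:cn_alt} equal to $d+1$ (so that the values in Table \ref{tab:cn_alt} are precisely the tabulated Minkowski-dimension bounds plus one) yields each of the three claims.

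The main obstacle is not the corollary itself, which is a one-line deduction, but the supporting Minkowski-dimension computations that live in the appendix: in particular, verifying that the rank-$2r$ variety is indeed the correct container for the difference sets (one has to check there is no dimension loss when one takes differences, but this is automatic since the inclusion in the rank-$2r$ variety is the only upper bound needed), and that the sparse stratification contributes only a multiplicative constant which is inert for the dimension. Once those are in place, Proposition \ref{pro:cnmd} finishes the argument mechanically.
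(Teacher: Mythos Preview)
Your overall strategy---bound the upper Minkowski dimensions and then invoke Proposition~\ref{pro:cnmd}---matches the paper exactly, and your treatment of $\Omega_\calB$ and $\Omega_{\dX}$ is essentially the paper's (Lemma~\ref{lem:md} plus the observation that differences land in the doubled-parameter model). However, your handling of $\Omega_{\dB}$ diverges from the paper and creates a genuine gap in the sparse low-rank case~\eqref{eq:cssl}.

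You bound $\overline{\dim}_\mathrm{B}(\Omega_{\dB})$ by embedding $\Omega_{\dB}$ into the rank-$2r$ variety with doubled row/column supports, which yields $(2s_1+2s_2-2r)(2r)=4r(s_1+s_2-r)$ and hence $d_2=4(s_1+s_2-r)r+1$. But Table~\ref{tab:cn_alt} claims $d_2=2(s_1+s_2-r)r+1$, which is strictly smaller; a larger exponent does not imply the smaller one, so your argument does not establish the tabulated $d_2$. The paper instead bounds $\Omega_{\dB}$ via the product-covering inequality $N_{\Omega_{\dB}}(\rho)\leq N_{\Omega_\calB}(\rho/2)^2$ (Lemma~\ref{lem:difference}), giving $\overline{\dim}_\mathrm{B}(\Omega_{\dB})\leq 2\,\overline{\dim}_\mathrm{B}(\Omega_\calB)\leq 2(s_1+s_2-r)r$ directly, without ever enlarging the support sizes. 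Interestingly, in the plain low-rank case~\eqref{eq:csl} your rank-$2r$ embedding gives $2r(n_1+n_2-2r)$, which is \emph{tighter} than the paper's $2r(n_1+n_2-r)$ and therefore still implies the tabulated $d_2$ there; the two routes trade off depending on whether doubling the support or doubling the dimension count is more expensive. To recover the stated $d_2$ in the sparse case you should switch to the doubling lemma for $\Omega_{\dB}$ rather than re-embedding into a larger model.
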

\begin{table}[htbp]%
\renewcommand{\arraystretch}{1.3}
\begin{center}
\begin{tabular}{ >{\raggedright\arraybackslash}m{1.8in} >{\centering\arraybackslash}m{1.3in} >{\centering\arraybackslash}m{1.3in} >{\centering\arraybackslash}m{1.3in} >{\centering\arraybackslash}m{0in} }
\hline
\multicolumn{1}{c}{$\Omega_\calX$} & $d_1$ & $d_2$ & $d_3$ &\\
\hline
\eqref{eq:csl}: rank-$r$ matrices & $(n_1+n_2-r)r+1$ & $2(n_1+n_2-r)r+1$ & $2(n_1+n_2-2r)r+1$ &\\
\eqref{eq:cssl}: sparse rank-$r$ matrices & $(s_1+s_2-r)r+1$ & $2(s_1+s_2-r)r+1$ & $4(s_1+s_2-r)r+1$ &\\
\hline
\end{tabular}
\end{center}
\caption{A summary of the constants in Corollary \ref{cor:cn_alt} .}
\label{tab:cn_alt}
\end{table}

The bounds on the covering numbers of the sets of low-rank (or sparse low-rank) matrices in Corollary \ref{cor:cn_alt} are sharper than those in Propositions \ref{pro:cnB} -- \ref{pro:cndX}, the proofs of which are simplified by relaxation. However, the bounds in Corollary \ref{cor:cn_alt} hold only for sufficiently small $\rho$, whereas the bounds in Propositions \ref{pro:cnB} -- \ref{pro:cndX} hold for any $\rho>0$. In general, bounding the covering number via Minkowski dimension is unnecessary, if one can directly obtain a covering number bound whose exponent matches the number of degrees of freedom (e.g., the bounds for a subspace or a set of sparse matrices in Propositions \ref{pro:cnB} -- \ref{pro:cndX}).

\section{Main Results}\label{sec:main}


\subsection{Unstructured Measurement Matrices}\label{sec:unstructured}

Riegler et al. \cite{Riegler2015} showed that the matrix recovery problem has a unique solution if the number $m$ of linear measurements is greater than the lower Minkowski dimension of the constraint set, which implies, for example, that $m>(n_1+n_2)r$ is sufficient to guarantee the uniqueness of the solution when the constraint set is defined by \eqref{eq:csl}. In this section, we show that for random measurement matrices that follow certain distributions, the same sample complexity can also guarantee stability, with high probability. 

The stability results in Theorem \ref{thm:stability} hold under assumptions (A1) and (A2), on the constraint set and the measurement matrices, respectively:
\begin{enumerate}
	\item[(A1)] The constraint $\Omega_\calX$ satisfies that the covering numbers $N_1(\rho)$, $N_2(\rho)$, $N_3(\rho)$ of sets $\Omega_\calB$, $\Omega_{\dB}$ and $\Omega_{\dX}$ (defined by \eqref{eq:setB}, \eqref{eq:setdB}, and \eqref{eq:setdX}, respectively) are bounded by
\begin{align}
N_i(\rho) \leq C_i\left(\frac{1}{\rho}\right)^{d_i},\qquad \forall \rho < \rho_i,~ i = 1,2,3,  \label{eq:cn_bound}
\end{align}
where $\rho_i>0$, and $C_i$ is independent of $\rho$.
	\item[(A2)] The measurement matrices $\{A_j\}_{j=1}^m\subset \bbR^{n_1\times n_2}$ are i.i.d. random matrices following a distribution $\rmD$ that satisfies the following concentration of measure bounds ($\varepsilon,\delta>0$):
\begin{align}
&\bbP_\rmD\left[\norm{A}_\rmF \leq R,~\left|\left<A,X\right>\right|\leq \delta \right] \leq C_{\rmD,R}\cdot\frac{\delta}{\varepsilon},\qquad \forall X~~\text{s.t.}~\norm{X}_\rmF \geq \varepsilon, \label{eq:concentration1} \\
&\bbP_{\rmD} [\norm{A}_\rmF > R] = \theta_{\rmD,R}. \label{eq:def_theta1}
\end{align}
The constant $C_{\rmD,R}$ depends on distribution $\rmD$ and radius $R$, but not on $\varepsilon,\delta$.
\end{enumerate}

\begin{theorem}\label{thm:stability}
Suppose the constraint set and the measurement matrices satisfy assumptions (A1) and (A2), respectively. If $m>d_i$, $\varepsilon<1$, and $\delta < R\rho_i$, then with probability $1-P_i$, where
\begin{equation}
P_i \leq C_i\left(3C_{\rmD,R}\right)^m\cdot R^{d_i} \cdot\frac{\delta^{m-d_i}}{\varepsilon^m} + m\cdot \theta_{\rmD,R},  \label{eq:prob_bd}
\end{equation}
we have the following:
\begin{enumerate}
	\item for $i=1$, single point stability on bounded constraint set $\Omega_\calB$. 
	\item for $i=2$, uniform stability on bounded constraint set $\Omega_\calB$.
	\item for $i=3$, uniform stability on unbounded constraint set $\Omega_\calX$. 
\end{enumerate}
In all three cases, the norm $\norm{\cdot}_\calX$ in which the recovery error of the matrix is measured in the definition of stability, is the \emph{Frobenius norm} $\norm{\cdot}_\rmF$.
\end{theorem}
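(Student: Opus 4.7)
The plan is a net argument combined with pointwise concentration from (A2) and a truncation via the tail bound \eqref{eq:def_theta1}. All three cases follow the same template; they differ only in which set is covered, which is dictated by the form of the failure event. For case~1 we center at $X_0$ and cover $\Omega_\calB$; for case~2 any violation of uniform stability produces a difference $X_1-X_2\in\Omega_{\dB}$ of norm $>\varepsilon$, so we cover $\Omega_{\dB}$; for case~3 we use the cone property of $\Omega_\calX$ and the homogeneity of $\calA$ to rescale any nonzero $W\in\Omega_\calX-\Omega_\calX$ to $W/\norm{W}_\rmF\in\Omega_{\dX}$, reducing the unbounded problem to a lower bound on the ``unit-sphere'' portion of $\Omega_\calX-\Omega_\calX$, which in turn is controlled via a cover of $\Omega_{\dX}$. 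In every case the desired stability is equivalent to the event ``$\norm{\calA(W)}_2>\delta$ for every $W$ in the relevant set with $\norm{W}_\rmF\geq\varepsilon$.''

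The workhorse is a pointwise bound. Let $\mathcal{E}_R=\{\exists j:\norm{A_j}_\rmF>R\}$; by \eqref{eq:def_theta1} and the union bound, $\bbP[\mathcal{E}_R]\leq m\,\theta_{\rmD,R}$, producing the second summand of \eqref{eq:prob_bd}. For a fixed $W$ with $\norm{W}_\rmF\geq\varepsilon'$, $\norm{\calA(W)}_2\leq\delta'$ forces $|\langle A_j,W\rangle|\leq\delta'$ for every $j$, so independence and \eqref{eq:concentration1} applied coordinatewise give
\begin{equation*}
\bbP\bigl[\norm{\calA(W)}_2\leq\delta',\,\mathcal{E}_R^c\bigr]\leq\bigl(C_{\rmD,R}\,\delta'/\varepsilon'\bigr)^m.
\end{equation*}
By (A1), the relevant set admits a $\rho$-net $\{W_k\}$ of cardinality at most $C_i\rho^{-d_i}$ for $\rho<\rho_i$. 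On $\mathcal{E}_R^c$, Cauchy--Schwarz gives $|\langle A_j,W-W_k\rangle|\leq R\rho$, so any $W$ violating the stability bound has a net neighbour $W_k$ with $\norm{W_k}_\rmF\geq\varepsilon-\rho$ and $|\langle A_j,W_k\rangle|\leq\delta+R\rho$ for all $j$. Applying the pointwise bound with $\delta'=\delta+R\rho$, $\varepsilon'=\varepsilon-\rho$, and union-bounding over the net yields $C_i\rho^{-d_i}\bigl(C_{\rmD,R}(\delta+R\rho)/(\varepsilon-\rho)\bigr)^m$.

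Finally, $\rho$ is chosen to balance inflation of the numerator against deflation of the denominator. The hypothesis $\delta<R\rho_i$ allows $\rho$ of order $\delta/R$ while staying below $\rho_i$; in the non-trivial regime $\delta\ll R\varepsilon$ the same $\rho$ is also small compared to $\varepsilon$, making both $(\delta+R\rho)/\delta$ and $\varepsilon/(\varepsilon-\rho)$ bounded by constants whose product is at most $3$. This compresses the net ratio to $3C_{\rmD,R}\delta/\varepsilon$ and the covering count to $(R/\delta)^{d_i}$ (up to a constant absorbed in $C_i$), yielding the first summand of \eqref{eq:prob_bd}. The principal obstacle is precisely this joint balancing — one cannot shrink $\rho$ without inflating $\rho^{-d_i}$, and one cannot take $\rho$ too large without ruining the concentration step — and, in case~3, one has to additionally verify that the homogeneity rescaling is compatible with using \eqref{eq:concentration1} at the boundary $\norm{W}_\rmF\approx\varepsilon$.
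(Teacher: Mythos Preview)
Your approach is essentially the same as the paper's: truncate via $\mathcal{E}_R$, cover the relevant set at scale proportional to $\delta/R$, pass from a bad $W$ to its net neighbour, and apply the pointwise concentration bound \eqref{eq:concentration1} together with independence. The reductions for cases~2 and~3 also match the paper's.

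The one place where you diverge, and where your write-up is slightly loose, is in how the net points are chosen. You take an arbitrary $\rho$-net, so you can only guarantee $\norm{W_k}_\rmF\geq\varepsilon-\rho$, and you then appeal to a ``non-trivial regime $\delta\ll R\varepsilon$'' to argue that $\rho\approx\delta/R$ is small compared to $\varepsilon$. That does not recover the stated bound for \emph{all} $\varepsilon<1$ and $\delta<R\rho_i$: nothing in the hypotheses excludes $\delta/R$ comparable to (or larger than) $\varepsilon$, in which case your denominator $\varepsilon-\rho$ collapses, and it is not automatic that the first summand of \eqref{eq:prob_bd} then exceeds $1$. The paper sidesteps this entirely with a standard trick: starting from a \emph{minimal} $\rho$-cover with $\rho=\delta/R$, it replaces each centre by a point $X_i'\in\Omega_\varepsilon$ lying in that ball (such a point exists by minimality), producing a $2\rho$-cover whose centres satisfy $\norm{X_i'}_\rmF>\varepsilon$ \emph{exactly}. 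Then $|\langle A_j,X_i'\rangle|\leq\delta+2R\rho=3\delta$, and \eqref{eq:concentration1} applies with the original $\varepsilon$ in the denominator, yielding precisely $(3C_{\rmD,R}\delta/\varepsilon)^m$ with no side conditions and no constants to ``absorb into $C_i$.'' Adopting this recentering step closes the gap in your argument and gives the exact constant $3$.
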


Since $\theta_{\rmD,R}$ is non-negative and non-increasing in $R$, it converges to its infimum $0$ as $R$ approaches infinity. Therefore, one can always choose a sufficiently large $R$ such that $\theta_{\rmD,R} < \frac{1}{m}$. To make sure that the probability in \eqref{eq:prob_bd} is non-trivial, let
\begin{align}
\varepsilon=\varepsilon(\delta) > 3C_{\rmD,R} \cdot\left(\frac{C_i}{1-m\cdot \theta_{\rmD,R}}\right)^\frac{1}{m}\cdot R^\frac{d_i}{m}\cdot \delta^{1-\frac{d_i}{m}}. \label{eq:vanish}
\end{align}
Fixing $R$, the right hand side of \eqref{eq:vanish} is a function of $\delta$ that vanishes as $\delta$ approaches $0$, which meets the definition of stability.

Next, we specialize Theorem \ref{thm:stability} for the cases of uniform distribution and of i.i.d. Gaussian distribution. The proof of Theorem \ref{thm:stability} follows the same steps as \cite[Proposition 1]{Stotz2013}. However, we refine the argument by covering the relevant set with balls of a different radius, so that we can show stability with the same number of measurements. We also generalize the results to cover measurement models following other distributions, e.g., Gaussian distribution in Corollary \ref{cor:gaussian}. For uniformly distributed measurements, our guarantee Corollary \ref{cor:uniform} and Lemma \ref{lem:concentration1} has improved constants compared with previous results \cite[Lemma 3]{Stotz2013}.
\begin{corollary}\label{cor:uniform}
Suppose the constraint set satisfies assumption (A1), and the measurement matrices $\{A_j\}_{j=1}^m$ are i.i.d. random matrices following distribution $\rmU$ -- uniform distribution on the ball $R\calB_{n_1n_2}$. Then the stability results in Theorem \ref{thm:stability} hold, except for a small probability:
\begin{align*}
P_i \leq C_i\left(\frac{6\cdot V_{n_1n_2-1}}{V_{n_1n_2}}\right)^m\left(\frac{\delta}{R}\right)^{m-d_i} \left(\frac{1}{\varepsilon}\right)^m. 
\end{align*}
\end{corollary}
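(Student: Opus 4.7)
The plan is to verify assumption (A2) for the uniform distribution $\rmU$ on $R\calB_{n_1n_2}$ with explicit constants, then plug these into the probability bound \eqref{eq:prob_bd} of Theorem \ref{thm:stability}.

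First, since $\rmU$ is supported on $R\calB_{n_1n_2}$, the event $\{\norm{A}_\rmF>R\}$ has probability zero, so $\theta_{\rmU,R}=0$ in \eqref{eq:def_theta1}. This immediately kills the additive term $m\cdot\theta_{\rmD,R}$ in \eqref{eq:prob_bd}.

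Next I would compute the concentration constant $C_{\rmU,R}$ in \eqref{eq:concentration1}. Fix $X$ with $\norm{X}_\rmF \geq \varepsilon$ and let $u = \vect(X)/\norm{X}_\rmF \in \bbR^{n_1n_2}$ be the unit vector in its direction (after vectorizing). By rotational symmetry of the uniform distribution on the ball, the marginal density of $Z := \langle \vect(A),u\rangle$ (which equals $\langle A,X\rangle/\norm{X}_\rmF$) is obtained by slicing the ball along hyperplanes perpendicular to $u$; a standard computation gives
\[
p_Z(z) = \frac{V_{n_1n_2-1}}{V_{n_1n_2}\,R^{n_1n_2}}\,(R^2-z^2)^{(n_1n_2-1)/2},\qquad |z|\leq R.
\]
This density is maximized at $z=0$, with value $p_Z(0) = V_{n_1n_2-1}/(V_{n_1n_2}\,R)$. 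Therefore
\[
\bbP_\rmU\!\left[\,|\langle A,X\rangle|\leq \delta\,\right]
\;=\;\bbP_\rmU\!\left[\,|Z|\leq \delta/\norm{X}_\rmF\,\right]
\;\leq\;2\cdot\frac{\delta}{\norm{X}_\rmF}\cdot\frac{V_{n_1n_2-1}}{V_{n_1n_2}\,R}
\;\leq\;\frac{2V_{n_1n_2-1}}{V_{n_1n_2}\,R}\cdot\frac{\delta}{\varepsilon},
\]
so that \eqref{eq:concentration1} is satisfied with $C_{\rmU,R} = 2V_{n_1n_2-1}/(V_{n_1n_2}\,R)$. (This is in fact the content of Lemma \ref{lem:concentration1}.)

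Finally, I would substitute $\theta_{\rmU,R}=0$ and $C_{\rmU,R} = 2V_{n_1n_2-1}/(V_{n_1n_2}\,R)$ into \eqref{eq:prob_bd}. The factor $(3C_{\rmU,R})^m\cdot R^{d_i}$ becomes $(6V_{n_1n_2-1}/V_{n_1n_2})^m R^{d_i - m}$, and combining with $\delta^{m-d_i}/\varepsilon^m$ yields exactly the claimed expression $C_i\bigl(6V_{n_1n_2-1}/V_{n_1n_2}\bigr)^m (\delta/R)^{m-d_i}(1/\varepsilon)^m$. No obstacle of substance arises here; the only thing requiring care is the geometric computation of the marginal density and tracking the factor of $R$ through the substitution into \eqref{eq:prob_bd}.
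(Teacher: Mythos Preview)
Your proposal is correct and follows essentially the same route as the paper: you verify (A2) for the uniform distribution by noting $\theta_{\rmU,R}=0$ and computing $C_{\rmU,R}=2V_{n_1n_2-1}/(V_{n_1n_2}R)$ via rotational symmetry (exactly the content of Lemma~\ref{lem:concentration1}), then substitute into \eqref{eq:prob_bd}. The only cosmetic difference is that you write down the explicit marginal density and bound it by its value at the mode, whereas the paper's proof of Lemma~\ref{lem:concentration1} bounds the slab integral directly without naming the density; the arithmetic and the resulting constants are identical.
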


\begin{corollary}\label{cor:gaussian}
Suppose the constraint set satisfies assumption (A1), and the measurement matrices $\{A_j\}_{j=1}^m$ are i.i.d. random matrices following distribution $\rmG$ -- the entries of the measurement matrices are i.i.d. Gaussian random variables $N(0,\sigma^2)$. Then the stability results in Theorem \ref{thm:stability} hold, except for a small probability:
\begin{align*}
P_i \leq C_i\left(\frac{3\sqrt{2}}{\sqrt{\pi}\sigma}\right)^m\cdot R^{d_i}\cdot \frac{\delta^{m-d_i}}{\varepsilon^m} + m\cdot e^{-\frac{n_1n_2}{2}\left(\frac{R^2}{n_1n_2\sigma^2}-1-\ln \frac{R^2}{n_1n_2\sigma^2} \right)}, \qquad \forall R>\sqrt{n_1n_2}\sigma. 
\end{align*}
\end{corollary}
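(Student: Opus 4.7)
The plan is to specialize Theorem \ref{thm:stability} by verifying that the i.i.d. Gaussian distribution $\rmG$ satisfies assumption (A2) with the explicit constants that appear in the stated bound, namely $C_{\rmG,R}=\frac{\sqrt{2}}{\sqrt{\pi}\sigma}$ (independent of $R$) and $\theta_{\rmG,R}\leq \exp\!\left(-\tfrac{n_1n_2}{2}\bigl(\tfrac{R^2}{n_1n_2\sigma^2}-1-\ln\tfrac{R^2}{n_1n_2\sigma^2}\bigr)\right)$ for $R>\sqrt{n_1n_2}\,\sigma$. Once these two quantities are identified, the claimed bound on $P_i$ is just the instantiation of \eqref{eq:prob_bd}.

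First I would verify the anti-concentration inequality \eqref{eq:concentration1}. If $A$ has i.i.d.\ $N(0,\sigma^2)$ entries, then for a fixed nonzero $X$ the inner product $\langle A,X\rangle$ is $N(0,\sigma^2\|X\|_\rmF^2)$, so its density is uniformly bounded by $\frac{1}{\sqrt{2\pi}\,\sigma\|X\|_\rmF}$. Hence, dropping the extra restriction $\|A\|_\rmF\leq R$ (which can only decrease the probability),
\begin{equation*}
\bbP_\rmG\bigl[\norm{A}_\rmF\leq R,\;|\langle A,X\rangle|\leq\delta\bigr]\leq \bbP_\rmG\bigl[|\langle A,X\rangle|\leq\delta\bigr]\leq \frac{2\delta}{\sqrt{2\pi}\,\sigma\|X\|_\rmF}\leq \frac{\sqrt{2}}{\sqrt{\pi}\,\sigma}\cdot\frac{\delta}{\varepsilon}
\end{equation*}
whenever $\|X\|_\rmF\geq\varepsilon$. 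This yields $C_{\rmG,R}=\frac{\sqrt{2}}{\sqrt{\pi}\sigma}$, which matches the $3C_{\rmG,R}=\frac{3\sqrt{2}}{\sqrt{\pi}\sigma}$ appearing inside the main term of the bound.

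Next I would bound $\theta_{\rmG,R}=\bbP_\rmG[\|A\|_\rmF>R]$. Since $\|A\|_\rmF^2/\sigma^2$ is a $\chi^2$ random variable with $k\defeq n_1n_2$ degrees of freedom, this is a Chernoff-type tail bound for chi-square. Using the moment generating function $\mathbb{E}[e^{\lambda\chi^2_k}]=(1-2\lambda)^{-k/2}$ for $\lambda<\tfrac12$ and optimizing over $\lambda$ gives the standard bound $\bbP[\chi^2_k\geq kt]\leq \exp\!\left(-\tfrac{k}{2}(t-1-\ln t)\right)$ valid for $t>1$. Setting $t=R^2/(n_1n_2\sigma^2)$, which requires $R>\sqrt{n_1n_2}\,\sigma$, produces exactly the second term in the stated bound on $P_i$.

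Finally, plugging $C_{\rmG,R}$ and the chi-square bound for $\theta_{\rmG,R}$ into \eqref{eq:prob_bd} yields the displayed bound. The only genuinely non-routine step is the chi-square Chernoff calculation; the anti-concentration piece is immediate from the bounded Gaussian density, and everything else is substitution into Theorem \ref{thm:stability}. I do not foresee any real obstacle — the main care is to note that $C_{\rmG,R}$ turns out to be $R$-independent (because the Gaussian density bound does not use the event $\{\|A\|_\rmF\leq R\}$), so $R$ remains a free parameter that the user can tune to drive $m\cdot\theta_{\rmG,R}$ below one in the analogue of \eqref{eq:vanish}.
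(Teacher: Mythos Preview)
Your proposal is correct and follows essentially the same approach as the paper: specialize Theorem~\ref{thm:stability} by checking (A2) for the Gaussian distribution, with $C_{\rmG,R}=\frac{\sqrt{2}}{\sqrt{\pi}\sigma}$ obtained from the bounded Gaussian density of $\langle A,X\rangle$ (the paper phrases this via isotropy in Lemma~\ref{lem:concentration1_gaussian}, which amounts to the same computation) and $\theta_{\rmG,R}$ bounded via the standard Chernoff bound for $\chi^2_{n_1n_2}$.
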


Combining the above results with the bounds on covering numbers in Section \ref{sec:cnmd}, we have Corollary \ref{cor:mr}.
\begin{corollary}\label{cor:mr} 
The stability results in Theorem \ref{thm:stability}, Corollary \ref{cor:uniform}, and Corollary \ref{cor:gaussian} hold
\begin{enumerate}
	\item for $\Omega_\calX$ defined by \eqref{eq:cssp} -- \eqref{eq:cssl}, under the sample complexities in Table \ref{tab:sc}.
	\item for $\Omega_\calX$ defined by \eqref{eq:csl} or \eqref{eq:cssl} when perturbations are small ($\delta<R\rho_i$, $i=1,2,3$), under the less demanding sample complexities in Table \ref{tab:sc_alt}.
\end{enumerate}
\end{corollary}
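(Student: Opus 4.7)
The plan is to treat this corollary as a bookkeeping exercise: the constraint sets in \eqref{eq:cssp} -- \eqref{eq:cssl2} were handled in Section \ref{sec:cnmd} precisely so that their covering numbers would fit the template of assumption (A1), and the sample complexity statements should fall out by substituting the tabulated constants into Theorem \ref{thm:stability}, Corollary \ref{cor:uniform}, and Corollary \ref{cor:gaussian}. There is no new probabilistic or geometric ingredient to introduce; the content is to identify the right $(d_i, C_i, \rho_i)$ triple in each row of Tables \ref{tab:sc} and \ref{tab:sc_alt}.

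For part (1), I would read off $(d_1, C_1)$, $(d_2, C_2)$, $(d_3, C_3)$ from Tables \ref{tab:cnB}, \ref{tab:cndB}, \ref{tab:cndX} for each of the six constraint sets, noting that Propositions \ref{pro:cnB}, \ref{pro:cndB}, \ref{pro:cndX} supply the bound \eqref{eq:cn_bound} for every $0 < \rho < 1$, so one may take $\rho_i = 1$. Assumption (A1) is therefore satisfied with these explicit values, and Theorem \ref{thm:stability} yields the three stability conclusions provided $m > d_i$ and $\delta < R$. Because $\rho_i = 1$, the condition $\delta < R \rho_i$ imposes no smallness requirement on $\delta$ beyond the natural one $\delta < R$. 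For the uniform and Gaussian distributions I would simply propagate $C_{\rmD,R}$ and $\theta_{\rmD,R}$ from Corollaries \ref{cor:uniform} and \ref{cor:gaussian}; the sample complexity thresholds in Table \ref{tab:sc} are precisely the $d_i$'s from Tables \ref{tab:cnB}--\ref{tab:cndX}.

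For part (2), only the low-rank and sparse low-rank sets \eqref{eq:csl} and \eqref{eq:cssl} are considered, because Corollary \ref{cor:cn_alt} sharpens the covering number exponents for exactly these two cases. I would use the $(d_1, d_2, d_3)$ listed in Table \ref{tab:cn_alt}, with $C_i = 1$, as the relevant constants in assumption (A1). The difference from part (1) is that the bound from Corollary \ref{cor:cn_alt} is valid only for $\rho < \rho_i$ with $\rho_i > 0$ not necessarily $1$; this is exactly the source of the hypothesis $\delta < R \rho_i$ in the statement of part (2). Plugging these sharper exponents into Theorem \ref{thm:stability} and its two distributional corollaries yields the entries of Table \ref{tab:sc_alt}.

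The only point requiring any care, and which I would verify explicitly, is that the covering number bound \eqref{eq:cn_bound} is invoked in the proof of Theorem \ref{thm:stability} at a scale $\rho$ proportional to $\delta/R$, so that the small-perturbation hypothesis $\delta < R \rho_i$ is exactly what guarantees $\rho < \rho_i$ and hence validates the use of Corollary \ref{cor:cn_alt} in part (2). Once this correspondence is checked, the proof of the corollary reduces to substitution from the tables, with no further estimation.
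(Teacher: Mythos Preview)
Your proposal is correct and matches the paper's own argument, which is simply the one-line remark that part (1) follows from the covering number bounds in Section \ref{sec:cn} (Propositions \ref{pro:cnB}--\ref{pro:cndX}) and part (2) from the alternative bounds in Section \ref{sec:cn_alt} (Corollary \ref{cor:cn_alt}). Your observation that the proof of Theorem \ref{thm:stability} invokes \eqref{eq:cn_bound} at $\rho = \delta/R$, so that $\delta < R\rho_i$ is exactly the condition needed to apply Corollary \ref{cor:cn_alt}, is the right justification and is implicit in the paper. One trivial slip: part (1) of the corollary concerns only the four constraint sets \eqref{eq:cssp}--\eqref{eq:cssl}, not all six; the symmetric cases \eqref{eq:csl2}, \eqref{eq:cssl2} are reserved for Corollary \ref{cor:symm_mr}.
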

The first result in Corollary \ref{cor:mr} follows from  the bounds on covering numbers in Section \ref{sec:cn}, and the second result follows from the alternative bounds in Section \ref{sec:cn_alt}. 

\begin{table}[htbp]%
\renewcommand{\arraystretch}{1.3}
\begin{center}
\begin{tabular}{ >{\centering\arraybackslash}m{0.5in} >{\centering\arraybackslash}m{1.5in} >{\centering\arraybackslash}m{1.5in} >{\centering\arraybackslash}m{1.5in} >{\centering\arraybackslash}m{0in} }
\hline
$\Omega_\calX$ & Single point stability on $\Omega_\calB$ & Uniform stability on $\Omega_\calB$ & Uniform stability on $\Omega_\calX$ &\\
\hline
\eqref{eq:cssp} & $m>t$ & $m>t$ & $m>t$ &\\
\eqref{eq:css} & $m>s$ & $m>2s$ & $m>2s$ &\\
\eqref{eq:csl} & $m>(n_1+n_2)r$ & $m>2(n_1+n_2)r$ & $m>2(n_1+n_2)r$ &\\
\eqref{eq:cssl} & $m>(s_1+s_2)r$ & $m>2(s_1+s_2)r$ & $m>4(s_1+s_2)r$ &\\
\hline
\end{tabular}
\end{center}
\caption{A summary of sample complexities for stable recovery.}
\label{tab:sc}
\end{table}

\begin{table}[htbp]%
\renewcommand{\arraystretch}{1.3}
\begin{center}
\begin{tabular}{ >{\centering\arraybackslash}m{0.5in} >{\centering\arraybackslash}m{1.5in} >{\centering\arraybackslash}m{1.5in} >{\centering\arraybackslash}m{1.5in} >{\centering\arraybackslash}m{0in} }
\hline
$\Omega_\calX$ & Single point stability on $\Omega_\calB$ ($\delta<R\rho_1$) & Uniform stability on $\Omega_\calB$ ($\delta<R\rho_2$) & Uniform stability on $\Omega_\calX$ ($\delta<R\rho_3$) &\\
\hline
\eqref{eq:csl} & $m>(n_1+n_2-r)r+1$ & $m>2(n_1+n_2-r)r+1$ & $m>2(n_1+n_2-2r)r+1$ &\\
\eqref{eq:cssl} & $m>(s_1+s_2-r)r+1$ & $m>2(s_1+s_2-r)r+1$ & $m>4(s_1+s_2-r)r+1$ &\\
\hline
\end{tabular}
\end{center}
\caption{A summary of sample complexities for stable recovery against small perturbations.}
\label{tab:sc_alt}
\end{table}



\subsection{Rank-1 Measurement Matrices}\label{sec:rank1}

Next, Theorem \ref{thm:rank1_stability} shows that the same sample complexities as in Theorem \ref{thm:stability} apply to matrix recovery with rank-1 measurement matrices. 

In this section, the measurement matrices have the form $A_j=a_j b_j^T$. The distribution of random matrix $A_j$ is described in terms of the distributions of random vectors $a_j\in\bbR^{n_1}$ and $b_j\in\bbR^{n_2}$. Theorem \ref{thm:rank1_stability} holds under assumption (A1) on the constraint set, and the following assumption (A3) on the distribution of $a_j,b_j$:
\begin{enumerate}
	\item[(A3)] The measurement matrices $\{A_j=a_j b_j^T\}_{j=1}^m$ satisfy that $\{a_j\}_{j=1}^m$ and $\{b_j\}_{j=1}^m$ are independent random vectors, where $\{a_j\}_{j=1}^m$ (resp. $\{b_j\}_{j=1}^m$) are i.i.d. following a distribution $\rmD_1$ (resp. $\rmD_2$) that satisfies the following concentration of measure bounds ($\varepsilon,\delta>0$):
\begin{align*}
&\bbP_{\rmD_1\rmD_2}\left[\norm{a}_2 \leq R_1,~\norm{b}_2 \leq R_2,~\left|a^TXb\right|\leq \delta \right] \leq C_{\rmD_1,\rmD_2,R_1,R_2,\delta}\cdot\frac{\delta}{\varepsilon},\qquad \forall X~~\text{s.t.}~\varepsilon \leq \norm{X}_2 \leq 2, \\
&\bbP_{\rmD_1} [\norm{a}_2 > R_1] = \theta_{\rmD_1,R_1},\qquad \bbP_{\rmD_2} [\norm{b}_2 > R_2] = \theta_{\rmD_2,R_2}.
\end{align*}
The constant $C_{\rmD_1,\rmD_2,R_1,R_2,\delta}$ is independent of $\varepsilon$, but may contain a $\polylog(\delta)$ factor.
\end{enumerate}

\begin{theorem}\label{thm:rank1_stability}
Suppose the constraint set and the measurement matrices satisfy assumptions (A1) and (A3), respectively. If $m>d_i$, $\varepsilon<1$ and $\delta < R_1R_2\rho_i$, then the corresponding stability result in Theorem \ref{thm:stability} holds with probability $1-P_i$, where
\begin{equation*}
P_i \leq C_i\left(3C_{\rmD_1,\rmD_2,R_1,R_2,3\delta}\right)^m\cdot (R_1R_2)^{d_i} \cdot\frac{\delta^{m-d_i}}{\varepsilon^m} + m\left(\theta_{\rmD_1,R_1}+\theta_{\rmD_2,R_2}\right).
\end{equation*}
The norm $\norm{\cdot}_\calX$ in which the recovery error of the matrix is measured in the definition of stability, is the \emph{spectral norm} $\norm{\cdot}_2$.
\end{theorem}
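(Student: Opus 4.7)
The plan is to adapt the covering-and-union-bound proof of Theorem \ref{thm:stability}, replacing its Frobenius-based estimates by rank-1 analogues based on the pointwise bound $|a_j^T W b_j|\leq\|a_j\|_2\|b_j\|_2\|W\|_2$. Three modifications are substantive: the error norm is spectral (matching the regime $\varepsilon\leq\|X\|_2\leq 2$ of (A3)); truncation is applied separately to $\|a_j\|_2$ and $\|b_j\|_2$; and because the concentration constant $C_{\rmD_1,\rmD_2,R_1,R_2,\delta}$ in (A3) is $\delta$-dependent, after the covering step it must be evaluated at an inflated argument $3\delta$.

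First I would isolate the truncation event $\mathcal{E}=\{\|a_j\|_2\leq R_1,\ \|b_j\|_2\leq R_2,\ \forall j\in[m]\}$, so that $\bbP[\mathcal{E}^c]\leq m(\theta_{\rmD_1,R_1}+\theta_{\rmD_2,R_2})$ produces the additive term of $P_i$. For $i=2$ (the prototype), stability fails iff some $W\in\Omega_{\dB}$ satisfies $\|W\|_2>\varepsilon$ and $|a_j^T W b_j|\leq\delta$ for every $j$. Take a Frobenius $\rho$-net $\mathcal{N}$ of $\Omega_{\dB}$ with $\rho=2\delta/(R_1R_2)$; the hypothesis $\delta<R_1R_2\rho_2$ gives $\rho<\rho_2$, so (A1) yields $|\mathcal{N}|\leq C_2(R_1R_2/(2\delta))^{d_2}$. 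For any witness $W$ and its nearest $W_k\in\mathcal{N}$, using $\|W_k-W\|_2\leq\|W_k-W\|_\rmF\leq\rho$, on $\mathcal{E}$ I obtain
\begin{equation*}
|a_j^T W_k b_j|\leq|a_j^T W b_j|+\|a_j\|_2\|W_k-W\|_2\|b_j\|_2\leq\delta+R_1R_2\rho=3\delta,\quad\|W_k\|_2>\varepsilon-\rho,\quad\|W_k\|_\rmF\leq 2.
\end{equation*}
Applying (A3) to the fixed $W_k$ with $\varepsilon'=\varepsilon-\rho$, $\delta'=3\delta$ and using independence over the $m$ measurements gives the per-net-point bound $\bigl(C_{\rmD_1,\rmD_2,R_1,R_2,3\delta}\cdot 3\delta/(\varepsilon-\rho)\bigr)^m$; a union bound over $\mathcal{N}$ together with $1/(\varepsilon-\rho)\leq O(1/\varepsilon)$ (valid since $\rho\ll\varepsilon$ for small $\delta$) gives the first term of $P_2$.

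Case $i=1$ runs identically on the Frobenius $\rho$-cover of $\Omega_\calB$ itself (size $\leq C_1\rho^{-d_1}$) with $W_k:=X_k-X_0$. For $i=3$, the cone homogeneity of $\Omega_\calX-\Omega_\calX$ lets me rescale any failure witness $W$ to $\bar W=W/\|W\|_\rmF\in\Omega_{\dX}$, which turns failure into the scale-invariant event $\max_j|a_j^T\bar W b_j|<(\delta/\varepsilon)\|\bar W\|_2$ for some $\bar W\in\Omega_{\dX}$; (A3) applied at a fixed $\bar W$ with $\varepsilon''=\|\bar W\|_2$ and $\delta''=(\delta/\varepsilon)\|\bar W\|_2$ gives a scale-invariant per-measurement ratio $\delta''/\varepsilon''=\delta/\varepsilon$, and covering $\Omega_{\dX}$ by a Frobenius $\rho$-net of size $\leq C_3(R_1R_2/(2\delta))^{d_3}$ (with the same perturbation analysis as for $i=2$) then produces the third row of $P_i$.

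The main obstacle is the unbounded case $i=3$: the net lives in Frobenius norm while both the stability conclusion and (A3) are phrased in the spectral norm, so a cover point $\widehat W$ may have $\|\widehat W\|_2\ll\|\widehat W\|_\rmF$. In that regime the perturbation of order $R_1R_2\rho$ threatens to swamp the scale-invariant signal $(\delta/\varepsilon)\|\bar W\|_2$, and the bound from (A3) applied at $\widehat W$ picks up an extra $\delta/\|\widehat W\|_2$ term. A careful choice of the net radius, together with (if needed) a partition of the net by the magnitude of $\|\widehat W\|_2$, is required to keep the effective ratio delivered to (A3) at $\asymp\delta/\varepsilon$ per measurement and so recover the clean $(R_1R_2)^{d_i}\delta^{m-d_i}/\varepsilon^m$ scaling in the stated $P_i$.
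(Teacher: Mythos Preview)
Your overall strategy matches the paper's, but you are missing one device (inherited from the proof of Theorem~\ref{thm:stability}) that eliminates both the $\varepsilon-\rho$ absorption and your unjustified claim $\|W_k\|_\rmF\leq 2$. The paper does not take an arbitrary $\rho$-net of $\Omega_{\dB}$; it covers the \emph{restricted} set $\Omega_\varepsilon\defeq\{X\in\Omega_{\dB}:\|X\|_2>\varepsilon\}$ with balls of radius $\rho=\delta/(R_1R_2)$, and then, by minimality of the cover, selects in each ball a representative $X'_i\in\Omega_\varepsilon$. The balls of radius $2\rho$ centered at the $X'_i$ still cover $\Omega_\varepsilon$, and now every center lies in $\Omega_\varepsilon\subset\Omega_{\dB}$, so that $\varepsilon<\|X'_i\|_2\leq\|X'_i\|_\rmF\leq 2$ holds exactly. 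Assumption (A3) therefore applies at the center with parameter $\varepsilon$ itself (and the upper bound $2$ is automatic), while the number of centers is still at most $N_{\Omega_{\dB}}(\rho)\leq C_2(R_1R_2/\delta)^{d_2}$. The perturbation from a witness $W$ to $X'_i$ is at most $2R_1R_2\rho=2\delta$, giving the same $3\delta$. This produces the stated $P_i$ with no $O(\cdot)$ slack and no hidden ``$\rho\ll\varepsilon$'' assumption; your version, as written, does not yield the exact constants of the theorem and becomes vacuous when $\rho\geq\varepsilon$.

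For $i=3$ the paper does not use your scale-invariant reformulation. It simply replaces $\Omega_\calB-X_0$ by $\Omega_{\dX}$ and runs the identical covering argument on $\Omega_\varepsilon=\{X\in\Omega_{\dX}:\|X\|_2>\varepsilon\}$. Your ``main obstacle'' (cover points with $\|\widehat W\|_2\ll\|\widehat W\|_\rmF$) disappears once you use the recentering trick: every $X'_i$ is chosen in $\Omega_\varepsilon$, hence $\|X'_i\|_2>\varepsilon$ by construction, regardless of the ratio $\|X'_i\|_2/\|X'_i\|_\rmF$. You are right, however, that the \emph{preceding} reduction---from the unbounded failure event over $\Omega_\calX-\Omega_\calX$ to the bounded event over $\Omega_{\dX}$---is the one place where the spectral-norm case is genuinely more delicate than the Frobenius-norm case of Theorem~\ref{thm:stability}: the rescaling $X\mapsto X/\|X\|_\rmF$ used there preserves $\|\cdot\|_\rmF>\varepsilon$ but need not preserve $\|\cdot\|_2>\varepsilon$. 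The paper's proof is terse on this point and simply asserts the analogy.
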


For uniform distributions and i.i.d. Gaussian distributions, the above theorem reduces to Corollaries \ref{cor:rank1_uniform} and \ref{cor:rank1_gaussian}, respectively. The proof of Theorem \ref{thm:rank1_stability} follows steps similar to those in \cite[Lemma 3]{Riegler2015}. Again, our refined covering number argument allows stability with the same number of measurements, and our generalization covers measurement models following other distributions, e.g., Gaussian distribution in Corollary \ref{cor:rank1_gaussian}. For uniformly distributed measurements, our guarantee Corollary \ref{cor:rank1_uniform} and Lemma \ref{lem:concentration2} are analogous to a previous result by Riegler et al. \cite[Lemma 4]{Riegler2015}. Our adaptation of the previous result makes a big difference in our stability guarantees. Please see a detailed comparison in Appendix \ref{app:concentration}.

\begin{corollary}\label{cor:rank1_uniform}
Suppose the constraint set satisfies assumption (A1), the measurement matrices $\{A_j=a_j b_j^T\}_{j=1}^m$ satisfy that $\{a_j\}_{j=1}^m$ and $\{b_j\}_{j=1}^m$ are independent random vectors, where $\{a_j\}_{j=1}^m$ (resp. $\{b_j\}_{j=1}^m$) are i.i.d. following distribution $\rmU_1$ -- uniform distribution on $R_1\calB_{n_1}$ (resp. $\rmU_2$ -- uniform distribution on $R_2\calB_{n_2}$). Then the stability results in Theorem \ref{thm:rank1_stability} hold, except for a small probability:
\begin{equation*}
P_i \leq C_i\left(\frac{12 V_{n_1-1}\cdot V_{n_2-1}}{V_{n_1}\cdot V_{n_2}} \left(1+\ln\frac{2R_1R_2}{3\delta}\right)\right)^m\left(\frac{\delta}{R_1R_2}\right)^{m-d_i} \left(\frac{1}{\varepsilon}\right)^m.
\end{equation*}
\end{corollary}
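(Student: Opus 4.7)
The plan is to specialize Theorem \ref{thm:rank1_stability} to the uniform measurement model by verifying assumption (A3) with explicit constants and then substituting. Since $\rmU_1$ and $\rmU_2$ are supported on $R_1\calB_{n_1}$ and $R_2\calB_{n_2}$ respectively, the tail probabilities $\theta_{\rmU_1,R_1}$ and $\theta_{\rmU_2,R_2}$ both vanish, so the additive term $m(\theta_{\rmD_1,R_1}+\theta_{\rmD_2,R_2})$ in Theorem \ref{thm:rank1_stability} drops out entirely. The hypotheses $m>d_i$, $\varepsilon<1$, and $\delta<R_1R_2\rho_i$ carry over verbatim, so there is nothing to check on that side.

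The concentration-of-measure clause in (A3) is exactly the content of Lemma \ref{lem:concentration2} referenced in the excerpt: for every $X$ with $\varepsilon\le\norm{X}_2\le 2$, this lemma yields
\[
\bbP_{\rmU_1\rmU_2}\bigl[\norm{a}_2\le R_1,\ \norm{b}_2\le R_2,\ \lvert a^T X b\rvert\le\delta\bigr]\le \frac{4V_{n_1-1}V_{n_2-1}}{V_{n_1}V_{n_2}R_1R_2}\left(1+\ln\frac{2R_1R_2}{\delta}\right)\cdot\frac{\delta}{\varepsilon},
\]
which identifies
\[
C_{\rmU_1,\rmU_2,R_1,R_2,\delta}=\frac{4V_{n_1-1}V_{n_2-1}}{V_{n_1}V_{n_2}R_1R_2}\left(1+\ln\frac{2R_1R_2}{\delta}\right),
\]
the logarithm serving as the $\polylog(\delta)$ factor permitted by (A3). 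Substituting this $C$ (with $\delta$ replaced by $3\delta$, as demanded by the $C_{\cdots,3\delta}$ appearing in Theorem \ref{thm:rank1_stability}) into $P_i\le C_i(3C)^m(R_1R_2)^{d_i}\delta^{m-d_i}/\varepsilon^m$, and then absorbing the $(R_1R_2)^{-m}$ generated by $C^m$ against the $(R_1R_2)^{d_i}$ to form $(\delta/(R_1R_2))^{m-d_i}$, yields the stated bound directly, with leading numerical coefficient $3\cdot 4=12$.

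The substantive work is not in this corollary itself but in Lemma \ref{lem:concentration2}. The plan for that lemma would be: use rotation invariance of both uniform marginals to reduce to the case $X=\diag(\sigma_1,\sigma_2,\ldots)$ with $\sigma_1=\norm{X}_2\ge\varepsilon$; condition on $b$, so that $a^T X b$ becomes a linear functional of the uniform-on-ball vector $a$ with magnitude $\norm{Xb}_2$, for which the standard hyperplane-slab probability gives a conditional estimate of order $V_{n_1-1}\delta/(V_{n_1}R_1\norm{Xb}_2)$; and finally integrate over $b$, splitting the integration regime according to whether $\norm{Xb}_2$ is above or below a threshold of order $\delta/R_1$. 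The threshold split is precisely where the $1+\ln(2R_1R_2/\delta)$ factor appears. Pinning down a sharp constant in that logarithmic term, so as to beat the analogous bound in \cite{Riegler2015}, is the principal technical obstacle, as emphasized in Appendix \ref{app:concentration}.
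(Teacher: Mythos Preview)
Your derivation of the corollary is essentially identical to the paper's: you verify (A3) by reading off $\theta_{\rmU_1,R_1}=\theta_{\rmU_2,R_2}=0$ from the bounded supports, invoke Lemma~\ref{lem:concentration2} with $E=2$ to obtain $C_{\rmU_1,\rmU_2,R_1,R_2,\delta}$, and substitute $C_{\cdots,3\delta}$ into the bound of Theorem~\ref{thm:rank1_stability}. Your appended sketch of Lemma~\ref{lem:concentration2} is organized slightly differently---the paper isolates only the first coordinates $\tilde a^{(1)},\tilde b^{(1)}$ after the SVD rotation and integrates them jointly, whereas you condition on the full vector $b$ and apply the slab bound in $a$---but once you use $\norm{Xb}_2\ge\sigma_1|b^{(1)}|$ the two computations coincide.
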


\begin{corollary}\label{cor:rank1_gaussian}
Suppose the constraint set satisfies assumption (A1), the measurement matrices $\{A_j=a_j b_j^T\}_{j=1}^m$ satisfy that $\{a_j\}_{j=1}^m$ and $\{b_j\}_{j=1}^m$ are independent random vectors, where $\{a_j\}_{j=1}^m$ (resp. $\{b_j\}_{j=1}^m$) are i.i.d. following distribution $\rmG_1$ -- with i.i.d. Gaussian entries $N(0,\sigma_1^2)$ (resp. $\rmG_2$ -- with i.i.d. Gaussian entries $N(0,\sigma_2^2)$). Then the stability results in Theorem \ref{thm:rank1_stability} hold, except for a small probability:
\begin{align*}
P_i \leq & C_i\left(\frac{3}{\sigma_1\sigma_2}\left(1+\ln\left(1+\frac{2\sigma_1\sigma_2}{3\delta}\right)\right)\right)^m\cdot(R_1R_2)^{d_i}\cdot\frac{\delta^{m-d_i}}{\varepsilon^m} \\
& + m e^{-\frac{n_1}{2}\left(\frac{R_1^2}{n_1\sigma_1^2}-1-\ln \frac{R_1^2}{n_1\sigma_1^2} \right)} + m e^{-\frac{n_2}{2}\left(\frac{R_2^2}{n_2\sigma_2^2}-1-\ln \frac{R_2^2}{n_2\sigma_2^2} \right)}, \qquad  \forall R_1>\sqrt{n_1}\sigma_1,~\forall R_2>\sqrt{n_2}\sigma_2.
\end{align*}
\end{corollary}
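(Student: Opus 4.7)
The plan is to specialize Theorem \ref{thm:rank1_stability} to the Gaussian case, which reduces the work to verifying assumption (A3) for $\rmG_1,\rmG_2$ with explicit constants and then substituting. Two ingredients are needed: (a) the norm tail bounds $\theta_{\rmG_1,R_1}$ and $\theta_{\rmG_2,R_2}$; and (b) a concentration-of-measure estimate
\[
\bbP_{\rmG_1\rmG_2}\left[\left|a^TXb\right|\leq\delta\right] \leq \frac{1}{\sigma_1\sigma_2}\left(1+\ln\left(1+\frac{2\sigma_1\sigma_2}{\delta}\right)\right)\cdot \frac{\delta}{\varepsilon}
\]
uniformly over $X$ with $\varepsilon \leq \norm{X}_2 \leq 2$, from which the constant $C_{\rmG_1,\rmG_2,R_1,R_2,\delta}$ in (A3) can be read off. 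Plugging both into Theorem \ref{thm:rank1_stability} and expanding $3C_{\rmG_1,\rmG_2,R_1,R_2,3\delta}$ produces the stated form of $P_i$.

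Ingredient (a) is a standard chi-squared Chernoff bound. Since $\norm{a}_2^2/\sigma_1^2 \sim \chi^2_{n_1}$, optimizing the exponent of the chi-squared moment generating function yields $\bbP[\chi^2_n > t] \leq \exp(-\frac{n}{2}(t/n-1-\ln(t/n)))$ for $t > n$; setting $t = R_1^2/\sigma_1^2$ with $R_1>\sqrt{n_1}\sigma_1$ gives exactly the tail in the statement, and identically for $b$.

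Ingredient (b) is the main technical step. I would first condition on $b$: because $a\sim N(0,\sigma_1^2 I_{n_1})$, the random variable $a^TXb \mid b$ is $N(0,\sigma_1^2\norm{Xb}_2^2)$, and bounding by the maximum of a one-dimensional Gaussian density gives $\bbP[|a^TXb|\leq\delta \mid b] \leq \min(1, \sqrt{2/\pi}\,\delta/(\sigma_1\norm{Xb}_2))$. Writing the SVD $X=U\Sigma V^T$ and letting $v_1$ be a top right singular vector, one has $\norm{Xb}_2 \geq \norm{X}_2|v_1^Tb| \geq \varepsilon|v_1^Tb|$ with $W\defeq v_1^Tb \sim N(0,\sigma_2^2)$ since $\norm{v_1}_2 = 1$. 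Dropping the norm indicators on $a,b$ (valid for an upper bound since $\bbP[A\cap B]\leq \bbP[B]$) reduces the estimate to $\bbE_W[\min(1,c/|W|)]$ with $c=\sqrt{2/\pi}\,\delta/(\sigma_1\varepsilon)$; splitting the expectation at $|W|=c$ controls the truncated mass through the density bound $\sqrt{2/\pi}/\sigma_2$, while the outer integral $\int_c^\infty w^{-1}e^{-w^2/(2\sigma_2^2)}dw$ (split again at $w=\sigma_2$) produces a logarithmic factor $\ln(\sigma_2/c)$ plus a bounded Gaussian tail, combining into the claimed form.

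The main obstacle is arranging the calculation in (b) to land exactly on the ``$1+$'' inside the logarithm in $\ln(1+2\sigma_1\sigma_2/\delta)$, so the estimate remains valid for all $\delta>0$ rather than only when $c<\sigma_2$; this is precisely the $\polylog(\delta)$ slack built into assumption (A3) and mirrors the uniform-distribution argument in Lemma \ref{lem:concentration2}. Once (a) and (b) are verified, a union bound over the $m$ independent measurement pairs yields the additive term $m(\theta_{\rmG_1,R_1}+\theta_{\rmG_2,R_2})$, and Theorem \ref{thm:rank1_stability} delivers the corollary.
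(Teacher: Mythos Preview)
Your overall architecture is the same as the paper's: verify (A3) with explicit $\theta_{\rmG_i,R_i}$ (Chernoff on $\chi^2$) and an explicit $C_{\rmG_1,\rmG_2,R_1,R_2,\delta}$, then invoke Theorem~\ref{thm:rank1_stability}. The one genuine difference is how you prove ingredient (b). The paper's Lemma~\ref{lem:concentration2_gaussian} diagonalizes $X=U\Sigma V^T$, rotates both vectors $\tilde a=U^Ta$, $\tilde b=V^Tb$, isolates the two leading coordinates, and bounds the resulting joint integral $\iint \ind{|\tilde a^{(1)}\tilde b^{(1)}|\le \delta/\norm{X}_2}\,e^{-(\tilde a^{(1)2}+\tilde b^{(1)2})/2}$ in polar coordinates, finishing with the exponential-integral estimate $\int_x^\infty z^{-1}e^{-z}\,\rmd z\le \ln(1+1/x)$, which is exactly what produces the ``$1+$'' inside the logarithm that you flag as the obstacle. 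Your route---condition on $b$, use $\norm{Xb}_2\ge \norm{X}_2|v_1^Tb|$, and reduce to $\bbE_W[\min(1,c/|W|)]$ for a single Gaussian $W$---is correct and arguably more elementary (no polar coordinates, no two-dimensional bookkeeping), but the split at $|W|=\sigma_2$ that you sketch gives $\ln(\sigma_2/c)$ only when $c<\sigma_2$ and will not by itself hit the paper's exact constant; to match the statement you would want to pass to $u=W^2/(2\sigma_2^2)$ and apply the same exponential-integral bound the paper uses, which yields a $\ln(1+\cdot)$ valid for all $\delta>0$. Either way the resulting $C_{\rmG_1,\rmG_2,R_1,R_2,\delta}$ has the required form $\frac{1}{\sigma_1\sigma_2}\bigl(1+\ln(1+O(\sigma_1\sigma_2/\delta))\bigr)$, and substitution into Theorem~\ref{thm:rank1_stability} completes the corollary.
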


The sample complexities for stable recovery in Tables \ref{tab:sc} and \ref{tab:sc_alt} also hold for rank-1 measurement matrices. Here, deviation from the true matrix is measured in spectral norm, and the constants in the probability of failure are different from those for stable recovery using unstructured measurement matrices.
\begin{corollary}\label{cor:rank1_mr} 
The stability results in Theorem \ref{thm:rank1_stability}, Corollary \ref{cor:rank1_uniform}, and Corollary \ref{cor:rank1_gaussian} hold
\begin{enumerate}
	\item for $\Omega_\calX$ defined by \eqref{eq:cssp} -- \eqref{eq:cssl}, under the sample complexities in Table \ref{tab:sc}.
	\item for $\Omega_\calX$ defined by \eqref{eq:csl} or \eqref{eq:cssl} when perturbations are small ($\delta<R\rho_i$, $i=1,2,3$), under the less demanding sample complexities in Table \ref{tab:sc_alt}.
\end{enumerate}
\end{corollary}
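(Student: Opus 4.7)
}
The plan is to treat this as a direct substitution: the corollary is the specialization of Theorem \ref{thm:rank1_stability} (and its uniform/Gaussian variants) obtained by plugging in concrete covering number bounds. The only work is to verify that assumption (A1) holds with the right exponents $d_1,d_2,d_3$ for each constraint set, and then to check that the resulting failure probability $P_i$ in \eqref{eq:prob_bd} decays to zero as $\delta\to 0$ once $m>d_i$.

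First, I would observe that for each of the constraint sets \eqref{eq:cssp}--\eqref{eq:cssl}, Propositions \ref{pro:cnB}, \ref{pro:cndB}, and \ref{pro:cndX} already give bounds of the form $N_i(\rho)\leq C_i\rho^{-d_i}$ valid for \emph{all} $0<\rho<1$, which is exactly assumption (A1) with $\rho_i=1$. Reading off the exponents $d_1$ (for $\Omega_\calB$), $d_2$ (for $\Omega_{\dB}$), and $d_3$ (for $\Omega_{\dX}$) from Tables \ref{tab:cnB}, \ref{tab:cndB}, and \ref{tab:cndX}, and matching them against the three stability types in Theorem \ref{thm:rank1_stability}, immediately produces the three columns of Table \ref{tab:sc}. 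For example, for the rank-$r$ constraint \eqref{eq:csl}, one has $d_1=(n_1+n_2)r$, $d_2=d_3=2(n_1+n_2)r$, giving the sample complexities $m>(n_1+n_2)r$ for single point stability on $\Omega_\calB$ and $m>2(n_1+n_2)r$ for uniform stability on $\Omega_\calB$ or $\Omega_\calX$. The same unpacking applies verbatim to the uniform-distribution refinement (Corollary \ref{cor:rank1_uniform}) and the Gaussian refinement (Corollary \ref{cor:rank1_gaussian}), since assumption (A1) is a property of $\Omega_\calX$ alone and does not interact with the measurement distribution.

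For the second claim, I would use Corollary \ref{cor:cn_alt} in place of Propositions \ref{pro:cnB}--\ref{pro:cndX}. This supplies sharper exponents $d_1,d_2,d_3$ as listed in Table \ref{tab:cn_alt}, but the bounds are now only valid for $\rho<\rho_i$ for some $\rho_i>0$. The hypothesis $\delta<R_1R_2\rho_i$ in Theorem \ref{thm:rank1_stability} is precisely the condition needed for the covering argument inside that theorem to use only radii below the threshold $\rho_i$; thus the sharper exponents can be substituted, yielding Table \ref{tab:sc_alt}. (The translation between the ``$R\rho_i$'' in the small-perturbation condition stated in the corollary and the ``$R_1R_2\rho_i$'' in Theorem \ref{thm:rank1_stability} is the natural one where $R=R_1R_2$ plays the role of the Frobenius norm cap in the rank-1 setting.)

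Finally, I would check decay of $P_i$. Fixing $R_1,R_2$ large enough that $m(\theta_{\rmD_1,R_1}+\theta_{\rmD_2,R_2})<1$, the first term of the bound on $P_i$ in Theorem \ref{thm:rank1_stability} is proportional to $\delta^{m-d_i}/\varepsilon^m$; whenever $m>d_i$ this vanishes as $\delta\to 0$ for any $\varepsilon=\varepsilon(\delta)$ decaying more slowly than $\delta^{1-d_i/m}$ (cf.\ \eqref{eq:vanish}). Since we are only claiming the sample complexities, not new probability bounds, there is nothing further to compute. The proof is therefore a bookkeeping matching between (A1)-type covering number bounds and the three columns of the two tables. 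The one place requiring a moment of care is the second claim, where one must verify that the hypothesis $\delta<R_1R_2\rho_i$ is indeed what allows the Corollary \ref{cor:cn_alt} exponents to be used inside the Theorem \ref{thm:rank1_stability} machinery; this is the only non-mechanical step and is the mildest of obstacles.
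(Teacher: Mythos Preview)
Your proposal is correct and matches the paper's approach exactly: the paper does not give a separate proof of Corollary \ref{cor:rank1_mr} but simply states that the sample complexities in Tables \ref{tab:sc} and \ref{tab:sc_alt} carry over to the rank-1 setting by combining Theorem \ref{thm:rank1_stability} (and Corollaries \ref{cor:rank1_uniform}, \ref{cor:rank1_gaussian}) with the covering number bounds of Section \ref{sec:cnmd}, just as you describe. Your remark that the small-perturbation threshold ``$R\rho_i$'' should be read as ``$R_1R_2\rho_i$'' in the rank-1 setting is a correct and useful clarification of the paper's shorthand.
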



\subsection{Symmetric Rank-1 Measurement Matrices}\label{sec:symm_rank1}

The analysis in Section \ref{sec:rank1} does not apply to symmetric rank-1 matrices, which arises in applications like phase retrieval. In this section, we treat this case separately. In this section, $n_1=n_2=n$, and $A_j=a_ja_j^T$. Theorem \ref{thm:symm_stability} holds under assumption (A1), and the following assumption (A4) on the distribution of random vectors $\{a_j\}_{j=1}^{m}\subset\bbR^n$. 

\begin{enumerate}
	\item[(A4)] The measurement matrices $\{A_j=a_j a_j^T\}_{j=1}^m$ satisfy that $\{a_j\}_{j=1}^m$ are i.i.d. random vectors following a distribution $\rmD$ that satisfies the following concentration of measure bounds ($\varepsilon,\delta>0$):
\begin{align*}
&\bbP_{\rmD}\left[\norm{a}_2 \leq R,~\left|a^TXa\right|\leq \delta \right] \leq C_{\rmD,R}\cdot\sqrt{\frac{\delta}{\varepsilon}},\qquad \forall X~~\text{s.t.}~\norm{X}_2 \geq \varepsilon, \\
&\bbP_{\rmD} [\norm{a}_2 > R] = \theta_{\rmD,R}.
\end{align*}
The constant $C_{\rmD,R}$ is independent of $\varepsilon,\delta$.
\end{enumerate}

\begin{theorem}\label{thm:symm_stability}
Suppose the constraint set and the measurement matrices satisfy assumptions (A1) and (A3), respectively, and all matrices in $\Omega_\calX$ are symmetric. If $m>2d_i$, $\varepsilon<1$ and $\delta < R^2\rho_i$, then the corresponding stability result in Theorem \ref{thm:stability} holds with probability $1-P_i$, where
\begin{equation*}
P_i \leq C_i\left(\sqrt{3}C_{\rmD,R}\right)^m\cdot R^{2d_i} \cdot\frac{\delta^{m/2-d_i}}{\varepsilon^{m/2}} + m\cdot\theta_{\rmD,R}.
\end{equation*}
The norm $\norm{\cdot}_\calX$ in which the recovery error of the matrix is measured in the definition of stability, is the \emph{spectral norm} $\norm{\cdot}_2$.
\end{theorem}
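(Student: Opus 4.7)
The proof plan mirrors those of Theorem~\ref{thm:stability} and Theorem~\ref{thm:rank1_stability}, adapted to the quadratic measurement functional $X\mapsto a^T X a$. For each $i\in\{1,2,3\}$, the goal reduces by contrapositive to ruling out any ``bad direction'' $\Delta X$ with $\norm{\Delta X}_2\geq \varepsilon$ and $\norm{\calA(\Delta X)}_2 \leq \delta$. For $i=1$ (single point stability at a fixed $X_0\in\Omega_\calB$), the candidate directions are $X-X_0$ with $X\in\Omega_\calB$, so a cover of $\Omega_\calB$ (of size $N_1$) suffices; for $i=2,3$ one covers $\Omega_{\dB}$ or $\Omega_{\dX}$ (sizes $N_2$, $N_3$) directly. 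Write $\Delta X_k=X_k-X_0$ for $i=1$ and $\Delta X_k=X_k$ for $i=2,3$.

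I would proceed in three stages. First, condition on the event $\mathcal{E}\defeq\{\norm{a_j}_2\leq R\text{ for all }j\in[m]\}$; its complement has probability at most $m\,\theta_{\rmD,R}$ by a union bound, producing the second term of $P_i$. Second, choose a Frobenius-norm cover $\{X_k\}_{k=1}^N$ of the appropriate set with radius $\rho$ of order $\delta/R^2$, so that $N\leq C_i(1/\rho)^{d_i}$ by \eqref{eq:cn_bound}. Under $\mathcal{E}$, the elementary Lipschitz estimate
\begin{equation*}
\left|a_j^T \Delta X\,a_j - a_j^T \Delta X_k\,a_j\right| \leq \norm{a_j}_2^2\,\norm{\Delta X - \Delta X_k}_2 \leq R^2\,\norm{X-X_k}_\rmF \leq R^2\rho
\end{equation*}
transfers information between $\Delta X$ and its nearest approximant: if the bad event $\norm{\calA(\Delta X)}_2\leq\delta$ holds (so each $|a_j^T \Delta X\,a_j|\leq\delta$), then $|a_j^T \Delta X_k\,a_j|\leq\delta+R^2\rho$ for every $j$; dually, $\norm{\Delta X}_2\geq\varepsilon$ combined with $\rho<\varepsilon/2$ forces $\norm{\Delta X_k}_2\geq\varepsilon-\rho$.

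Third, apply (A4) and independence of the $\{a_j\}_{j=1}^m$ to each cover point satisfying $\norm{\Delta X_k}_2\geq\varepsilon-\rho$:
\begin{equation*}
\bbP_\rmD\!\left[\mathcal{E},~|a_j^T \Delta X_k\,a_j|\leq \delta+R^2\rho \text{ for all } j\right] \leq \left(C_{\rmD,R}\sqrt{\tfrac{\delta+R^2\rho}{\varepsilon-\rho}}\right)^{\!m}.
\end{equation*}
A union bound over the $N\leq C_i(1/\rho)^{d_i}$ cover points yields a total of the form $C_i\,\rho^{-d_i}\,(C_{\rmD,R})^m\,((\delta+R^2\rho)/(\varepsilon-\rho))^{m/2}$. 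Choosing $\rho \approx 2\delta/R^2$ makes $(\delta+R^2\rho)/(\varepsilon-\rho)\leq 3\delta/\varepsilon$, collapsing the product to $C_i\,R^{2d_i}\,(\sqrt{3}\,C_{\rmD,R})^m\,\delta^{m/2-d_i}/\varepsilon^{m/2}$, as in $P_i$ (any residual factor of $2^{-d_i}$ is absorbed harmlessly). The hypothesis $\delta<R^2\rho_i$ (with the corresponding cushion) ensures $\rho<\rho_i$ so that \eqref{eq:cn_bound} is in force, and $\varepsilon<1$ with $\delta$ small keeps $\varepsilon-\rho$ bounded away from zero.

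The main obstacle, and the source of the sample-complexity inflation from $m>d_i$ to $m>2d_i$, is the square-root concentration inequality in (A4): because $a^T X a$ is quadratic in $a$, anti-concentration at scale $\delta$ corresponds to anti-concentration of a linear form at scale $\sqrt{\delta}$, so the per-sample bound is $\sqrt{\delta/\varepsilon}$ rather than the $\delta/\varepsilon$ of (A2) and (A3). Raising to the $m$-th power via independence produces a factor $(\delta/\varepsilon)^{m/2}$ that must dominate the cover size $(R^2/\delta)^{d_i}$, forcing $m/2>d_i$. Beyond this structural difference, the argument is essentially a replay of the rank-1 proof of Theorem~\ref{thm:rank1_stability} with the Lipschitz constant $R_1 R_2$ replaced by $R^2$ and the linear concentration by its square-root counterpart; tracking the explicit $\sqrt{3}$ constant is routine bookkeeping.
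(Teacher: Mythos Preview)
Your plan follows the same covering-plus-concentration argument as the paper's proof, and the structural points (conditioning on $\norm{a_j}_2\leq R$, Lipschitz transfer with constant $R^2$, applying (A4) coordinatewise, union bound over the net) are all correct. One arithmetic slip: with $\rho=2\delta/R^2$ you get $\delta+R^2\rho=3\delta$ in the numerator, but the denominator is $\varepsilon-\rho<\varepsilon$, so $(\delta+R^2\rho)/(\varepsilon-\rho)>3\delta/\varepsilon$, not $\leq$; you cannot recover the exact constant $\sqrt{3}$ this way. The paper avoids this by a small trick you omit: after taking a minimal $\rho$-cover of $\Omega_\varepsilon$ (with $\rho=\delta/R^2$), it replaces each center by a nearby point $X_i'\in\Omega_\varepsilon$ and enlarges the radius to $2\rho$. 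Then $\norm{X_i'}_2>\varepsilon$ exactly (no $\varepsilon-\rho$), and the Lipschitz step gives $|a_j^T X_i' a_j|\leq 2R^2\rho+\delta=3\delta$, yielding precisely $(C_{\rmD,R}\sqrt{3\delta/\varepsilon})^m$. Apart from this bookkeeping detail, your proposal is the paper's proof.
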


In phase retrieval, the measurements of a unknown vector $x_0\in\bbR^n$ are obtained without signs. By Theorem \ref{thm:symm_stability}, in the lifted phase retrieval problem, we need $m>2d_1 = 2n$ measurements to stably recover the unknown $n\times n$ symmetric rank-1 matrix $X_0=x_0x_0^T$. By Theorem \ref{thm:stability}, if the measurements are obtained with signs, $m>d_1=n$ measurements are sufficient. Hence, due to the loss of signs, we need twice as many measurements to stably recover the unknown vector.

For uniform distributions and i.i.d. Gaussian distributions, the above theorem reduces to Corollaries \ref{cor:symm_uniform} and \ref{cor:symm_gaussian}, respectively.
\begin{corollary}\label{cor:symm_uniform}
Suppose the constraint set satisfies assumption (A1), the measurement matrices $\{A_j=a_j a_j^T\}_{j=1}^m$ satisfy that $\{a_j\}_{j=1}^m$ are i.i.d. random vectors following distribution $\rmU$ -- uniform distribution on $R\calB_n$. Then the stability results in Theorem \ref{thm:symm_stability} hold, except for a small probability:
\begin{align*}
P_i \leq C_i\left(\frac{2\sqrt{6} \cdot V_{n-1}}{V_n} \right)^m\left(\frac{\delta}{R^2}\right)^{\frac{m}{2}-d_i} \left(\frac{1}{\varepsilon}\right)^\frac{m}{2}.
\end{align*}
\end{corollary}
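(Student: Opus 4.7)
The plan is to obtain Corollary \ref{cor:symm_uniform} as a direct specialization of Theorem \ref{thm:symm_stability}, by verifying assumption (A4) for the uniform distribution $\rmU$ on $R\calB_n$ with explicit constants and then substituting into the general probability bound.

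The tail condition in (A4) is immediate: because $\rmU$ is supported on $R\calB_n$, we have $\theta_{\rmU,R} = \bbP_\rmU[\|a\|_2 > R] = 0$, which kills the $m\,\theta_{\rmD,R}$ summand from the bound of Theorem \ref{thm:symm_stability}. It remains to establish the anti-concentration inequality
\[
\bbP_\rmU\bigl[|a^T X a|\leq \delta\bigr] \;\leq\; \frac{2\sqrt{2}\, V_{n-1}}{R\, V_n}\,\sqrt{\delta/\varepsilon}
\]
for every symmetric $X$ with $\|X\|_2 \geq \varepsilon$, which I would phrase as a standalone concentration lemma analogous to the earlier ones for the unstructured and rank-1 cases.

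To prove that inequality, I would first exploit the rotation invariance of $\rmU$ on $R\calB_n$ (together with $a^TXa = (U^T a)^T \Lambda (U^T a)$ for the spectral decomposition $X = U\Lambda U^T$) to reduce to $X = \diag(\lambda_1,\ldots,\lambda_n)$ with $|\lambda_1| = \|X\|_2 \geq \varepsilon$. Splitting $a = (a_1, a')$ with $a' \in \bbR^{n-1}$, the conditional density of $a_1$ given $a'$ is uniform on $[-L, L]$ with $L = \sqrt{R^2 - \|a'\|_2^2}$, while the marginal density of $a'$ equals $2L/(R^n V_n)$ on $\|a'\|_2 \leq R$. The event in question becomes $|\lambda_1 a_1^2 + c(a')| \leq \delta$, which constrains $a_1^2$ to an interval of length at most $2\delta/|\lambda_1| \leq 2\delta/\varepsilon$. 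The change of variables $u = a_1^2$ turns the density of $a_1$ into $1/(2L\sqrt{u})$ on $(0, L^2]$, so for any $0 \leq \alpha \leq \beta \leq L^2$,
\[
\bbP[\alpha \leq a_1^2 \leq \beta \mid a'] \;=\; \frac{\sqrt{\beta} - \sqrt{\alpha}}{L} \;\leq\; \frac{\sqrt{\beta - \alpha}}{L},
\]
which combined with $\beta - \alpha \leq 2\delta/\varepsilon$ gives the conditional bound $\sqrt{2\delta/\varepsilon}/L$. Integrating against the marginal of $a'$ cancels the factor $1/L$ against $2L$ and leaves a pure $(n-1)$-dimensional ball volume $R^{n-1} V_{n-1}$, yielding the claimed constant $C_{\rmU,R} = 2\sqrt{2}\, V_{n-1}/(R V_n)$.

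Substituting $C_{\rmU,R}$ and $\theta_{\rmU,R}=0$ into Theorem \ref{thm:symm_stability}, the factor $(\sqrt{3}\,C_{\rmU,R})^m R^{2d_i}$ simplifies to $(2\sqrt{6}\, V_{n-1}/V_n)^m \cdot R^{2d_i - m}$, which combined with $\delta^{m/2-d_i}/\varepsilon^{m/2}$ reassembles to exactly the bound in Corollary \ref{cor:symm_uniform} after grouping as $(\delta/R^2)^{m/2 - d_i}$. The main technical hurdle is the square-root-in-$\delta/\varepsilon$ behavior of the concentration estimate, which ultimately comes from passing from $a_1$ to the quadratic $a_1^2$ via the subadditivity identity $\sqrt{\beta} - \sqrt{\alpha} \leq \sqrt{\beta - \alpha}$; the remaining steps (rotation invariance, conditioning on $a'$, and the bookkeeping on the geometric factor $V_{n-1}/V_n$) are straightforward.
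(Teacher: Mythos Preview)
Your proposal is correct and follows essentially the same approach as the paper: the paper also specializes Theorem~\ref{thm:symm_stability} by noting $\theta_{\rmU,R}=0$ and proving the anti-concentration bound $\bbP_\rmU[|a^TXa|\leq\delta]\leq \frac{2\sqrt{2\delta}\,V_{n-1}}{\sqrt{\varepsilon}\,R\,V_n}$ (Lemma~\ref{lem:concentration3}) via the eigendecomposition and reduction to the first coordinate. The only cosmetic difference is that the paper bounds the Lebesgue measure of $\{\tilde a^{(1)}:|(\tilde a^{(1)})^2+c|\leq\delta/\|X\|_2\}$ directly by $2\sqrt{2\delta/\|X\|_2}$ without passing through conditional densities, whereas you carry the factor $1/L$ from conditioning and then cancel it against the marginal; both routes yield the same constant $C_{\rmU,R}=\frac{2\sqrt{2}\,V_{n-1}}{R\,V_n}$.
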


\begin{corollary}\label{cor:symm_gaussian}
Suppose the constraint set satisfies assumption (A1), the measurement matrices $\{A_j=a_j a_j^T\}_{j=1}^m$ satisfy that $\{a_j\}_{j=1}^m$ are i.i.d. random vectors following distribution $\rmG$ -- with i.i.d. Gaussian entries $N(0,\sigma^2)$. Then the stability results in Theorem \ref{thm:symm_stability} hold, except for a small probability:
\begin{align*}
P_i \leq & C_i\left(\frac{2\sqrt{3}}{\sqrt{\pi}\sigma}\right)^m\cdot R^{2d_i}\cdot\frac{\delta^{m/2-d_i}}{\varepsilon^{m/2}}  + m e^{-\frac{n}{2}\left(\frac{R^2}{n\sigma^2}-1-\ln \frac{R^2}{n\sigma^2} \right)}, \qquad  \forall R>\sqrt{n}\sigma.
\end{align*}
\end{corollary}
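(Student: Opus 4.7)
The plan is to specialize Theorem \ref{thm:symm_stability} to the Gaussian case by verifying that distribution $\rmG$ satisfies assumption (A4) with the explicit constants $C_{\rmG,R} = 2/(\sqrt{\pi}\sigma)$ and $\theta_{\rmG,R} = \exp\bigl(-(n/2)(R^2/(n\sigma^2) - 1 - \ln(R^2/(n\sigma^2)))\bigr)$. Plugging these into the probability bound of Theorem \ref{thm:symm_stability}, and noting that $\sqrt{3}\, C_{\rmG,R} = 2\sqrt{3}/(\sqrt{\pi}\sigma)$, gives the stated expression for $P_i$. The condition $R > \sqrt{n}\sigma$ in the corollary is exactly the $t>1$ condition needed to make the tail bound nontrivial.

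The main step is the anti-concentration estimate. Since $a^T X a$ depends only on the symmetric part of $X$, and since the $X$'s that matter in the intended use of (A4) inside Theorem \ref{thm:symm_stability} are differences of elements of $\Omega_\calX$ (hence symmetric), I assume $X = X^T$. Diagonalize $X = U\Lambda U^T$ with eigenvalues $\lambda_1,\dots,\lambda_n$ ordered so $|\lambda_1| = \norm{X}_2 \geq \varepsilon$, and let $g = U^T a/\sigma \sim N(0,I_n)$, so $a^T X a = \sigma^2 \sum_i \lambda_i g_i^2$. Conditioning on $(g_2,\dots,g_n)$, the event $\{|a^T X a|\leq \delta\}$ becomes $\{|\lambda_1 g_1^2 + c|\leq \delta/\sigma^2\}$ for a constant $c$ depending on the conditioning. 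Writing $\delta' = \delta/(\sigma^2|\lambda_1|)$, this is bounded by $\sup_{c'} \bbP[|g_1^2 + c'|\leq \delta']$, which is attained when the interval $[-c'-\delta',-c'+\delta']$ contains the origin (where the $\chi^2_1$ density concentrates its mass). Bounding the $\chi^2_1$ density $(2\pi y)^{-1/2} e^{-y/2}$ by $(2\pi y)^{-1/2}$ gives $\int_0^{2\delta'}(2\pi y)^{-1/2}\,\rmd y = 2\sqrt{\delta'/\pi}$, and using $|\lambda_1|\geq \varepsilon$ turns this into $(2/(\sqrt{\pi}\sigma))\sqrt{\delta/\varepsilon}$, which is exactly the form demanded by (A4) (and is independent of $R$, so intersecting with $\{\norm{a}_2 \leq R\}$ only helps).

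The tail bound is classical. Since $\norm{a}_2^2/\sigma^2 \sim \chi^2_n$, the standard Chernoff bound for chi-squared variables gives $\bbP[\chi^2_n \geq nt] \leq \exp\bigl(-(n/2)(t - 1 - \ln t)\bigr)$ for every $t>1$, obtained by optimizing $\theta$ in $e^{-\theta nt}(1-2\theta)^{-n/2}$; setting $t = R^2/(n\sigma^2) > 1$ yields $\theta_{\rmG,R}$. The only mild obstacle I foresee is justifying the worst-case placement of $c'$ in the single-coordinate anti-concentration: this reduces to the monotonicity statement that $y \mapsto y^{-1/2} e^{-y/2}$ is decreasing on $(0,\infty)$, so the mass of an interval of fixed length $2\delta'$ is maximized when its left endpoint sits at $0$. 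Once these two ingredients are in place, the corollary is a direct substitution into Theorem \ref{thm:symm_stability}.
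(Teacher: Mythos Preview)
Your proof is correct and follows essentially the same route as the paper: verify (A4) with $C_{\rmG,R}=2/(\sqrt{\pi}\sigma)$ via eigendecomposition of the (symmetric) $X$ and conditioning on all but the top-eigenvalue coordinate, bound $\theta_{\rmG,R}$ by the Chernoff bound for $\chi^2_n$, then substitute into Theorem~\ref{thm:symm_stability}. The only cosmetic difference is that the paper (Lemma~\ref{lem:concentration3_gaussian}) bounds the Lebesgue measure of $\{t:|t^2+c|\leq \delta/\norm{X}_2\}$ by $2\sqrt{2\delta/\norm{X}_2}$ and multiplies by the peak Gaussian density $1/(\sqrt{2\pi}\sigma)$, whereas you integrate the $\chi^2_1$ density directly; both yield the identical constant.
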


Combining the above results with the bounds on covering numbers in Section \ref{sec:cnmd}, we have Corollary \ref{cor:symm_mr}.
\begin{corollary}\label{cor:symm_mr} 
The stability results in Theorem \ref{thm:symm_stability}, Corollary \ref{cor:symm_uniform}, and Corollary \ref{cor:symm_gaussian} hold for set of symmetric matrices $\Omega_\calX$ defined by \eqref{eq:cssp}, \eqref{eq:css}, \eqref{eq:csl2}, or \eqref{eq:cssl2}, under the sample complexities in Table \ref{tab:symm_sc}.
\end{corollary}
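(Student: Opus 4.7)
The plan is to reduce Corollary \ref{cor:symm_mr} to a direct invocation of Theorem \ref{thm:symm_stability} (and its specializations in Corollaries \ref{cor:symm_uniform} and \ref{cor:symm_gaussian}), fed by the covering-number bounds collected in Section \ref{sec:cn}. First I would verify that for each symmetric constraint set listed---a symmetric subspace \eqref{eq:cssp}, symmetric $s$-sparse matrices \eqref{eq:css} (over a symmetric dictionary), symmetric rank-$r$ matrices \eqref{eq:csl2}, and symmetric sparse low-rank matrices \eqref{eq:cssl2}---Propositions \ref{pro:cnB}, \ref{pro:cndB}, and \ref{pro:cndX} supply explicit exponents $d_1,d_2,d_3$ and finite constants $C_1,C_2,C_3$ satisfying the covering-number hypothesis (A1) with some $\rho_i>0$. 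Because every matrix in these $\Omega_\calX$ is symmetric, the extra hypothesis of Theorem \ref{thm:symm_stability} (``all matrices in $\Omega_\calX$ are symmetric'') is automatic.

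Next I would apply Theorem \ref{thm:symm_stability} three times, once for each notion of stability ($i=1,2,3$), using the hypothesis $m>2d_i$. Substituting the values of $d_i$ read from Tables \ref{tab:cnB}--\ref{tab:cndX} then populates Table \ref{tab:symm_sc}: for instance, $m>2t$ for a $t$-dimensional subspace, $m>2s$ (single-point) and $m>4s$ (uniform) for sparse symmetric matrices, $m>2nr$ and $m>4nr$ for symmetric rank-$r$ matrices, and the analogous doubling for \eqref{eq:cssl2}. For the uniform and Gaussian cases I would plug the same $d_i,C_i$ into the probability bounds of Corollaries \ref{cor:symm_uniform} and \ref{cor:symm_gaussian}, and choose $R$ (and $\sigma$) so that $m\cdot\theta_{\rmD,R}$ is small and the $\delta$-dependent term vanishes as $\delta\to 0$, in the same manner as the discussion immediately following Theorem \ref{thm:stability}.

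There is no serious obstacle: the whole argument is bookkeeping together with a check that the symmetric constraint sets are among those for which Section \ref{sec:cn} provides bounds. The one point requiring care is to track where the factor of two comes from. It does \emph{not} arise from any inflation in the covering numbers---the symmetric cases in Propositions \ref{pro:cnB}--\ref{pro:cndX} already have the correct exponents---but rather from the quadratic form $a^{T} X a$ in the symmetric rank-1 measurement model, which is precisely why Theorem \ref{thm:symm_stability} requires $m>2d_i$ in place of $m>d_i$. This is the same doubling flagged by the $2n$-versus-$n$ phase-retrieval remark after Theorem \ref{thm:symm_stability}, and it is exactly what Table \ref{tab:symm_sc} records relative to Table \ref{tab:sc}.
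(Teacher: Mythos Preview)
Your proposal is correct and follows exactly the approach the paper takes: the paper does not supply a separate proof of Corollary~\ref{cor:symm_mr} but simply states that it follows by combining Theorem~\ref{thm:symm_stability} (and Corollaries~\ref{cor:symm_uniform}, \ref{cor:symm_gaussian}) with the covering-number bounds of Section~\ref{sec:cnmd}, which is precisely the bookkeeping you describe. Your observation that the factor of two in Table~\ref{tab:symm_sc} originates from the $m>2d_i$ requirement of Theorem~\ref{thm:symm_stability} (not from inflated covering numbers) is also the correct reading.
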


\begin{table}[htbp]%
\renewcommand{\arraystretch}{1.3}
\begin{center}
\begin{tabular}{ >{\centering\arraybackslash}m{0.5in} >{\centering\arraybackslash}m{1.5in} >{\centering\arraybackslash}m{1.5in} >{\centering\arraybackslash}m{1.5in} >{\centering\arraybackslash}m{0in} }
\hline
$\Omega_\calX$ & Single point stability on $\Omega_\calB$ & Uniform stability on $\Omega_\calB$ & Uniform stability on $\Omega_\calX$ &\\
\hline
\eqref{eq:cssp} & $m>2t$ & $m>2t$ & $m>2t$ &\\
\eqref{eq:css} & $m>2s$ & $m>4s$ & $m>4s$ &\\
\eqref{eq:csl2} & $m>2nr$ & $m>4nr$ & $m>4nr$ &\\
\eqref{eq:cssl2} & $m>2sr$ & $m>4sr$ & $m>8sr$ &\\
\hline
\end{tabular}
\end{center}
\caption{A summary of sample complexities for stable recovery using symmetric rank-1 measurement matrices.}
\label{tab:symm_sc}
\end{table}


\section{Proofs of Main Results}

\subsection{Proof for Unstructured Measurement Matrices}

\begin{proof}[Proof of Theorem \ref{thm:stability}]
We start with single point stability on $\Omega_\calB$. The measurement matrices $\{A_j\}_{j=1}^{m}$ are i.i.d. random matrices following distribution $\rmD$, which we denote by $\rmD^m$. Then the probability of failure for single point stability is:
\begin{align}
P_1 \defeq &~ 1-\bbP_{\rmD^m}\left[ \forall X\in\Omega_\calB, \text{ if }\norm{\calA(X)-\calA(X_0)}_2\leq \delta, \text{ then } \norm{X-X_0}_\rmF\leq \varepsilon \right] \nonumber\\
= &~ \bbP_{\rmD^m}\left[ \exists X\in\Omega_\calB, \text{ s.t. }\norm{\calA(X)-\calA(X_0)}_2\leq \delta, \text{ and } \norm{X-X_0}_\rmF > \varepsilon \right] \nonumber\\
= &~ \bbP_{\rmD^m}\left[ \exists X\in\Omega_\calB - X_0, \text{ s.t. } \norm{X}_\rmF > \varepsilon \text{ and } \norm{\calA(X)}_2\leq \delta \right] \label{eq:prob_sing_bd}
\end{align}
Define $\Omega_{\varepsilon}\defeq \{X\in\Omega_\calB - X_0: \norm{X}_\rmF>\varepsilon\}$. Then the probability of failure (unstable recovery) is:
\begin{align}
P_1 = & \bbP_{\rmD^m}\left[ \exists X\in\Omega_\varepsilon \text{ s.t. } \norm{\calA(X)}_2\leq \delta \right] \nonumber\\
\leq & \bbP_{\rmD^m}\left[\norm{A_j}_\rmF \leq R,\forall j\in[m],~\text{and}~ \exists X\in\Omega_\varepsilon \text{ s.t. } \norm{\calA(X)}_2\leq \delta \right] + \bbP_{\rmD^m}\left[\exists j\in[m]~\text{s.t.}~\norm{A_j}_\rmF > R \right] \nonumber\\
\leq & \bbP_{\rmD^m}\left[\norm{A_j}_\rmF \leq R,\forall j\in[m],~\text{and}~ \exists X\in\Omega_\varepsilon \text{ s.t. } \norm{\calA(X)}_2\leq \delta \right] + m \cdot \theta_{\rmD,R}, \label{eq:condition_bd}
\end{align}
where \eqref{eq:condition_bd} follows from a union bound and \eqref{eq:def_theta1} in assumption (A2). To complete the proof, we need to bound the first term. 

We form a minimal cover of $\Omega_{\varepsilon}$ with balls of radius $\rho = \frac{\delta}{R}<\rho_1$ centered at the points $\{X_i\}_{i=1}^{N_{\Omega_{\varepsilon}}(\rho)}$. The centers of the balls are not necessarily in $\Omega_{\varepsilon}$. However, by the minimality of the cover, the intersection of $\Omega_{\varepsilon}$ with each ball is nonempty, hence there exists another set of points $\{X'_i\}_{i=1}^{N_{\Omega_{\varepsilon}}(\rho)}$ such that
\[
X'_i \in \Omega_{\varepsilon} \bigcap (X_i+\rho\calB_{n_1\times n_2}), \quad i = 1,2,\cdots,N_{\Omega_\varepsilon}(\rho).
\]
Now we can cover $\Omega_\varepsilon$ with balls of radius $2\rho$ centered at $\{X'_i\}_{i=1}^{N_{\Omega_\varepsilon}(\rho)}$, which are points in $\Omega_\varepsilon$  (a property that will be needed for inequality \eqref{eq:use_A2} below), because
\[
(X_i+\rho\calB_{n_1\times n_2}) \subset (X'_i+2\rho\calB_{n_1\times n_2}), \quad i = 1,2,\cdots,N_{\Omega_\varepsilon}(\rho),
\]
\[
\Omega_\varepsilon \subset \bigcup_{1\leq i\leq N_{\Omega_\varepsilon}(\rho)} (X_i+\rho\calB_{n_1\times n_2}) \subset \bigcup_{1\leq i\leq N_{\Omega_\varepsilon}(\rho)} (X'_i+2\rho\calB_{n_1\times n_2}).
\]

Therefore, the first term in \eqref{eq:condition_bd} satisfies:
\begin{align}
& \bbP_{\rmD^m}\left[\norm{A_j}_\rmF \leq R,\forall j\in[m],~\text{and}~ \exists X\in\Omega_\varepsilon \text{ s.t. } \norm{\calA(X)}_2\leq \delta \right] \nonumber\\
\leq & \sum_{i=1}^{N_{\Omega_\varepsilon}(\rho)} \bbP_{\rmD^m}\left[\norm{A_j}_\rmF \leq R,\forall j\in[m],~\text{and}~\exists X\in (X'_i+2\rho\calB_{n_1\times n_2}), \text{ s.t. } \norm{\calA(X)}_2\leq \delta \right] \label{eq:union} \\
\leq & \sum_{i=1}^{N_{\Omega_\varepsilon}(\rho)} \bbP_{\rmD^m}\left[\norm{A_j}_\rmF \leq R,\forall j\in[m],~\text{and}~ \exists X\in (X'_i+2\rho\calB_{n_1\times n_2}), \text{ s.t. } \left|\left<A_j,X\right>\right| \leq \delta, \forall j\in[m] \right] \label{eq:relax}\\
\leq & \sum_{i=1}^{N_{\Omega_\varepsilon}(\rho)} \bbP_{\rmD^m}\left[\norm{A_j}_\rmF \leq R,~\left|\left<A_j,X'_i\right>\right| \leq 3\delta, \forall j\in[m] \right] \label{eq:triangle}\\
= & \sum_{i=1}^{N_{\Omega_\varepsilon}(\rho)} \Big(\bbP_\rmD\left[\norm{A_1}_\rmF \leq R,~\left|\left<A_1,X'_i\right>\right|\leq 3\delta\right]\Big)^m \label{eq:independent}\\
\leq & N_{\Omega_\varepsilon}(\rho) \left(C_{\rmD,R}\cdot\frac{3\delta}{\varepsilon}\right)^m \label{eq:use_A2}\\
\leq & C_1\left(\frac{R}{\delta}\right)^{d_1} \left(C_{\rmD,R}\cdot\frac{3\delta}{\varepsilon}\right)^m.  \label{eq:use_A1}
\end{align}
Inequality \eqref{eq:union} uses a union bound. The event in \eqref{eq:union} implies the event in \eqref{eq:relax}, which then implies the event in \eqref{eq:triangle}. Inequality \eqref{eq:triangle} is due to the following chain of inequalities, of which the last is implied by $\norm{A_j}_\rmF \leq R$, $\norm{X'_i-X}_\rmF\leq 2\rho$, and $\left|\left<A_j,X\right>\right|\leq \delta$:
\begin{align}
\left|\left<A_j,X'_i\right>\right| \leq & \left|\left<A_j,X'_i-X\right>\right| + \left|\left<A_j,X\right>\right| \nonumber\\
\leq & \norm{A_j}_\rmF\norm{X'_i-X}_\rmF + \left|\left<A_j,X\right>\right|  \nonumber\\
\leq & 2R\rho +\delta ~=~ 3\delta. \label{eq:triangle_detail}
\end{align}
Equation \eqref{eq:independent} is due to the fact that $\{A_j\}_{j=1}^{m}$ are i.i.d. random matrices. Inequality \eqref{eq:use_A2} follows from \eqref{eq:concentration1} in assumption (A2). (By construction, $X'_i$, as points in $\Omega_\varepsilon$, satisfy $\norm{X'_i}_\rmF> \varepsilon$.) Inequality \eqref{eq:use_A1} uses the fact that $N_{\Omega_{\varepsilon}}(\rho) \leq N_{ \Omega_\calB }(\rho)=N_{\Omega_\calB}\left(\frac{\delta}{R}\right)$, and \eqref{eq:cn_bound} in assumption (A1). (By assumption, $\frac{\delta}{R}<\rho_1$.) Replacing the first term in \eqref{eq:condition_bd} by \eqref{eq:use_A1}, we have
\[
P_1 \leq C_1\left(3C_{\rmD,R}\right)^m\cdot R^{d_1} \cdot\frac{\delta^{m-d_1}}{\varepsilon^m} + m\cdot \theta_{\rmD,R},
\]
thus completing the proof of single point stability on $\Omega_\calB$.

Next, we prove uniform stability on $\Omega_\calB$ and $\Omega_\calX$. The probability of failure for uniform stability on bounded constraint set $\Omega_\calB$ is:
\begin{align}
P_2 \defeq& 1-\bbP_{\rmD^m}\left[ \forall X_1,X_2\in\Omega_\calB, \text{ if }\norm{\calA(X_1)-\calA(X_2)}_2\leq \delta, \text{ then } \norm{X_1-X_2}_\rmF\leq \varepsilon \right] \nonumber\\
= & \bbP_{\rmD^m}\left[ \exists X_1,X_2\in\Omega_\calB, \text{ s.t. }\norm{\calA(X_1)-\calA(X_2)}_2\leq \delta, \text{ and } \norm{X_1-X_2}_\rmF > \varepsilon \right] \nonumber\\
= & \bbP_{\rmD^m}\left[ \exists X\in\Omega_{\dB}, \text{ s.t. } \norm{X}_\rmF > \varepsilon \text{ and } \norm{\calA(X)}_2\leq \delta \right]. \label{eq:prob_unif_bd}
\end{align}
The probability of failure for uniform stability on unbounded constraint set $\Omega_\calX$ is:
\begin{align}
P_3 \defeq& 1-\bbP_{\rmD^m}\left[ \forall X_1,X_2\in\Omega_\calX, \text{ if }\norm{\calA(X_1)-\calA(X_2)}_2\leq \delta, \text{ then } \norm{X_1-X_2}_\rmF\leq \varepsilon \right] \nonumber\\
= & \bbP_{\rmD^m}\left[ \exists X_1,X_2\in\Omega_\calX, \text{ s.t. }\norm{\calA(X_1)-\calA(X_2)}_2\leq \delta, \text{ and } \norm{X_1-X_2}_\rmF > \varepsilon \right] \nonumber\\
= & \bbP_{\rmD^m}\left[ \exists X\in\Omega_\calX-\Omega_\calX, \text{ s.t. } \norm{X}_\rmF > \varepsilon \text{ and } \norm{\calA(X)}_2\leq \delta \right] \label{eq:prob_unif_ubd0}\\
= & \bbP_{\rmD^m}\left[ \exists X\in\Omega_{\dX}, \text{ s.t. } \norm{X}_\rmF > \varepsilon \text{ and } \norm{\calA(X)}_2\leq \delta \right]. \label{eq:prob_unif_ubd}
\end{align}
The last line owes to the fact that the events in \eqref{eq:prob_unif_ubd0} and \eqref{eq:prob_unif_ubd}, which we denote by $E_1$ and $E_2$, are equivalent for the following reason: First, $\Omega_{\dX}= (\Omega_\calX-\Omega_\calX)\bigcap \calB_{n_1\times n_2} \subset \Omega_\calX-\Omega_\calX$, hence $E_2$ implies $E_1$. Secondly, suppose $E_1$ is true, i.e., there exists $X\in \Omega_\calX-\Omega_\calX$ such that $\norm{X}_\rmF > \varepsilon$ and $\norm{\calA(X)}_2\leq \delta$. If $\norm{X}_\rmF\leq 1$, then $X\in\Omega_{\dX}$ and $E_2$ is true. If $\norm{X}_\rmF> 1$, then $\norm{\frac{X}{\norm{X}_\rmF}}_\rmF = 1>\varepsilon$ and $\norm{\calA(\frac{X}{\norm{X}_\rmF})}_2\leq \frac{\delta}{\norm{X}_\rmF}<\delta$, hence $\frac{X}{\norm{X}_\rmF}\in\Omega_{\dX}$ and $E_2$ is true. In either case, $E_1$ implies $E_2$. Therefore, $E_1$ and $E_2$ are equivalent.

We continue the proof of uniform stability on $\Omega_\calB$ and $\Omega_\calX$. Comparing \eqref{eq:prob_unif_bd} and \eqref{eq:prob_unif_ubd} to \eqref{eq:prob_sing_bd}, we argue that the rest of the proof of single point stability on $\Omega_\calB$ applies, with $\Omega_{\dB}$ and $\Omega_{\dX}$ replacing $\Omega_\calB-X_0$. Therefore, with $d_1,C_1$ replaced by $d_2,C_2$ (resp. $d_3,C_3$) in the sample complexity and the probability of stable recovery, the single point stability result translates to uniform stability, thus completing the proof of uniform stability on $\Omega_\calB$ (resp. $\Omega_\calX$).
\end{proof}

\begin{proof}[Proof of Corollary \ref{cor:uniform}]
Corollary \ref{cor:uniform} follows Theorem \ref{thm:stability}, with the following expressions for $\theta_{\rmU,R}$ and $C_{\rmU,R}$ for uniform distribution on the ball $R\calB_{n_1n_2}$:
\[
\theta_{\rmU,R} = 0,
\]
\[
C_{\rmU,R} = \frac{2 V_{n_1n_2-1}}{R\cdot V_{n_1n_2}}.
\]
The expression for $C_{\rmU,R}$ follows from Lemma \ref{lem:concentration1} in Appendix \ref{app:concentration}. If $\norm{X}_\rmF\geq \varepsilon$, then
\[
\bbP_\rmU\left[\norm{A}_\rmF \leq R,~\left|\left<A,X\right>\right|\leq \delta\right] = \bbP_\rmU\left[\left|\left<A,X\right>\right|\leq \delta\right] \leq \frac{2\delta \cdot V_{n_1n_2-1}}{\varepsilon R\cdot V_{n_1n_2}},
\]
thus we have the expression for $C_{\rmU,R}$.
\end{proof}

\begin{proof}[Proof of Corollary \ref{cor:gaussian}]
Corollary \ref{cor:gaussian} follows Theorem \ref{thm:stability}, with the following expressions for $\theta_{\rmG,R}$ and $C_{\rmG,R}$ for i.i.d. Gaussian distribution $\rmG$:
\begin{align*}
\theta_{\rmG,R} = &~ \bbP_\rmG \left[\norm{A}_\mathrm{F}^2 > R^2\right] \leq e^{-\frac{n_1n_2}{2}\left(\frac{R^2}{n_1n_2\sigma^2}-1-\ln \frac{R^2}{n_1n_2\sigma^2} \right)},\qquad\text{if}~R^2>n_1n_2\sigma^2,
\end{align*}
\[
C_{\rmG,R} = \frac{\sqrt{2}}{\sqrt{\pi}\sigma}.
\]
The expression for $\theta_{\rmG,R}$ follows from Chernoff bound.\footnote{This probability is small. For example, if $R=2\sqrt{n_1n_2}\sigma$, then $\theta_{\rmG,R}\leq e^{-0.8n_1n_2}$.} 
The expression for $C_{\rmG,R}$ follows from Lemma \ref{lem:concentration1_gaussian} in Appendix \ref{app:concentration}. If $\norm{X}_\rmF\geq \varepsilon$, then
\begin{align*}
\bbP_\rmG\left[\norm{A}_\rmF \leq R,~\left|\left<A,X\right>\right|\leq \delta\right] \leq \bbP_\rmG\left[\left|\left<A,X\right>\right|\leq \delta\right] \leq \frac{\sqrt{2}\delta}{\sqrt{\pi}\sigma\varepsilon},
\end{align*}
thus we have the expression for $C_{\rmG,R}$.
\end{proof}

\subsection{Proof for Rank-1 Measurement Matrices}
\begin{proof}[Proof of Theorem \ref{thm:rank1_stability}]
The proof follows steps mostly analogous to those in the proof of Theorem \ref{thm:stability} , with the Frobenius norm replaced by the spectral norm.

Define $\Omega_{\varepsilon}\defeq \{X\in\Omega_\calB - X_0: \norm{X}_2>\varepsilon\}$. Then \eqref{eq:condition_bd} is replaced by:
\begin{align}
P_1 = &~ \bbP_{\rmD_1^m\rmD_2^m}\left[ \exists X\in\Omega_\varepsilon \text{ s.t. } \norm{\calA(X)}_2\leq \delta \right] \nonumber\\
\leq &~ \bbP_{\rmD_1^m\rmD_2^m}\left[\norm{a_j}_2\leq R_1,\norm{b_j}_2 \leq R_2,~\forall j\in[m],~\text{and}~ \exists X\in\Omega_\varepsilon \text{ s.t. } \norm{\calA(X)}_2\leq \delta \right]\nonumber\\
&~ + \bbP_{\rmD_1^m\rmD_2^m}\left[\exists j\in[m]~\text{s.t.}~\norm{a_j}_2 > R_1 ~\text{or}~\norm{b_j}_2 \leq R_2\right] \nonumber\\
\leq &~ \bbP_{\rmD_1^m\rmD_2^m}\left[\norm{a_j}_2\leq R_1,\norm{b_j}_2 \leq R_2,~\forall j\in[m],~\text{and}~ \exists X\in\Omega_\varepsilon \text{ s.t. } \norm{\calA(X)}_2\leq \delta \right]+ m (\theta_{\rmD_1,R_1}+\theta_{\rmD_2,R_2}). \label{eq:condition_bd2}
\end{align}
Let $\rho = \frac{\delta}{R_1R_2}<\rho_1$. To bound the first term, we find points $\{X'_i\}_{i=1}^{N_{\Omega_{\varepsilon}}(\rho)}$ such that
\[
\Omega_{\varepsilon} \subset  \bigcup_{1\leq i\leq N_{\Omega_\varepsilon}(\rho)} (X'_i+2\rho\calB_{n_1\times n_2}).
\]
Then, the first term in \eqref{eq:condition_bd2} satisfies
\begin{align}
& \bbP_{\rmD_1^m\rmD_2^m}\left[\norm{a_j}_2\leq R_1,\norm{b_j}_2 \leq R_2,~\forall j\in[m],~\text{and}~ \exists X\in\Omega_\varepsilon \text{ s.t. } \norm{\calA(X)}_2\leq \delta \right] \nonumber\\
\leq & \sum_{i=1}^{N_{\Omega_\varepsilon}(\rho)} \Bigl(\bbP_{\rmD_1\rmD_2}\left[\norm{a_1}_2\leq R_1,~\norm{b_1}_2\leq R_2,~\left|a_1^TX'_ib_1 \right| \leq 3\delta \right]\Bigr)^m \label{eq:triangle2} \\
\leq & C_1\left(\frac{R_1R_2}{\delta}\right)^{d_1}\left(C_{\rmD_1,\rmD_2,R_1,R_2,3\delta}\cdot\frac{3\delta}{\varepsilon}\right)^m. \label{eq:use_A3}
\end{align}
Inequality \eqref{eq:triangle2} uses the following chain of inequalities:
\begin{align*}
\left|a_j^TX'_ib_j\right| \leq & \left|a_j^T(X'_i-X)b_j\right| + \left|a_j^TXb_j\right| \\
\leq & \norm{a_j}_2\norm{X'_i-X}_2\norm{b_j}_2 + \left|a_j^TXb_j\right|  \\
\leq & \norm{a_j}_2\norm{X'_i-X}_\rmF\norm{b_j}_2 + \left|a_j^TXb_j\right|  \\
\leq & 2R_1R_2\rho +\delta ~=~ 3\delta.
\end{align*}
Inequality \eqref{eq:use_A3} follows from assumptions (A1) and (A3). (By construction, $X'_i$, as points in $\Omega_\varepsilon$, satisfy $\varepsilon < \norm{X'_i}_2 \leq 2$.) To complete the proof for single point stability, we substitute \eqref{eq:use_A3} into \eqref{eq:condition_bd2}.

Uniform stability on $\Omega_\calB$ and $\Omega_\calX$, using rank-1 measurement matrices, can be proved by replacing $\Omega_\calB-X_0$ with $\Omega_{\dB}$ and $\Omega_{\dX}$, respectively.
\end{proof}

\begin{proof}[Proof of Corollary \ref{cor:rank1_uniform}]
Corollary \ref{cor:rank1_uniform} follows Theorem \ref{thm:rank1_stability}, with the following expressions for $\theta_{\rmU_1,R_1}$, $\theta_{\rmU_2,R_2}$ and $C_{\rmU_1,\rmU_2,R_1,R_2,\delta}$:
\[
\theta_{\rmU_1,R_1}=\theta_{\rmU_2,R_2}=0,
\]
\[
C_{\rmU_1,\rmU_2,R_1,R_2,\delta} = \frac{4 V_{n_1-1}\cdot V_{n_2-1}}{R_1R_2\cdot V_{n_1}\cdot V_{n_2}} \left(1+\ln\frac{2R_1R_2}{\delta}\right).
\]
The expression for $C_{\rmU_1,\rmU_2,R_1,R_2,\delta}$ follows from Lemma \ref{lem:concentration2}. If $\varepsilon\leq \norm{X}_2\leq 2$, then
\begin{align}
& \bbP_{\rmU_1\rmU_2}\left[\norm{a}_2\leq R_1,~\norm{b}_2\leq R_2,~\left|a^TXb \right| \leq \delta \right] \nonumber\\
= & \bbP_{\rmU_1\rmU_2}\left[\left|a^TXb \right| \leq \delta\right] \nonumber\\
\leq & \frac{4\delta \cdot V_{n_1-1}\cdot V_{n_2-1}}{\varepsilon R_1R_2\cdot V_{n_1}\cdot V_{n_2}} \left(1+\ln\frac{2R_1R_2}{\delta}\right), \nonumber
\end{align}
thus we have the expression for $C_{\rmU_1,\rmU_2,R_1,R_2,\delta}$.
\end{proof}

\begin{proof}[Proof of Corollary \ref{cor:rank1_gaussian}]
Corollary \ref{cor:rank1_gaussian} follows Theorem \ref{thm:rank1_stability}, with the following expressions for $\theta_{\rmG_1,R_1}$, $\theta_{\rmG_2,R_2}$ and $C_{\rmG_1,\rmG_2,R_1,R_2,\delta}$:
\[
\text{For}~i=1,2,\quad \theta_{\rmG_i,R_i} = \bbP_{\rmG_i} \left[\norm{a}_2^2 > R_i^2\right] \leq e^{-\frac{n_i}{2}\left(\frac{R_i^2}{n_i\sigma_i^2}-1-\ln \frac{R_i^2}{n_i\sigma_i^2} \right)},\qquad\text{if}~R_i^2>n_i\sigma_i^2,
\]
\[
C_{\rmG_1,\rmG_2,R_1,R_2,\delta} = \frac{1}{\sigma_1\sigma_2}\left(1+\ln\left(1+\frac{2\sigma_1\sigma_2}{\delta}\right)\right).
\]
The expressions for $\theta_{\rmG_i,R_i}$ follows from Chernoff bound. The expression for $C_{\rmG_1,\rmG_2,R_1,R_2,\delta}$ follows from Lemma \ref{lem:concentration2_gaussian}. If $\varepsilon\leq \norm{X}_2\leq 2$, then
\begin{align*}
\bbP_{\rmG_1\rmG_2}\left[\norm{a}_2\leq R_1,~\norm{b}_2\leq R_2,~\left|a^TXb \right| \leq\delta \right] \leq \bbP_{\rmG_1\rmG_2}\left[\left|a^TXb \right| \leq \delta \right] \leq  \frac{\delta}{\varepsilon\sigma_1\sigma_2}\left(1+\ln\left(1+\frac{2\sigma_1\sigma_2}{\delta}\right)\right),
\end{align*}
thus we have the expression for $C_{\rmG_1,\rmG_2,R_1,R_2,\delta}$. 
\end{proof}


\subsection{Proof for Symmetric Rank-1 Measurement Matrices}
\begin{proof}[Proof of Theorem \ref{thm:symm_stability}]
The proof follows steps mostly analogous to those in the proofs of Theorems \ref{thm:stability} and \ref{thm:rank1_stability}.

Define $\Omega_{\varepsilon}\defeq \{X\in\Omega_\calB - X_0: \norm{X}_2>\varepsilon\}$. Then \eqref{eq:condition_bd} is replaced by:
\begin{align}
P_1 \leq &~ \bbP_{\rmD^m}\left[\norm{a_j}_2\leq R,~\forall j\in[m],~\text{and}~ \exists X\in\Omega_\varepsilon \text{ s.t. } \norm{\calA(X)}_2\leq \delta \right]+ m \cdot \theta_{\rmD,R}. \label{eq:condition_bd3}
\end{align}
Let $\rho = \frac{\delta}{R^2}<\rho_1$. To bound the first term, we find points $\{X'_i\}_{i=1}^{N_{\Omega_{\varepsilon}}(\rho)}$ such that
\[
\Omega_{\varepsilon} \subset  \bigcup_{1\leq i\leq N_{\Omega_\varepsilon}(\rho)} (X'_i+2\rho\calB_{n\times n}).
\]
Then, the first term in \eqref{eq:condition_bd3} satisfies:
\begin{align}
& \bbP_{\rmD^m}\left[\norm{a_j}_2\leq R,~\forall j\in[m],~\text{and}~ \exists X\in\Omega_\varepsilon \text{ s.t. } \norm{\calA(X)}_2\leq \delta \right] \nonumber\\
\leq & \sum_{i=1}^{N_{\Omega_\varepsilon}(\rho)} \Bigl(\bbP_{\rmD}\left[\norm{a_1}_2\leq R,~\left|a_1^TX'_ia_1 \right| \leq 3\delta \right]\Bigr)^m \label{eq:triangle3} \\
\leq & C_1\left(\frac{R^2}{\delta}\right)^{d_1}\left(C_{\rmD,R}\cdot\sqrt{\frac{3\delta}{\varepsilon}}\right)^m. \label{eq:use_A4}
\end{align}
Inequality \eqref{eq:triangle3} uses the following chain of inequalities:
\begin{align*}
\left|a_j^TX'_ia_j\right| \leq & \left|a_j^T(X'_i-X)a_j\right| + \left|a_j^TXa_j\right| \\
\leq & \norm{a_j}_2\norm{X'_i-X}_2\norm{a_j}_2 + \left|a_j^TXa_j\right|  \\
\leq & \norm{a_j}_2\norm{X'_i-X}_\rmF\norm{a_j}_2 + \left|a_j^TXa_j\right|  \\
\leq & 2R^2\rho +\delta ~=~ 3\delta. 
\end{align*}
Inequality \eqref{eq:use_A4} follows from assumptions (A1) and (A4). (By construction, $X'_i$, as points in $\Omega_\varepsilon$, satisfy $\norm{X'_i}_2 > \varepsilon$.) To complete the proof for single point stability, we substitute \eqref{eq:use_A4} into \eqref{eq:condition_bd3}.

Uniform stability on $\Omega_\calB$ and $\Omega_\calX$, using symmetric rank-1 measurement matrices, can be proved by replacing $\Omega_\calB-X_0$ with $\Omega_{\dB}$ and $\Omega_{\dX}$, respectively.
\end{proof}

\begin{proof}[Proof of Corollary \ref{cor:symm_uniform}]
Corollary \ref{cor:symm_uniform} follows Theorem \ref{thm:symm_stability}, with the following expressions for $\theta_{\rmU,R}$ and $C_{\rmU,R}$:
\[
\theta_{\rmU,R} = 0,
\]
\[
C_{\rmU,R} = \frac{2\sqrt{2} V_{n-1}}{ R\cdot V_n}.
\]
The expression for $C_{\rmU,R}$ follows from Lemma \ref{lem:concentration3}. If $\norm{X}_2\geq \varepsilon$, then
\begin{align}
& \bbP_{\rmU}\left[\norm{a}_2\leq R,~\left|a^TXa \right| \leq \delta \right] \nonumber\\
\leq & \bbP_{\rmU}\left[\left|a^TXa\right| \leq \delta\right] \nonumber\\
\leq & \frac{2\sqrt{2\delta} \cdot V_{n-1}}{\sqrt{\varepsilon} R\cdot V_n},
\end{align}
thus we have the expression for $C_{\rmU,R}$.
\end{proof}

\begin{proof}[Proof of Corollary \ref{cor:symm_gaussian}]
Corollary \ref{cor:symm_gaussian} follows Theorem \ref{thm:symm_stability}, with the following expressions for $\theta_{\rmG,R}$ and $C_{\rmG,R}$:
\[
\theta_{\rmG,R} = \bbP_{\rmG} \left[\norm{a}_2^2 > R^2\right] \leq e^{-\frac{n}{2}\left(\frac{R^2}{n\sigma^2}-1-\ln \frac{R^2}{n\sigma^2} \right)},\qquad\text{if}~R^2>n\sigma^2,
\]
\[
C_{\rmG,R} = \frac{2}{\sqrt{\pi}\sigma}.
\]
The expression for $\theta_{\rmG,R}$ follows from Chernoff bound. The expression for $C_{\rmG,R}$ follows from Lemma \ref{lem:concentration3_gaussian}. If $\norm{X}_2\geq \varepsilon$, then
\begin{align*}
\bbP_{\rmG}\left[\norm{a}_2\leq R,~\left|a^TXa \right| \leq \delta \right] \leq \bbP_{\rmG}\left[\left|a^TXa\right| \leq \delta \right] \leq  \frac{2\sqrt{\delta}}{\sqrt{\pi\varepsilon}\sigma},
\end{align*}
thus we have the expression for $C_{\rmG,R}$.
\end{proof}

\section{Nonlinearity in Matrix Recovery}
\subsection{Parameterized Constraint Set}
We addressed single point stability and uniform stability on a bounded constraint set $\Omega_\calB$ in Section \ref{sec:main} as part of the main results. The bounded constraint sets we considered so far all satisfy $\Omega_\calB=\Omega_\calX \bigcap \calB_{n_1\times n_2}$, where $\Omega_\calX$ is a cone \eqref{eq:cssp} -- \eqref{eq:cssl2}. However, stability on a bounded constraint set only requires that $\Omega_\calB$ has a bound on its covering number, $N_{\Omega_\calB}(\rho)\leq C_1(\frac{1}{\rho})^{d_1}$, where $d_1$ is an upper bound on the number of degrees of freedom. In this section, we show that this type of covering number bound can derived for a much larger class of bounded constraint sets with a small number of degrees of freedom, and hence the stability results also apply to these constraint sets.

In \eqref{eq:cssp} and \eqref{eq:css}, the constraint set is a subspace and a union of subspaces, respectively. 

\subsection{Nonlinear Transform of Measurement Matrices}

\subsection{Nonlinear Measurement Operator}

\section{Conclusions} \label{sec:conclusions}
We studied the optimal sample complexity of the matrix recovery problem. If the measurement matrices follow distributions specified in this paper, then under optimal sample complexities, the recovery is stable with high probability against small perturbations in the measurements.

\appendix

\section{Proofs of Covering Number Bounds}\label{app:cn}

\subsection{Useful Lemmas about Covering Numbers}
In this appendix, we prove the bounds on covering numbers in Section \ref{sec:cnmd}. We start with two lemmas, which will be used later.
\begin{lemma}\label{lem:product}
Let $\Omega_\calU$ and $\Omega_\calV$ be nonempty subsets of $\sigma\calB_{n_1\times r}$ and $\sigma\calB_{n_2\times r}$, respectively. If $\Omega_\calB \subset \{UV^T\in \bbR^{n_1\times n_2}: U\in\Omega_\calU, V\in\Omega_\calV\}$, then $N_{\Omega_\calB}(\rho)\leq N_{\Omega_\calU}(\frac{\rho}{2\sigma})N_{\Omega_\calV}(\frac{\rho}{2\sigma})$.
\end{lemma}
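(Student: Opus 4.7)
The plan is to build a $\rho$-cover of $\Omega_\calB$ by taking products of a $\rho/(2\sigma)$-cover of $\Omega_\calU$ with a $\rho/(2\sigma)$-cover of $\Omega_\calV$. Specifically, setting $\epsilon = \rho/(2\sigma)$, I would let $\{U_i\}_{i=1}^{N_{\Omega_\calU}(\epsilon)}$ and $\{V_j\}_{j=1}^{N_{\Omega_\calV}(\epsilon)}$ be the centers of minimal $\epsilon$-covers of $\Omega_\calU$ and $\Omega_\calV$ (in Frobenius norm), respectively, and then prove that the $N_{\Omega_\calU}(\epsilon) N_{\Omega_\calV}(\epsilon)$ outer products $\{U_i V_j^T\}$ form a $\rho$-cover of $\Omega_\calB$, which directly gives the stated inequality.

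The key step is a routine bilinear error bound. For any $X = UV^T \in \Omega_\calB$ with $U \in \Omega_\calU$ and $V \in \Omega_\calV$, pick $U_i$ and $V_j$ in the covers with $\|U - U_i\|_\rmF \leq \epsilon$ and $\|V - V_j\|_\rmF \leq \epsilon$. Writing
\[
UV^T - U_i V_j^T = (U - U_i)V^T + U_i(V - V_j)^T
\]
and applying the submultiplicative inequality $\|AB\|_\rmF \leq \|A\|_\rmF \|B\|_2 \leq \|A\|_\rmF \|B\|_\rmF$ to each term yields
\[
\|UV^T - U_i V_j^T\|_\rmF \leq \|U - U_i\|_\rmF \|V\|_\rmF + \|U_i\|_\rmF \|V - V_j\|_\rmF.
\]
Since $V \in \Omega_\calV \subset \sigma\calB_{n_2 \times r}$, we have $\|V\|_\rmF \leq \sigma$; provided $\|U_i\|_\rmF \leq \sigma$, each summand is at most $\sigma \cdot \epsilon = \rho/2$, so the total is at most $\rho$, as desired.

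The only subtle point — and the main (minor) obstacle — is justifying the bound $\|U_i\|_\rmF \leq \sigma$ on the cover centers, which is not automatic from the definition of $N_{\Omega_\calU}(\epsilon)$ since centers may lie anywhere in $\bbR^{n_1 \times r}$. This is handled at no cost to the cover size by radial projection onto $\sigma\calB_{n_1 \times r}$: if some $U_i$ satisfies $\|U_i\|_\rmF > \sigma$, replace it by $\widetilde U_i = \sigma U_i / \|U_i\|_\rmF$. Because projection onto a convex set is a contraction and $\Omega_\calU \subset \sigma\calB_{n_1 \times r}$, one has $\|U - \widetilde U_i\|_\rmF \leq \|U - U_i\|_\rmF \leq \epsilon$ for every $U \in \Omega_\calU$, so the modified collection is still an $\epsilon$-cover of $\Omega_\calU$ with the same cardinality and with all centers in $\sigma\calB_{n_1 \times r}$. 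The same projection argument applies to $\{V_j\}$, after which the computation above concludes the proof.
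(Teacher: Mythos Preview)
Your proposal is correct and follows essentially the same approach as the paper: build a product cover from $\rho/(2\sigma)$-covers of $\Omega_\calU$ and $\Omega_\calV$, use the same bilinear splitting $UV^T - U_iV_j^T = (U-U_i)V^T + U_i(V-V_j)^T$, and bound each term by $\sigma\cdot\rho/(2\sigma)$. The paper simply asserts that the cover centers lie in $\sigma\calB_{n_1\times r}$ and $\sigma\calB_{n_2\times r}$ without comment; your radial-projection argument is a welcome justification of that point, and otherwise the proofs are identical.
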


\begin{proof}
We cover $\Omega_\calU \subset \sigma\calB_{n_1\times r}$ and $\Omega_\calV \subset \sigma\calB_{n_2\times r}$ with balls of radius $\frac{\rho}{2\sigma}$ centered at the following two sets of points, respectively:
\[
\{U_i\}_{i=1}^{N_{\Omega_\calU}(\frac{\rho}{2\sigma})} \subset \sigma\calB_{n_1\times r},\qquad  \{V_i\}_{i=1}^{N_{\Omega_\calV}(\frac{\rho}{2\sigma})} \subset \sigma\calB_{n_2\times r}.
\]
Since $\Omega_\calB \subset \{UV^T\in \bbR^{n_1\times n_2}: U\in\Omega_\calU, V\in\Omega_\calV\}$, any $X\in\Omega_\calB$ can be written as $X=UV^T$ for some $U\in \Omega_\calU$ and $V\in\Omega_\calV$. Then we can find centers of the above coverings, $U_{i_1}$ and $V_{i_2}$, such that
\[
\norm{U-U_{i_1}}_\rmF \leq \frac{\rho}{2\sigma},\qquad
\norm{V-V_{i_2}}_\rmF \leq \frac{\rho}{2\sigma}.
\]
Then
\begin{align*}
\norm{X-U_{i_1}V_{i_2}^T}_\rmF =& \norm{UV^T-U_{i_1}V^T+U_{i_1}V^T-U_{i_1}V_{i_2}^T}_\rmF\\
\leq & \norm{U-U_{i_1}}_2\norm{V}_\rmF+\norm{V-V_{i_2}}_2\norm{U_{i_1}}_\rmF\\
\leq & \norm{U-U_{i_1}}_\rmF\norm{V}_\rmF+\norm{V-V_{i_2}}_\rmF\norm{U_{i_1}}_\rmF\\
\leq & \frac{\rho}{2\sigma}\times \sigma\times 2 = \rho.
\end{align*}
Therefore, the set $\Omega_\calB$ can be covered by $N_{\Omega_\calU}(\frac{\rho}{2\sigma})N_{\Omega_\calV}(\frac{\rho}{2\sigma})$ balls in $\bbR^{n_1\times n_2}$ of radius $\rho$, centered at the matrices (like $U_{i_1}V_{i_2}^T$) generated by the centers of the coverings of $\Omega_\calU$ and $\Omega_\calV$. It follows that
\[
N_{\Omega_\calB}(\rho)\leq N_{\Omega_\calU}(\frac{\rho}{2\sigma})N_{\Omega_\calV}(\frac{\rho}{2\sigma}).
\]
\end{proof}

\begin{lemma}\label{lem:symm_product}
Let $\Omega_\calU$ be a nonempty subset of $\sigma\calB_{n\times r}$. If $\Lambda$ is a diagonal matrix whose entries are $\pm 1$, and $\Omega_\calB \subset \{U\Lambda U^T\in \bbR^{n\times n}: U\in\Omega_\calU\}$, then $N_{\Omega_\calB}(\rho)\leq N_{\Omega_\calU}(\frac{\rho}{2\sigma})$.
\end{lemma}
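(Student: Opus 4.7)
The plan is to mirror the proof of Lemma \ref{lem:product}, with the single factor $U$ playing the role of both $U$ and $V$, and the diagonal sign matrix $\Lambda$ handled by a separate trivial norm bound. First, I would cover $\Omega_\calU \subset \sigma\calB_{n\times r}$ with $N_{\Omega_\calU}(\rho/(2\sigma))$ Frobenius balls of radius $\rho/(2\sigma)$ centered at points $\{U_i\}$. By the hypothesis $\Omega_\calB \subset \{U\Lambda U^T : U\in\Omega_\calU\}$, every $X\in\Omega_\calB$ can be written as $X=U\Lambda U^T$ for some $U\in\Omega_\calU$, and the covering gives a center $U_i$ with $\norm{U-U_i}_\rmF \leq \rho/(2\sigma)$. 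The candidate $\rho$-net for $\Omega_\calB$ is then $\{U_i\Lambda U_i^T\}$, which has at most $N_{\Omega_\calU}(\rho/(2\sigma))$ elements.

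Second, I would verify the net is genuinely a $\rho$-cover via the telescoping identity
\[
U\Lambda U^T - U_i\Lambda U_i^T = (U-U_i)\Lambda U^T + U_i\Lambda (U-U_i)^T.
\]
Applying the triangle inequality and the standard bound $\norm{AB}_\rmF \leq \norm{A}_\rmF\norm{B}_2$ (as in the proof of Lemma \ref{lem:product}), together with $\norm{\Lambda}_2 = 1$ because the diagonal entries of $\Lambda$ are $\pm 1$, and with $\norm{U}_\rmF, \norm{U_i}_\rmF \leq \sigma$, each of the two summands is bounded by $(\rho/(2\sigma))\cdot 1 \cdot \sigma = \rho/2$, so the total Frobenius distance is at most $\rho$.

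There is no real obstacle here; the only subtle point compared with Lemma \ref{lem:product} is that the two factors are not independent but share the matrix $U$, and we must verify that the sign matrix $\Lambda$ in the middle does not inflate the estimate. This is immediate from $\norm{\Lambda}_2=1$, and the square of the covering number drops to a single factor precisely because only one underlying matrix $U$ varies. Combining the two steps yields $N_{\Omega_\calB}(\rho)\leq N_{\Omega_\calU}(\rho/(2\sigma))$, as claimed.
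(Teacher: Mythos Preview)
Your proposal is correct and follows essentially the same argument as the paper: cover $\Omega_\calU$ with balls of radius $\rho/(2\sigma)$ centered in $\sigma\calB_{n\times r}$, use the telescoping identity $U\Lambda U^T - U_i\Lambda U_i^T = (U-U_i)\Lambda U^T + U_i\Lambda (U-U_i)^T$, and bound each term via submultiplicativity together with $\norm{\Lambda}_2=1$ and $\norm{U}_\rmF,\norm{U_i}_\rmF\leq\sigma$. The only point to make explicit is that the centers $U_i$ may be taken in $\sigma\calB_{n\times r}$ (which you use when invoking $\norm{U_i}_\rmF\leq\sigma$); this is immediate since projecting any center onto the convex set $\sigma\calB_{n\times r}\supset\Omega_\calU$ does not increase distances.
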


\begin{proof}
We cover $\Omega_\calU \subset \sigma\calB_{n\times r}$ with balls of radius $\frac{\rho}{2\sigma}$ centered at the following set of points:
\[
\{U_i\}_{i=1}^{N_{\Omega_\calU}(\frac{\rho}{2\sigma})} \subset \sigma\calB_{n\times r}.
\]
Since $\Omega_\calB \subset \{U\Lambda U^T\in \bbR^{n\times n}: U\in\Omega_\calU\}$, any $X\in\Omega_\calB$ can be written as $X=U\Lambda U^T$ for some $U\in \Omega_\calU$. Then we can find a center of the above covering, $U_i$, such that
\[
\norm{U-U_i}_\rmF \leq \frac{\rho}{2\sigma}.
\]
Then
\begin{align*}
\norm{X-U_i\Lambda U_i^T}_\rmF =& \norm{U\Lambda U^T-U_i\Lambda U^T+U_i\Lambda U^T-U_i\Lambda U_i^T}_\rmF\\
\leq & \norm{U-U_i}_2\norm{\Lambda}_2\norm{U}_\rmF+\norm{U_i}_2\norm{\Lambda}_2\norm{U-U_i}_\rmF\\
\leq & \norm{U-U_i}_\rmF\norm{U}_\rmF+\norm{U_i}_\rmF\norm{U-U_i}_\rmF\\
\leq & \frac{\rho}{2\sigma}\times \sigma\times 2 = \rho.
\end{align*}
Therefore, the set $\Omega_\calB$ can be covered by $N_{\Omega_\calU}(\frac{\rho}{2\sigma})$ balls in $\bbR^{n_n\times n_n}$ of radius $\rho$, centered at the matrices (like $U_i\Lambda U_i^T$) generated by the centers of the coverings of $\Omega_\calU$. It follows that
\[
N_{\Omega_\calB}(\rho)\leq N_{\Omega_\calU}(\frac{\rho}{2\sigma}).
\]
\end{proof}

\begin{lemma}\label{lem:difference}
Let $\Omega_\calB$ be a nonempty bounded subset of $\bbR^{n_1\times n_2}$. Then $N_{\Omega_\calB-\Omega_\calB}(\rho)\leq N_{\Omega_\calB}(\frac{\rho}{2})^2$, and $\overline{\dim}_\mathrm{B}(\Omega_\calB-\Omega_\calB)\leq 2\overline{\dim}_\mathrm{B}(\Omega_\calB)$.
\end{lemma}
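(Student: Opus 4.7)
The plan is to prove the covering-number inequality by a direct product-of-covers construction, then deduce the Minkowski-dimension inequality by taking logarithms and passing to the $\limsup$.

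For the first part, I would start with a minimal $(\rho/2)$-cover of $\Omega_\calB$ centered at points $\{X_i\}_{i=1}^{N}$, where $N=N_{\Omega_\calB}(\rho/2)$. Any element $Z\in\Omega_\calB-\Omega_\calB$ has a representation $Z=X-Y$ with $X,Y\in\Omega_\calB$, so there exist indices $i,j$ with $\norm{X-X_i}_\rmF\leq\rho/2$ and $\norm{Y-X_j}_\rmF\leq\rho/2$. By the triangle inequality,
\[
\norm{Z-(X_i-X_j)}_\rmF \leq \norm{X-X_i}_\rmF + \norm{Y-X_j}_\rmF \leq \rho.
\]
Thus the $N^2$ pairwise differences $\{X_i-X_j\}$ furnish a $\rho$-cover of $\Omega_\calB-\Omega_\calB$, establishing $N_{\Omega_\calB-\Omega_\calB}(\rho)\leq N_{\Omega_\calB}(\rho/2)^2$.

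For the Minkowski-dimension bound, I would take logarithms of the covering inequality, getting $\log N_{\Omega_\calB-\Omega_\calB}(\rho)\leq 2\log N_{\Omega_\calB}(\rho/2)$, and divide by $\log(1/\rho)$. Writing $\log(1/\rho) = \log(2/\rho) - \log 2$, one has $\log(2/\rho)/\log(1/\rho)\to 1$ as $\rho\to 0$, so
\[
\limsup_{\rho\to 0}\frac{\log N_{\Omega_\calB-\Omega_\calB}(\rho)}{\log(1/\rho)} \leq 2\limsup_{\rho\to 0}\frac{\log N_{\Omega_\calB}(\rho/2)}{\log(2/\rho)} \cdot \lim_{\rho\to 0}\frac{\log(2/\rho)}{\log(1/\rho)} = 2\,\overline{\dim}_\mathrm{B}(\Omega_\calB),
\]
after the change of variable $\rho'=\rho/2$ in the $\limsup$. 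This yields $\overline{\dim}_\mathrm{B}(\Omega_\calB-\Omega_\calB)\leq 2\overline{\dim}_\mathrm{B}(\Omega_\calB)$.

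Neither step presents a real obstacle; the only subtlety is the $\log 2$ adjustment that appears when converting $\log N_{\Omega_\calB}(\rho/2)/\log(1/\rho)$ into the form matching the definition of $\overline{\dim}_\mathrm{B}$, which is dispatched by noting that the ratio $\log(2/\rho)/\log(1/\rho)$ tends to $1$. Boundedness of $\Omega_\calB$ is used implicitly so that the covering numbers are finite and the definition of upper Minkowski dimension from Definition \ref{def:minkowski} applies.
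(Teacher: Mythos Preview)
Your proof is correct and follows essentially the same approach as the paper: build a $\rho$-cover of $\Omega_\calB-\Omega_\calB$ from pairwise differences of a $(\rho/2)$-cover of $\Omega_\calB$, then pass to the $\limsup$ for the Minkowski-dimension bound. Your explicit handling of the $\log 2$ adjustment is a bit more detailed than the paper's one-line equality, but the argument is the same.
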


\begin{proof}
We cover $\Omega_\calB$ with balls of radius $\frac{\rho}{2}$ centered at $\{X_i\}_{i=1}^{N_{\Omega_\calB}(\frac{\rho}{2})}$. Any point $X_1-X_2\in\Omega_\calB-\Omega_\calB$ is the difference of two points in $\Omega_\calB$. Then we can find centers of the above covering, $X_{i_1}$ and $X_{i_2}$, such that
\[
\norm{X_1-X_{i_1}}_\rmF \leq \frac{\rho}{2},\qquad
\norm{X_2-X_{i_2}}_\rmF \leq \frac{\rho}{2}.
\]
Then
\begin{align*}
\norm{(X_1-X_2)-(X_{i_1}-X_{i_2})}_\rmF =& \norm{(X_1-X_{i_1})-(X_2-X_{i_2})}_\rmF\\
\leq & \norm{X_1-X_{i_1}}_\rmF+\norm{X_2-X_{i_2}}_\rmF\\
\leq & \frac{\rho}{2}\times 2 = \rho.
\end{align*}
Therefore, the set $\Omega_\calB-\Omega_\calB$ can be covered by $N_{\Omega_\calB}(\frac{\rho}{2})^2$ balls in $\bbR^{n_1\times n_2}$ of radius $\rho$, centered at the matrices (like $X_{i_1}-X_{i_2}$) generated by the centers of the coverings of $\Omega_\calB$. It follows that
\[
N_{\Omega_\calB-\Omega_\calB}(\rho)\leq N_{\Omega_\calB}(\frac{\rho}{2})^2.
\]
Therefore,
\[
\overline{\dim}_\mathrm{B}(\Omega_\calB-\Omega_\calB) = \underset{\rho\rightarrow 0}{\lim\sup}\frac{\log N_{\Omega_\calB-\Omega_\calB}(\rho)}{\log\frac{1}{\rho}} \leq \underset{\rho\rightarrow 0}{\lim\sup}\frac{2\log N_{\Omega_\calB}(\frac{\rho}{2})}{\log\frac{1}{\rho}} = 2\overline{\dim}_\mathrm{B}(\Omega_\calB).
\]
\end{proof}

\subsection{Proof of Proposition \ref{pro:cnB}}

Next, we prove Proposition \ref{pro:cnB}. We split the proof into six parts, bounding the covering numbers of different $\Omega_\calB$'s corresponding to different $\Omega_\calX$'s defined by \eqref{eq:cssp} -- \eqref{eq:cssl2}.

\begin{proof}[Proof of Proposition \ref{pro:cnB} ($\Omega_\calX$ defined by \eqref{eq:cssp})]
Since $\{M_i\}_{i=1}^{t}$ is an orthonormal basis, we have $\norm{\sum_{i=1}^{t}\beta^{(i)} M_i}_\rmF=\norm{\beta}_2$. Hence
\[
\Omega_\calB = \Omega_\calX \bigcap \calB_{n_1\times n_2} = \{X \in\bbR^{n_1\times n_2}:\exists \beta\in\calB_{t},~\text{s.t.}~X = \sum_{i=1}^{t}\beta^{(i)} M_i\}.
\]
We cover $\calB_t$ with balls of radius $\rho$ centered at the points $\{\beta_j\}_{j=1}^{N_{\calB_t}(\rho)}$. Then for every $X = \sum_{i=1}^{t}\beta^{(i)} M_i\in\Omega_\calB$, there exists a center $\beta_j$ such that $\norm{\beta-\beta_j}\leq \rho$, and hence
\[
\norm{X-\sum_{i=1}^{t}\beta_j^{(i)} M_i}_\rmF = \norm{\sum_{i=1}^{t}(\beta^{(i)}-\beta_j^{(i)}) M_i}_\rmF = \norm{\beta-\beta_j}_2 \leq \rho.
\]
Then we can cover $\Omega_\calB$ with $N_{\calB_t}(\rho)$ balls of radius $\rho$, centered at points $\left\{\sum_{i=1}^{t}\beta_j^{(i)} M_i\right\}_{j=1}^{N_{\calB_t}(\rho)}$. Therefore, for $0<\rho< 1$,
\[
N_{\Omega_\calB}(\rho)\leq N_{\calB_t}(\rho)\leq \left(\frac{3}{\rho}\right)^t.
\]
The covering number of the ball $\calB_t$ follows from a standard volume argument \cite{Pollard1990}.
\end{proof}

\begin{proof}[Proof of Proposition \ref{pro:cnB} ($\Omega_\calX$ defined by \eqref{eq:css})]
The set $\Omega_\calB=\Omega_\calX\bigcap \calB_{n_1\times n_2}$ is
\begin{align*}
\Omega_\calB = & \{X\in\bbR^{n_1\times n_2}: \exists \beta\in\bbR^t,~\text{s.t.}~\norm{\beta}_0\leq s,~X = \sum_{i=1}^{t}\beta^{(i)} M_i, ~ \norm{X}_\rmF \leq 1\}\\
\subset & \{X\in\bbR^{n_1\times n_2}: \exists \beta\in\bbR^t,~\text{s.t.}~\norm{\beta}_0\leq s, ~ \norm{\beta}_2 \leq \frac{1}{\sigma_{s,\min}},~X = \sum_{i=1}^{t}\beta^{(i)} M_i\}\\
= & \{X\in\bbR^{n_1\times n_2}: \exists \beta\in \frac{1}{\sigma_{s,\min}}\calB_t,~\text{s.t.}~\norm{\beta}_0\leq s, ~X = \sum_{i=1}^{t}\beta^{(i)} M_i\}.
\end{align*}
Define set $\Omega_\beta \defeq \{\beta\in \frac{1}{\sigma_{s,\min}}\calB_t:~\norm{\beta}_0\leq s\}$. If we cover $\Omega_\beta$ with balls of radius $\frac{\rho}{\sigma_{2s,\max}}$, centered at $\{\beta_j\}_{j=1}^{\rho/\sigma_{2s,\max}}\subset \Omega_\beta$, then for every $X=\sum_{i=1}^{t}\beta^{(i)} M_i\in\Omega_\calB$, there exists a center $\beta_j$ in the above cover that satisfies $\norm{\beta-\beta_j}_2\leq \frac{\rho}{\sigma_{2s,\max}}$, and hence
\[
\norm{X-\sum_{i=1}^{t}\beta_j^{(i)} M_i}_\rmF = \norm{\sum_{i=1}^{t}(\beta^{(i)}-\beta_j^{(i)}) M_i}_\rmF \leq \sigma_{2s,\max} \norm{\beta-\beta_j}_2 \leq \rho.
\]
Therefore, the covering number of $\Omega_\calB$ satisfies:
\[
N_{\Omega_\calB}\leq N_{\Omega_\beta}\left(\frac{\rho}{\sigma_{2s,\max}}\right) \leq {t \choose s}\left(3\cdot \frac{1}{\sigma_{s,\min}}\cdot\frac{\sigma_{2s,\max}}{\rho}\right)^s\leq {t \choose s}\left(\frac{3\kappa_{2s}}{\rho}\right)^s.
\]
The second inequality is due to the fact that $\Omega_\beta$ is the union of ${t \choose s}$ balls in subspaces of dimension $s$, of radius $\frac{1}{\sigma_{s,\min}}$. The third inequality follows from $\sigma_{s,\min}\geq \sigma_{2s,\min}$.
\end{proof}

\begin{proof}[Proof of Proposition \ref{pro:cnB} ($\Omega_\calX$ defined by \eqref{eq:csl})]
For an arbitrary low-rank matrix $X$ in the unit ball,
\[
X\in\Omega_\calB=\{X\in\bbR^{n_1\times n_2}: \norm{X}_\rmF\leq 1, \rank(X)\leq r\},
\]
the singular value decomposition is $X=U\Sigma V^T=(U\Sigma^\frac{1}{2})(V\Sigma^\frac{1}{2})^T$, where $U\in\bbR^{n_1\times r}$ and $V\in\bbR^{n_2\times r}$ have orthonormal columns, and $\Sigma = \diag([\sigma_1,\sigma_2,\cdots,\sigma_r])$. The Frobenius norm of $U\Sigma^\frac{1}{2}$ and $V\Sigma^\frac{1}{2}$ satisfies:
\begin{align*}
\norm{U\Sigma^\frac{1}{2}}_\rmF = \norm{V\Sigma^\frac{1}{2}}_\rmF = \norm{\Sigma^\frac{1}{2}}_\rmF = & \sqrt{\sigma_1+\sigma_2+\cdots+\sigma_r} \leq \sqrt{\sqrt{\sigma_1^2+\sigma_2^2+\cdots+\sigma_r^2}\sqrt{r}} \leq r^\frac{1}{4},
\end{align*}
where the first inequality follows from the Cauchy–Schwarz inequality, and the second inequality is due to the fact $\norm{X}_\rmF=\sqrt{\sigma_1^2+\cdots+\sigma_r^2}\leq 1$.
Therefore,
\[
\Omega_\calB \subset \{UV^T\in\bbR^{n_1\times n_2}: U\in\Omega_\calU, V\in\Omega_\calV\},
\]
where
\[
\Omega_\calU = \{U\in\bbR^{n_1\times r}: \norm{U}_\rmF\leq r^\frac{1}{4}\},\qquad \Omega_\calV = \{V\in\bbR^{n_2\times r}: \norm{V}_\rmF\leq r^\frac{1}{4}\}.
\]
By a standard volume argument:
\[
N_{\Omega_\calU}(\rho) \leq \left(\frac{3r^\frac{1}{4}}{\rho}\right)^{n_1r},\qquad N_{\Omega_\calV}(\rho) \leq \left(\frac{3r^\frac{1}{4}}{\rho}\right)^{n_2r}.
\]
It follows from Lemma \ref{lem:product} that
\begin{align*}
N_{\Omega_\calB}(\rho) \leq N_{\Omega_\calU}(\frac{\rho}{2r^\frac{1}{4}})N_{\Omega_\calV}(\frac{\rho}{2r^\frac{1}{4}})\leq \left(\frac{6\sqrt{r}}{\rho}\right)^{(n_1+n_2)r}.
\end{align*}
\end{proof}

\begin{proof}[Proof of Proposition \ref{pro:cnB} ($\Omega_\calX$ defined by \eqref{eq:cssl})]
The proof is analogous to the previous case, with $\Omega_\calU$ and $\Omega_\calV$ replaced by:
\[
\Omega_\calU = \{U\in\bbR^{n_1\times r}: \norm{U}_{\mathrm{r},0}\leq s_1,\norm{U}_\rmF\leq r^\frac{1}{4}\},\qquad \Omega_\calV = \{V\in\bbR^{n_2\times r}: \norm{V}_{\mathrm{r},0}\leq s_2,\norm{V}_\rmF\leq r^\frac{1}{4}\},
\]
\[
N_{\Omega_\calU}(\rho) \leq {n_1\choose s_1}\left(\frac{3r^\frac{1}{4}}{\rho}\right)^{s_1r},\qquad N_{\Omega_\calV}(\rho) \leq {n_2\choose s_2}\left(\frac{3r^\frac{1}{4}}{\rho}\right)^{s_2r}.
\]
Therefore,
\begin{align*}
N_{\Omega_\calB}(\rho) \leq N_{\Omega_\calU}(\frac{\rho}{2r^\frac{1}{4}})N_{\Omega_\calV}(\frac{\rho}{2r^\frac{1}{4}}) \leq {n_1\choose s_1}{n_2\choose s_2}\left(\frac{6\sqrt{r}}{\rho}\right)^{(s_1+s_2)r}.
\end{align*}
\end{proof}

\begin{proof}[Proof of Proposition \ref{pro:cnB} ($\Omega_\calX$ defined by \eqref{eq:csl2})]
For an arbitrary low-rank symmetric matrix $X$ in the unit ball,
\[
X\in\Omega_\calB=\{X\in\bbR^{n\times n}: \norm{X}_\rmF\leq 1, X=X^T, \rank(X)\leq r\},
\]
the eigendecomposition is $X=U\Lambda U^T$, where $U\in\bbR^{n\times r}$ has orthonormal columns, and $\Lambda = \diag([\lambda_1,\lambda_2,\cdots,\lambda_r])$. The eigenvalues $\lambda_1\geq \lambda_2\geq \lambda_r$ can be positive, zero, or negative. Define
\begin{align*}
\Lambda_+\defeq & \diag([|\lambda_1|,|\lambda_2|,\cdots,|\lambda_r|]),\\
\Lambda_{\sgn} \defeq & \diag([\ind{\lambda_1\geq 0},\ind{\lambda_2\geq 0},\cdots,\ind{\lambda_r\geq 0}]).
\end{align*}
Then $X = (U\Lambda_+^\frac{1}{2})\Lambda_{\sgn}(U\Lambda_+^\frac{1}{2})^T$. By an argument analogous to that in the proof of case \eqref{eq:csl}:
\begin{align*}
\norm{U\Lambda_+^\frac{1}{2}}_\rmF = \norm{\Lambda_+^\frac{1}{2}}_\rmF = & \sqrt{|\lambda_1|+|\lambda_3|+\cdots+|\lambda_r|} \leq \sqrt{\sqrt{\lambda_1^2+\lambda_2^2+\cdots+\lambda_r^2}\sqrt{r}} \leq r^\frac{1}{4}.
\end{align*}
Therefore,
\[
\Omega_\calB \subset \bigcup\limits_{j=0,1,\cdots,r}\{U\Lambda_j U^T\in\bbR^{n\times n}: U\in\Omega_\calU\},
\]
where $\Lambda_j$ is a diagonal matrix, whose first $j$ diagonal entries are $1$, and last $(r-j)$ diagonal entries are $-1$, and
\[
\qquad\Omega_\calU = \{U\in\bbR^{n\times r}: \norm{U}_\rmF\leq r^\frac{1}{4}\},\qquad N_{\Omega_\calU}(\rho) \leq \left(\frac{3r^\frac{1}{4}}{\rho}\right)^{nr}.
\]
It follows from Lemma \ref{lem:symm_product} that
\begin{align*}
N_{\Omega_\calB}(\rho) \leq (r+1)N_{\Omega_\calU}(\frac{\rho}{2r^\frac{1}{4}}) \leq (r+1)\left(\frac{6\sqrt{r}}{\rho}\right)^{nr}.
\end{align*}
\end{proof}

\begin{proof}[Proof of Proposition \ref{pro:cnB} ($\Omega_\calX$ defined by \eqref{eq:cssl2})]
The proof is analogous to the previous case, with $\Omega_\calU$ replaced by:
\[
\Omega_\calU = \{U\in\bbR^{n\times r}: \norm{U}_{\mathrm{r},0}\leq s,\norm{U}_\rmF\leq r^\frac{1}{4}\},\qquad N_{\Omega_\calU}(\rho) \leq {n\choose s}\left(\frac{3r^\frac{1}{4}}{\rho}\right)^{sr}.
\]
Therefore,
\begin{align*}
N_{\Omega_\calB}(\rho) \leq (r+1)N_{\Omega_\calU}(\frac{\rho}{2r^\frac{1}{4}}) \leq (r+1){n\choose s}\left(\frac{6\sqrt{r}}{\rho}\right)^{sr}.
\end{align*}
\end{proof}

\subsection{Proof of Propositions \ref{pro:cndB} and \ref{pro:cndX}}

Next, we prove Propositions \ref{pro:cndB} and \ref{pro:cndX}. Using the bounds on the covering number of $\Omega_\calB$ in Proposition \ref{pro:cnB}, it is easy to acquire bounds on the covering numbers of $\Omega_{\dB}$ and $\Omega_{\dX}$.

\begin{proof}[Proof of Proposition \ref{pro:cndB}]
When $\Omega_\calX$ is defined by \eqref{eq:cssp},
\begin{align*}
\Omega_{\dB} = & \{X \in\bbR^{n_1\times n_2}:\exists \beta\in\calB_t-\calB_t,~\text{s.t.}~X = \sum_{i=1}^{t}\beta^{(i)} M_i\} \\
= & \{X \in\bbR^{n_1\times n_2}:\exists \beta\in 2\calB_t,~\text{s.t.}~X = \sum_{i=1}^{t}\beta^{(i)} M_i\}.
\end{align*}
By the proof of Proposition \ref{pro:cnB}, when $\Omega_\calX$ is defined by \eqref{eq:cssp}, we have
\[
N_{\Omega_{\dB}}(\rho)\leq N_{2\calB_t}(\rho)\leq \left(\frac{6}{\rho}\right)^t.
\]

When $\Omega_\calX$ is defined by \eqref{eq:css} -- \eqref{eq:cssl2}, we apply Lemma \ref{lem:difference} to $\Omega_{\dB} = \Omega_\calB-\Omega_\calB$. 
If the covering number of $\Omega_\calB$ satisfies $N_{\Omega_\calB}(\rho) \leq C_1\left(\frac{1}{\rho}\right)^{d_1}$, where $C_1$ is independent of $\rho$, then the covering number of $\Omega_{\dB}$ satisfies $N_{\Omega_{\dB}} (\rho) \leq C_1^2\left(\frac{2}{\rho}\right)^{2d_1}$. Let $d_2 = 2d_1$ and $C_2 =2^{2d_1} C_1^2 $. Then, the rest of the bounds in Proposition \ref{pro:cndB} follow from their counterparts in Proposition \ref{pro:cnB}.
\end{proof}

\begin{proof}[Proof of Proposition \ref{pro:cndX}]
When $\Omega_\calX$ is the subspace defined by \eqref{eq:cssp}, $\Omega_\calX-\Omega_\calX=\Omega_\calX$. Hence
\[
\Omega_{\dX} = (\Omega_\calX-\Omega_\calX)\bigcap \calB_{n_1\times n_2} = \Omega_\calX \bigcap \calB_{n_1\times n_2} = \Omega_\calB.
\]
Therefore, when $\Omega_\calX$ is defined by \eqref{eq:cssp}, we have
\[
N_{\Omega_{\dX}}(\rho)\leq \left(\frac{3}{\rho}\right)^t.
\]

When $\Omega_\calX$ is defined by \eqref{eq:css} -- \eqref{eq:cssl2}, we use the fact that the sparsity (resp. rank) of matrices in $\Omega_\calX-\Omega_\calX$ is bounded by twice the sparsity (resp. rank) of matrices in $\Omega_\calX$. Therefore, the rest of the bounds in Proposition \ref{pro:cndX} follow from their counterparts in Proposition \ref{pro:cnB}, with $2s$, $2r$, $2s_1$, and $2s_2$ replacing $s$, $r$, $s_1$, and $s_2$. 
\end{proof}

\subsection{Proof of Alternative Bounds Using Minkowski Dimensions}
Next we prove Propositions \ref{pro:cnmd}.

\begin{proof}[Proof of Proposition \ref{pro:cnmd}]
If
\[
\overline{\dim}_\mathrm{B}(\Omega_\calB) =\underset{\rho\rightarrow 0}{\lim\sup}\frac{\log N_{\Omega_\calB}(\rho)}{\log\frac{1}{\rho}} \leq d,
\]
then, by the definition of limit superior, there exists $\rho_0>0$ such that for all $0<\rho<\rho_0$,
\[
\frac{\log N_{\Omega_\calB}(\rho)}{\log\frac{1}{\rho}} \leq d+1,
\]
i.e.,
\[
N_{\Omega_\calB}(\rho) \leq \left(\frac{1}{\rho}\right)^{d+1}.
\]
\end{proof}

Corollary \ref{cor:cn_alt} follows from Proposition \ref{pro:cnmd} and the Minkowski dimension bounds on $\Omega_\calB$, $\Omega_{\dB}$, and $\Omega_{\dX}$. We give the bound on the Minkowski dimension of $\Omega_\calB$ in the following lemma. Then the Minkowski dimension of $\Omega_{\dB}=\Omega_\calB-\Omega_\calB$ can be bounded using Lemma \ref{lem:difference} : $\overline{\dim}_\mathrm{B}(\Omega_{\dB}) \leq 2\overline{\dim}_\mathrm{B}(\Omega_\calB)$. The Minkowski dimension of $\Omega_{\dX}=(\Omega_\calX-\Omega_\calX)\bigcap \calB_{n_1\times n_2}$ has the same bound as $\Omega_\calB=\Omega_\calX \bigcap \calB_{n_1\times n_2}$, with $2r$, $2s_1$, and $2s_2$ replacing $r$, $s_1$, and $s_2$.

\begin{lemma}\label{lem:md}
The upper Minkowski dimension of $\Omega_\calB$ has the following bound:
\begin{enumerate}
	\item $\overline{\dim}_\mathrm{B}(\Omega_\calB)\leq (n_1+n_2-r)r$, if $\Omega_\calX$ is the set of low-rank matrices in \eqref{eq:csl}.
	\item $\overline{\dim}_\mathrm{B}(\Omega_\calB)\leq (s_1+s_2-r)r$, if $\Omega_\calX$ is the set of sparse low-rank matrices in \eqref{eq:cssl}.
\end{enumerate}
\end{lemma}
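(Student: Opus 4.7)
\textbf{Proof plan for Lemma \ref{lem:md}.} The approach is to exploit the fact that the set of rank-$r$ matrices, though a nonlinear variety in $\bbR^{n_1\times n_2}$, is parameterized by the SVD and so is a smooth manifold of dimension $(n_1+n_2-r)r$. For a matrix $X\in\Omega_\calB$ (rank at most $r$, $\norm{X}_\rmF\leq 1$), write the (compact) SVD $X=U\Sigma V^T$ with $U\in V_r(\bbR^{n_1})$, $V\in V_r(\bbR^{n_2})$ (the Stiefel manifolds of matrices with orthonormal columns), and $\Sigma=\diag(\sigma_1,\dots,\sigma_r)$ with $\sigma_i\geq 0$ and $\sum_i\sigma_i^2\leq 1$. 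The Stiefel manifold $V_r(\bbR^n)$ is a compact real-analytic submanifold of $\bbR^{n\times r}$ of dimension $nr-r(r+1)/2$, since $U^TU=I_r$ imposes $r(r+1)/2$ independent constraints on $\bbR^{n\times r}$.

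The first step is a standard covering-number bound for compact smooth submanifolds: any compact $d$-dimensional $C^1$-submanifold $M$ of Euclidean space satisfies $N_M(\rho)\leq C\rho^{-d}$ for all $0<\rho<\rho_0$, with $C,\rho_0$ depending only on $M$. I would prove this quickly by covering $M$ with finitely many coordinate charts, each of which is a bi-Lipschitz parameterization from a bounded open set of $\bbR^d$ (existence of such a finite chart atlas follows from compactness), and covering each chart's domain with an $(\rho/L)$-net. Applied to $M=V_r(\bbR^{n_i})$, this gives $N_{V_r(\bbR^{n_i})}(\rho)\leq C_i\rho^{-(n_i r-r(r+1)/2)}$. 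Separately, the feasible set for $\Sigma$ is a subset of $\calB_r$, so $N_\Sigma(\rho)\leq (3/\rho)^r$ by a standard volume argument.

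Next I combine the three covers using a triangle inequality analogous to Lemma \ref{lem:product}: if $\norm{U-U'}_\rmF\leq\rho$, $\norm{\Sigma-\Sigma'}_\rmF\leq\rho$, $\norm{V-V'}_\rmF\leq\rho$, then writing
\begin{align*}
\norm{U\Sigma V^T - U'\Sigma' V'^T}_\rmF \leq \norm{U-U'}_2\norm{\Sigma}_2\norm{V}_\rmF + \norm{U'}_2\norm{\Sigma-\Sigma'}_\rmF\norm{V}_2 + \norm{U'}_2\norm{\Sigma'}_2\norm{V-V'}_\rmF \leq 3\rho,
\end{align*}
using $\norm{U}_2=\norm{V}_2=1$, $\norm{\Sigma}_2\leq\norm{\Sigma}_\rmF\leq 1$, $\norm{V}_\rmF\leq\sqrt{r}$ absorbed into constants. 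Hence for small $\rho$,
\begin{equation*}
N_{\Omega_\calB}(3\rho) \leq N_{V_r(\bbR^{n_1})}(\rho)\cdot N_\Sigma(\rho)\cdot N_{V_r(\bbR^{n_2})}(\rho) \leq C'\cdot \rho^{-[(n_1+n_2)r - r(r+1) + r]} = C'\rho^{-(n_1+n_2-r)r},
\end{equation*}
so $\overline{\dim}_\mathrm{B}(\Omega_\calB)\leq(n_1+n_2-r)r$, establishing part 1.

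For part 2, the matrices in $\Omega_\calB$ have support contained in some subset of $s_1$ rows and $s_2$ columns. Enumerate over the $\binom{n_1}{s_1}\binom{n_2}{s_2}$ possible row/column supports; on each, the matrix lies in a copy of $\bbR^{s_1\times s_2}$, and applying part 1 within that subspace yields Minkowski dimension at most $(s_1+s_2-r)r$ for each piece. A finite union of sets of Minkowski dimension $\leq d$ has Minkowski dimension $\leq d$ (the finite count inflates the covering number only by a constant factor, which is absorbed by the $\limsup$ in the definition). The main obstacle is the first step: getting a clean, self-contained proof of the $O(\rho^{-d})$ covering bound for the Stiefel manifolds. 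The chart-and-Lipschitz argument sketched above is elementary but must be stated carefully; alternatively one can invoke Weyl's tube formula or cite standard covering-number results for compact Riemannian manifolds of known dimension.
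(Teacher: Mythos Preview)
Your argument is correct, and it takes a genuinely different route from the paper. The paper's proof is non-constructive: it stratifies $\Omega_\calX$ by \emph{exact} rank $k$ and (for part~2) exact row/column supports $J_1,J_2$, observes that each stratum $\Omega_{k,J_1,J_2}=\{X:\rank(X)=k,\ X^{(J_1^c,:)}=0,\ X^{(:,J_2^c)}=0\}$ is an embedded submanifold of dimension $(\ell_1+\ell_2-k)k$ (citing Lee's textbook), and then invokes two properties from Falconer --- that the upper Minkowski dimension of a bounded subset of a smooth $d$-manifold is at most $d$, and that upper Minkowski dimension is finitely stable. Part~1 is simply cited from \cite{Riegler2015}. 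Your approach instead builds an explicit covering via the SVD parameterization $X=U\Sigma V^T$, covers each Stiefel factor and the singular-value simplex separately, and multiplies; the dimension count $(n_1r-\tfrac{r(r+1)}{2})+(n_2r-\tfrac{r(r+1)}{2})+r=(n_1+n_2-r)r$ is correct, and the product-covering step is fine (the stray $\sqrt{r}$ from $\norm{V}_\rmF$ is indeed absorbed; in fact with $\norm{V}_2=1$ you can get $3\rho$ exactly). Your approach has the virtue of being self-contained and yielding explicit covering-number bounds rather than just a dimension statement, at the cost of having to establish the $O(\rho^{-d})$ covering bound for the compact Stiefel manifolds --- the chart-and-Lipschitz argument you sketch is standard and suffices. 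The paper's approach is shorter because it outsources exactly that step to textbook references; it also stratifies by exact rank, which you avoid by letting part~1 handle ``rank at most $r$'' in one shot.
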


\begin{proof}[Proof of Lemma \ref{lem:md}]
The first Minkowski dimension bound is given by Lemma 1 in \cite{Riegler2015}.

We prove the second Minkowski dimension bound using an argument similar to that of \cite[Lemma 1]{Riegler2015}. We first rewrite $\Omega_\calX$ in \eqref{eq:cssl} as the following union of subsets:
\[\def\arraystretch{0.6}
\Omega_\calX = \underset{\begin{array}{c}
\scriptstyle k \leq r \\
\scriptstyle J_1\subset[n_1],k \leq |J_1| = \ell_1 \leq s_1 \\
\scriptstyle J_2\subset[n_2],k \leq |J_2| = \ell_2 \leq s_2 
\end{array}}{\bigcup} \Omega_{k,J_1,J_2},
\]
where
\[
\Omega_{k,J_1,J_2} = \{X \in\bbR^{n_1\times n_2}: \rank(X)=k, X^{(J_1^c,:)}=0, X^{(:,J_2^c)}=0\}
\]
is an embedded submanifold of $\bbR^{n_1\times n_2}$ of dimension $(\ell_1+\ell_2-k)k$ (see \cite[Example 5.30]{Lee2001}). 

By Properties (i) and (ii) in \cite[Section 3.2]{Falconer1990}, the upper Minkowski dimension of $\Omega_{k,J_1,J_2}\bigcap\calB_{n_1\times n_2}$ is bounded by the dimension of the smooth submanifold $\Omega_{k,J_1,J_2}$, which is $(\ell_1+\ell_2-k)k$. 
By Property (iii) in \cite[Section 3.2]{Falconer1990}, the Minkowski dimension is finitely stable, i.e., the Minkowski dimension of a finite union of sets is no more than the sum of the Minkowski dimensions of these sets. Since
\[\def\arraystretch{0.6}
\Omega_\calB = \Omega_\calX\bigcap \calB_{n_1\times n_2} = \underset{\begin{array}{c}
\scriptstyle k \leq r \\
\scriptstyle J_1\subset[n_1],k \leq |J_1| = \ell_1 \leq s_1 \\
\scriptstyle J_2\subset[n_2],k \leq |J_2| = \ell_2 \leq s_2 
\end{array}}{\bigcup} \left( \Omega_{k,J_1,J_2} \bigcap \calB_{n_1\times n_2} \right),
\]
we have
\begin{align*}
\overline{\dim}_\mathrm{B}(\Omega_\calB) \leq & \underset{\def\arraystretch{0.6}\begin{array}{c}
\scriptstyle k \leq r \\
\scriptstyle J_1\subset[n_1],k \leq |J_1| = \ell_1 \leq s_1 \\
\scriptstyle J_2\subset[n_2],k \leq |J_2| = \ell_2 \leq s_2 
\end{array}}{\max} \overline{\dim}_\mathrm{B}\left(\Omega_{k,J_1,J_2} \bigcap \calB_{n_1\times n_2}\right) \\
= & \underset{\scriptstyle k \leq r,~ k \leq \ell_1 \leq s_1,~ k \leq \ell_2 \leq s_2}{\max} (\ell_1+\ell_2-k)k \\
= & (s_1+s_2-r)r.
\end{align*}
\end{proof}

\section{Proof of Concentration of Measure Inequalities}\label{app:concentration}
\begin{lemma}\label{lem:concentration1}
Suppose $A\in\bbR^{n_1\times n_2}$ is a random matrix following a uniform distribution on $R\calB_{n_1\times n_2}$. If a matrix $X\in\bbR^{n_1\times n_2}$ satisfies $\norm{X}_\rmF\geq \varepsilon$, then
\[
\bbP_\rmU\left[\left|\left<A,X\right>\right|\leq \delta \right] \leq \frac{2\delta\cdot V_{n_1n_2-1}}{\varepsilon R\cdot V_{n_1n_2}}.
\]
\end{lemma}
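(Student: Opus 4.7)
The plan is to reduce the statement to a one-dimensional slab volume computation by exploiting the rotational invariance of the uniform distribution on a ball.

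First, I would vectorize: identify $\bbR^{n_1\times n_2}$ with $\bbR^{N}$ where $N=n_1n_2$ via $\vect(\cdot)$, so $\left<A,X\right>=\vect(A)^T\vect(X)$ and $\norm{A}_\rmF=\norm{\vect(A)}_2$. Then $\vect(A)$ is uniformly distributed on the Euclidean ball $R\calB_{N}\subset\bbR^N$. Because this distribution is rotationally invariant, I can pick an orthogonal $Q\in\bbR^{N\times N}$ with $Q\vect(X)=\norm{X}_\rmF e_1$ and note that $Q\vect(A)$ is again uniform on $R\calB_N$. Writing $Q\vect(A)=(U_1,\tilde U)$ with $U_1\in\bbR$ and $\tilde U\in\bbR^{N-1}$, the event in question becomes $\{|U_1|\leq \delta/\norm{X}_\rmF\}$.

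Next, I would compute the probability as a volume ratio. Let $\tau\defeq \delta/\norm{X}_\rmF$. Then
\[
\bbP_\rmU\bigl[|\left<A,X\right>|\leq \delta\bigr] = \frac{1}{R^N V_N}\int_{-\tau}^{\tau}\vol_{N-1}\Bigl(\{v\in\bbR^{N-1}:v_1^2+\norm{v}_2^2\leq R^2\}\Bigr)\,\rmd v_1,
\]
where the cross-section at height $v_1$ is a Euclidean $(N-1)$-ball of radius $\sqrt{R^2-v_1^2}\leq R$, hence of volume at most $R^{N-1}V_{N-1}$. Bounding the integrand pointwise and integrating over the interval of length $2\tau$ gives
\[
\bbP_\rmU\bigl[|\left<A,X\right>|\leq \delta\bigr] \leq \frac{2\tau\cdot R^{N-1}V_{N-1}}{R^N V_N} = \frac{2\delta\cdot V_{N-1}}{\norm{X}_\rmF\cdot R\cdot V_N}.
\]
Finally, applying the hypothesis $\norm{X}_\rmF\geq \varepsilon$ yields the claimed bound.

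There is no real obstacle here: the only care needed is the trivial observation that the $(N-1)$-dimensional cross-sectional volume of the ball at any height $v_1$ is bounded above by its equatorial value $R^{N-1}V_{N-1}$, which keeps the constant tight. The rotational invariance of the uniform distribution on the ball is what lets us collapse a potentially multidimensional integral into an elementary one-dimensional slab estimate.
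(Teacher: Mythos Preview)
Your proof is correct and follows essentially the same approach as the paper: vectorize, use rotational invariance of the uniform distribution on the ball to reduce to a one-dimensional event $\{|U_1|\leq \delta/\norm{X}_\rmF\}$, then bound each cross-sectional $(N-1)$-volume by the equatorial value $R^{N-1}V_{N-1}$ and integrate over the slab. The only cosmetic difference is that the paper phrases the reduction via isotropy as replacing $x/\norm{x}_2$ by $e_1$, whereas you do it via an explicit orthogonal matrix $Q$.
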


\begin{proof}\footnote{We would like to acknowledge that Lemma \ref{lem:concentration1} is inspired by, and slightly tighter than, \cite[Lemma 3]{Stotz2013}.}
Let $a = \vect(A)\in\bbR^{n_1n_2}$, and $x=\vect(X)\in\bbR^{n_1n_2}$. Then $a$ is a random vector following a uniform distribution on $R\calB_{n_1n_2}$, and $x$ satisfies $\norm{x}_2\geq\varepsilon$. It follows that
\begin{align}
\bbP_\rmU\left[\left|\left<A,X\right>\right|\leq \delta \right] = \bbP_\rmU\left[\left|a^Tx\right|\leq \delta \right] = \bbP_\rmU\left[\left|a^T\frac{x}{\norm{x}_2}\right|\leq \frac{\delta}{\norm{x}_2} \right] = \bbP_\rmU\left[\left|a^Te_1\right|\leq \frac{\delta}{\norm{x}_2} \right], \label{eq:isotropy}
\end{align}
where $e_1$ denotes the first standard basis vector in $\bbR^{n_1n_2}$, $e^{(1)}=1$, $e^{(2:n_1n_2)}=0$, and the last equality follows from the isotropy of $\rmU$.

Therefore,
\begin{align}
\bbP_\rmU\left[\left|\left<A,X\right>\right|\leq \delta \right] = & \bbP_\rmU\left[\left|a^Te_1\right|\leq \frac{\delta}{\norm{x}_2} \right]  \nonumber\\
= & \frac{\int_{R\calB_{n_1n_2}} ~\rmd a ~\ind{\left|a^{(1)}\right|\leq \frac{\delta}{\norm{x}_2} } }{\int_{R\calB_{n_1n_2}} ~\rmd a } \nonumber\\
=& \frac{1}{R^{n_1n_2}V_{n_1n_2}} \int_{R\calB_{n_1n_2-1}}  ~\rmd a^{(2:n_1n_2)} ~ \int_{|a^{(1)}|^2 \leq R^2-\norm{a^{(2:n_1n_2)}}_2^2} ~\rmd a^{(1)} ~\ind{|a^{(1)}|\leq \frac{\delta}{\norm{x}_2} } \nonumber\\
\leq & \frac{1}{R^{n_1n_2}V_{n_1n_2}} \int_{R\calB_{n_1n_2-1}}  ~\rmd a^{(2:n_1n_2)} ~ \int_{-R}^R ~\rmd a^{(1)} ~\ind{|a^{(1)}|\leq \frac{\delta}{\norm{x}_2}}  \nonumber\\
= & \frac{R^{n_1n_2-1}V_{n_1n_2-1}}{R^{n_1n_2}V_{n_1n_2}} ~ \int_{-R}^R ~\rmd a^{(1)} ~\ind{|a^{(1)}|\leq \frac{\delta}{\norm{x}_2}} \nonumber \\
\leq & \frac{V_{n_1n_2-1}}{R\cdot V_{n_1n_2}} ~ \frac{2\delta}{\norm{x}_2} \nonumber\\
\leq & \frac{2\delta\cdot V_{n_1n_2-1}}{\varepsilon R\cdot V_{n_1n_2}}. \nonumber
\end{align}
\end{proof}

\begin{lemma}\label{lem:concentration1_gaussian}
Suppose $A\in\bbR^{n_1\times n_2}$ is a random matrix, whose entries are i.i.d. following a Gaussian distribution $N(0,\sigma^2)$. If a matrix $X\in\bbR^{n_1\times n_2}$ satisfies $\norm{X}_\rmF\geq \varepsilon$, then
\[
\bbP_\rmG\left[\left|\left<A,X\right>\right|\leq \delta \right] \leq \frac{\sqrt{2}\delta}{\sqrt{\pi}\sigma\varepsilon}.
\]
\end{lemma}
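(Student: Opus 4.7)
The plan is to reduce the problem to a one-dimensional Gaussian anticoncentration estimate. Since $\langle A, X\rangle = \sum_{i,j} A^{(i,j)} X^{(i,j)}$ is a linear combination of independent $N(0,\sigma^2)$ variables with coefficients $X^{(i,j)}$, it is itself a centered Gaussian. Its variance is
\[
\operatorname{Var}(\langle A,X\rangle) = \sigma^2 \sum_{i,j} (X^{(i,j)})^2 = \sigma^2 \norm{X}_\rmF^2,
\]
so $\langle A, X\rangle \sim N(0, \sigma^2\norm{X}_\rmF^2)$.

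First I would write out the anticoncentration bound: for any $\tau>0$ and $Z\sim N(0,\tau^2)$,
\[
\bbP[|Z|\leq \delta] = \int_{-\delta}^{\delta} \frac{1}{\sqrt{2\pi}\,\tau} e^{-z^2/(2\tau^2)}\,\rmd z \leq \frac{2\delta}{\sqrt{2\pi}\,\tau} = \frac{\sqrt{2}\,\delta}{\sqrt{\pi}\,\tau},
\]
using only that the Gaussian density is maximized at $0$ with value $1/(\sqrt{2\pi}\tau)$.

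Then I would apply this with $\tau = \sigma\norm{X}_\rmF$ and use the hypothesis $\norm{X}_\rmF \geq \varepsilon$ to bound $1/\tau \leq 1/(\sigma\varepsilon)$, yielding the claimed inequality. There is no real obstacle here: unlike the uniform-distribution analogue in Lemma \ref{lem:concentration1}, which required a slicing argument through the ball, the Gaussian case is immediate from rotational invariance (encoded in the variance computation) and the uniform bound on the Gaussian density.
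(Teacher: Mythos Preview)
Your proof is correct and essentially the same as the paper's. The paper phrases the reduction via isotropy (rotating so that $x/\norm{x}_2 = e_1$ and then bounding the marginal $|a^{(1)}|\leq \delta/\norm{x}_2$), whereas you directly identify $\langle A,X\rangle\sim N(0,\sigma^2\norm{X}_\rmF^2)$; both arrive at the same one-dimensional density bound $\bbP[|Z|\leq\delta]\leq 2\delta/(\sqrt{2\pi}\,\tau)$.
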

\begin{proof}
Since i.i.d. Gaussian distribution is also isotropic, we have \eqref{eq:isotropy} with $\rmG$ replacing $\rmU$:
\begin{align}
\bbP_\rmG\left[\left|\left<A,X\right>\right|\leq \delta \right] = \bbP_\rmG\left[\left|a^Te_1\right|\leq \frac{\delta}{\norm{x}_2} \right] = \bbP_\rmG\left[\left|a^{(1)}\right|\leq \frac{\delta}{\norm{x}_2} \right]. \label{eq:isotropy_gaussian}
\end{align}
Since the entries of $a$ are independent, the probability in \eqref{eq:isotropy_gaussian} only has to do with the marginal distribution of its first entry $\rmG^{(1)}$, which is $N(0,\sigma^2)$, on the interval $\left[-\frac{\delta}{\norm{x}_2},\frac{\delta}{\norm{x}_2}\right]$. Therefore,
\begin{align}
\bbP_\rmG\left[\left|\left<A,X\right>\right|\leq \delta \right] =\bbP_{\rmG^{(1)}}\left[\left|a^{(1)}\right|\leq \frac{\delta}{\norm{x}_2} \right]\leq  p_{\rmG^{(1)}}(0) \cdot \frac{2\delta}{\norm{x}_2} \leq \frac{2\delta}{\sqrt{2\pi}\sigma \varepsilon}.
\end{align}
\end{proof}

\begin{lemma}\label{lem:concentration2}
Suppose $a\in\bbR^{n_1}$ and $b\in\bbR^{n_2}$ are independent random vectors, following uniform distributions on $R_1\calB_{n_1}$ and $R_2\calB_{n_2}$, respectively. If a matrix $X\in\bbR^{n_1\times n_2}$ satisfies $\varepsilon\leq \norm{X}_2\leq E$, then
\[
\bbP_{\rmU_1\rmU_2}\left[\left|a^TX b\right|\leq \delta \right] \leq \frac{4\delta \cdot V_{n_1-1}\cdot V_{n_2-1}}{\varepsilon R_1R_2\cdot V_{n_1}\cdot V_{n_2}} \left(1+\ln\frac{ER_1R_2}{\delta}\right).
\]
\end{lemma}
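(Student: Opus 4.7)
Proof plan for Lemma \ref{lem:concentration2}.

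The plan is to exploit the rotational invariance (isotropy) of the uniform distributions on Euclidean balls via the singular value decomposition of $X$, and then reduce a two-dimensional probability to a one-dimensional integration that produces the logarithmic factor. Let $X = U \Sigma V^T$ with $\sigma_1 = \norm{X}_2 \in [\varepsilon, E]$, and substitute $\tilde{a} = U^T a$, $\tilde{b} = V^T b$. By the rotational invariance of uniform distributions on balls, $\tilde{a}$ and $\tilde{b}$ are still uniform on $R_1 \calB_{n_1}$ and $R_2 \calB_{n_2}$, and $a^T X b = \tilde{a}^T (\Sigma \tilde{b})$. Crucially, the first coordinate of $\Sigma \tilde{b}$ is $\sigma_1 \tilde{b}^{(1)}$, so $\norm{\Sigma \tilde{b}}_2 \geq \varepsilon |\tilde{b}^{(1)}|$.

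Next, I would condition on $\tilde{b}$. The quantity $\tilde{a}^T (\Sigma \tilde{b})$ is then a linear functional of $\tilde{a}$, so the argument in the proof of Lemma \ref{lem:concentration1} (applied with $\tilde{a}$ in place of $A$ and $\Sigma \tilde{b}$ in place of $X$) gives $\bbP_{\rmU_1}[|a^T X b| \leq \delta \mid \tilde{b}] \leq \min(1, c' \delta / |\tilde{b}^{(1)}|)$ with $c' = 2 V_{n_1-1} / (\varepsilon R_1 V_{n_1})$. The key feature is that this conditional bound depends on $\tilde{b}$ only through the scalar $\tilde{b}^{(1)}$, so the outer expectation collapses to a one-dimensional integral. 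The marginal density of $\tilde{b}^{(1)}$ is bounded above by its value at the origin, $V_{n_2-1}/(R_2 V_{n_2})$, on $[-R_2, R_2]$; hence it remains to evaluate $\int_0^{R_2} \min(1, c' \delta / t)\, dt$, which splits at $t = c' \delta$ and equals $c' \delta (1 + \ln(R_2/(c' \delta)))$. Multiplying the pieces together reproduces exactly the prefactor $4 \delta V_{n_1-1} V_{n_2-1} / (\varepsilon R_1 R_2 V_{n_1} V_{n_2})$.

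Finally, I would replace the logarithm's argument $R_2/(c' \delta) = \varepsilon R_1 R_2 V_{n_1}/(2 \delta V_{n_1-1})$ by the cleaner upper bound $E R_1 R_2 / \delta$, using $\varepsilon \leq E$ together with the elementary volume-ratio inequality $V_{n_1} \leq 2 V_{n_1-1}$ (a direct consequence of the closed form $V_n = \pi^{n/2}/\Gamma(n/2+1)$). I expect the main conceptual step to be the dimension-reduction observation that the conditional bound depends on $\tilde{b}$ only through the scalar $\tilde{b}^{(1)}$; after that, the one-dimensional log-producing integral and the final relaxation are routine. One minor technical point is the corner case $c' \delta > R_2$, where the integral collapses to $R_2$ and the logarithm contribution vanishes; the claim still holds there because the prefactor alone is already comparable to the trivial bound $2 V_{n_2-1}/V_{n_2}$, so checking this boundary behavior is the only care required beyond the routine calculation.
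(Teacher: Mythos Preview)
Your proposal is correct and follows essentially the same strategy as the paper: reduce via the SVD and rotational invariance to the diagonal case, then perform a one-dimensional integration that produces the $\min(\cdot,\cdot)$ and a second one-dimensional integration that produces the logarithm. The only organizational difference is that the paper carries out both one-dimensional integrals explicitly (first over $\tilde b^{(1)}$, then over $\tilde a^{(1)}$, keeping $\norm{X}_2$ until the very end), whereas you package the $\tilde a$-integration as an invocation of Lemma~\ref{lem:concentration1} and only then integrate over $\tilde b^{(1)}$; this swap is harmless but forces you to relax $\norm{X}_2$ to $\varepsilon$ earlier and then invoke the extra inequality $V_{n_1}\le 2V_{n_1-1}$ at the end, a step the paper's ordering avoids.
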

\begin{proof}\footnote{Lemma \ref{lem:concentration2} is a rephrase of \cite[Lemma A.1]{Li2015d}. We include the proof here for completeness.}
Suppose the singular value decomposition (SVD) of $X$ is $X = U\Sigma V^T$, where $U\in\bbR^{n_1\times n_1}$ and $V\in\bbR^{n_2\times n_2}$ are orthogonal matrices, and $\Sigma\in\bbR^{n_1\times n_2}$ satisfies $\varepsilon\leq \Sigma^{(1,1)} = \norm{X}_2\leq E$. 

Let $\tilde{a} \defeq U^Ta$, and $\tilde{b} \defeq V^Tb$, then $\tilde{a}$ and $\tilde{b}$ are also independent random vectors, following uniform distributions on $R_1\calB_{n_1}$ and $R_2\calB_{n_2}$, respectively. Therefore,
\begin{align}
& \bbP_{\rmU_1\rmU_2}\left[\left|a^TXb\right|\leq \delta\right] \nonumber\\
= & \bbP_{\rmU_1\rmU_2}\left[\left|\tilde{a}^T\Sigma\tilde{b}\right|\leq \delta\right] \nonumber\\
= & \frac{\int_{R_1\calB_{n_1}} ~\rmd\tilde{a}\int_{R_2\calB_{n_2}}~ \rmd\tilde{b} ~\ind{|\tilde{a}^T\Sigma\tilde{b}|\leq \delta} }{\int_{R_1\calB_{n_1}} ~\rmd\tilde{a}\int_{R_2\calB_{n_2}}~\rmd\tilde{b} } \nonumber\\
=& \frac{1}{R_1^{n_1}V_{n_1}\cdot R_2^{n_2} V_{n_2}} \int\limits_{R_1\calB_{n_1-1}}  ~\rmd\tilde{a}^{(2:n_1)} \int\limits_{R_2\calB_{n_2-1}}~\rmd\tilde{b}^{(2:n_2)}~ \phi(\tilde{a},\tilde{b}),  \label{eq:big}
\end{align}
where
\begin{align}
\phi(\tilde{a},\tilde{b}) =& \int_{-R_1}^{R_1} ~\rmd\tilde{a}^{(1)} \int_{-R_2}^{R_2} ~\rmd\tilde{b}^{(1)}~\ind{|\tilde{a}^T\Sigma\tilde{b}|\leq \delta} \cdot \ind{|\tilde{a}^{(1)}|^2 \leq R_1^2-\norm{\tilde{a}^{(2:n_1)}}_2^2} \cdot \ind{|\tilde{b}^{(1)}|^2 \leq R_2^2-\norm{\tilde{b}^{(2:n_2)}}_2^2} \nonumber\\
\leq & \int_{-R_1}^{R_1} ~\rmd\tilde{a}^{(1)} \int_{-R_2}^{R_2} ~\rmd\tilde{b}^{(1)} \ind{\left|\tilde{b}^{(1)}+\frac{1}{\norm{X}_2 \tilde{a}^{(1)}}\tilde{a}^{(2:n_1)T}\Sigma^{(2:n_1,2:n_2)}\tilde{b}^{(2:n_2)}\right| \leq \frac{\delta}{\norm{X}_2|\tilde{a}^{(1)}|}}             \nonumber\\
\leq & \int_{-R_1}^{R_1} ~\rmd\tilde{a}^{(1)} \min\Biggl(\frac{2\delta}{\norm{X}_2|\tilde{a}^{(1)}|}, 2R_2 \Biggl) \nonumber \\
=  & \frac{4\delta}{\norm{X}_2} \left(1+\ln\frac{\norm{X}_2R_1R_2}{\delta}\right) \nonumber \\
\leq & \frac{4\delta}{\varepsilon}  \left(1+\ln\frac{ER_1R_2}{\delta}\right). \label{eq:small}
\end{align}
Substituting \eqref{eq:small} into \eqref{eq:big}, we obtain
\[
\bbP_{\rmU_1\rmU_2}\left[\left|a^TXb\right|\leq \delta\right] \leq \frac{4\delta \cdot R_1^{n_1-1}V_{n_1-1}\cdot R_2^{n_2-1}V_{n_2-1}}{\varepsilon\cdot R_1^{n_1}V_{n_1}\cdot R_2^{n_2}V_{n_2}} \left(1+\ln\frac{ER_1R_2}{\delta}\right) = \frac{4\delta \cdot V_{n_1-1}\cdot V_{n_2-1}}{\varepsilon R_1R_2\cdot V_{n_1}\cdot V_{n_2}} \left(1+\ln\frac{ER_1R_2}{\delta}\right).
\]
\end{proof}

Lemma \ref{lem:concentration2} adapts a previous result by Riegler et al. \cite[Lemma 4]{Riegler2015}. They have two concentration bounds, for $X$ of rank $1$ and for $X$ of rank larger than $1$. Their bound for $\rank(X)> 1$ is tighter in terms of dependence on $\delta$, but is also inversely proportional to the product of all nonzero singular values of $X$. When those singular values decay fast, this bound is not necessarily stronger than our bound. In the analysis of stability, these concentration bounds must apply to an adversarial $X$. The improvement of the dependence of such bounds on $\delta$ is not necessary, and the worse dependence on $X$ becomes problematic. Therefore, our adaptation of the previous result makes a big difference in our stability guarantees.

\begin{lemma}\label{lem:concentration2_gaussian}
Suppose $a\in\bbR^{n_1}$ and $b\in\bbR^{n_2}$ are independent random vectors, and the entries of $a$ (resp. $b$) are i.i.d. following a Gaussian distribution $N(0,\sigma_1^2)$ (resp. $N(0,\sigma_2^2)$). If a matrix $X\in\bbR^{n_1\times n_2}$ satisfies $\varepsilon\leq \norm{X}_2\leq E$, then
\[
\bbP_{\rmG_1\rmG_2}\left[\left|a^TX b\right|\leq \delta \right] \leq \frac{\delta}{\varepsilon\sigma_1\sigma_2}\left(1+\ln\left(1+\frac{E\sigma_1\sigma_2}{\delta}\right)\right).
\]
\end{lemma}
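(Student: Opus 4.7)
The plan is to prove Lemma \ref{lem:concentration2_gaussian} following the same three-step structure as the proof of Lemma \ref{lem:concentration2}, but replacing uniform-distribution tools (volume ratios, integration over a ball) with Gaussian tools (pointwise density bound, integration over all of $\bbR$ with Gaussian weight).

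First, I would apply the SVD $X=U\Sigma V^T$ with $\Sigma^{(1,1)}=\norm{X}_2\in[\varepsilon,E]$. By rotational invariance of the isotropic Gaussians $\rmG_1,\rmG_2$, the vectors $\tilde a = U^Ta$ and $\tilde b = V^Tb$ have the same product distribution as $(a,b)$. Hence
\[
a^TXb \;=\; \tilde a^T\Sigma \tilde b \;=\; \norm{X}_2\,\tilde a^{(1)}\tilde b^{(1)} + Z,
\]
where $Z=\sum_{i\geq 2}\Sigma^{(i,i)}\tilde a^{(i)}\tilde b^{(i)}$ is independent of $(\tilde a^{(1)},\tilde b^{(1)})$. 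Next, I would condition on $\tilde a^{(1)}=w$ and on $Z$. Because $\tilde b^{(1)}\sim N(0,\sigma_2^2)$ has density bounded by $1/(\sqrt{2\pi}\,\sigma_2)$, shift-invariant Gaussian anti-concentration yields
\[
\bbP_{\rmG_1\rmG_2}\!\left[|a^TXb|\le \delta \,\middle|\, \tilde a^{(1)}=w\right]\le \min\!\left(1,\;\frac{\sqrt{2/\pi}\,\delta}{\sigma_2\,\norm{X}_2\,|w|}\right).
\]

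Setting $K \defeq \sqrt{2/\pi}\,\delta/(\sigma_2\norm{X}_2)$, the remaining task is to integrate this minimum against the $N(0,\sigma_1^2)$ density of $\tilde a^{(1)}$. I would split $\bbR$ at $|w|=K$. On $\{|w|\le K\}$ the minimum equals $1$, and the density bound $1/(\sqrt{2\pi}\sigma_1)$ gives a contribution at most $2K/(\sqrt{2\pi}\sigma_1)$. On $\{|w|>K\}$ the minimum equals $K/|w|$; changing variables to $u=w/\sigma_1$ produces
\[
\frac{2K}{\sqrt{2\pi}\,\sigma_1}\int_{K/\sigma_1}^{\infty}\frac{e^{-u^2/2}}{u}\,du,
\]
which I would control by splitting at $u=1$: on $[K/\sigma_1,1]$ the bound $1/u$ gives at most $\ln(\sigma_1/K)_+$, and on $[1,\infty)$ a finite constant remains (the integral being bounded by, e.g., $\int_1^\infty e^{-u^2/2}\,du$). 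Combining the two pieces yields a bound of the form
\[
\frac{C\,\delta}{\sigma_1\sigma_2\,\norm{X}_2}\bigl(1+\ln(1+\sigma_1/K)\bigr).
\]
Resubstituting $K$, the logarithm becomes $\ln(1+c\,\sigma_1\sigma_2\norm{X}_2/\delta)$; finally, using $\norm{X}_2\le E$ inside the logarithm and $\norm{X}_2\ge\varepsilon$ in the leading factor gives the stated form.

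The main obstacle, as in the uniform case, is not conceptual but arithmetic: pinning the leading constant down to exactly $1$ as in the lemma statement. The naive Gaussian density bound contributes factors of $1/\sqrt{2\pi}$, the anti-concentration step contributes $\sqrt{2/\pi}$, and the tail integral $\int_1^\infty e^{-u^2/2}/u\,du$ contributes a small additive constant; these combine to a prefactor close to $2/\pi\approx 0.637$, which is comfortably below $1$, but one must choose the split point carefully and be willing to relax the argument of the logarithm (e.g., replace $\sqrt{\pi/2}\,\sigma_1\sigma_2 E/\delta$ by the looser $\sigma_1\sigma_2 E/\delta$) so that all constants can be absorbed into the coefficient $1$ stated in the lemma.
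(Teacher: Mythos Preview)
Your reduction via SVD and rotational invariance, followed by the shift-invariant anti-concentration step that removes the contribution of $Z$, is exactly how the paper begins. From that point on, however, the two proofs diverge. You continue iteratively: integrate out $\tilde b^{(1)}$ against its Gaussian density to obtain the one-dimensional bound $\min(1,K/|w|)$, and then integrate this against the Gaussian density of $\tilde a^{(1)}$, splitting at $|w|=K$ and again at $u=1$. The paper instead keeps the two-dimensional indicator $\ind{|\tilde a^{(1)}\tilde b^{(1)}|\le \delta/\norm{X}_2}$, rescales to standard Gaussians $(u,v)$, and passes to polar coordinates, where $uv=\tfrac{r^2}{2}\sin 2\theta$; splitting the radial integral at $r=\sqrt{2c}$ and using $\arcsin x\le \tfrac{\pi}{2}x$ reduces everything to the exponential-integral bound $\int_c^\infty z^{-1}e^{-z}\,\rmd z\le \ln(1+1/c)$ (\cite[5.1.20]{Abramowitz1964}), which yields the prefactor~$1$ with no bookkeeping.

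Your route is correct and is in fact the Gaussian analogue of what the paper does in the \emph{uniform} case (Lemma~\ref{lem:concentration2}); the paper chose a different trick here precisely because polar coordinates mesh so well with the radial Gaussian weight. Your approach buys a more uniform template across the uniform and Gaussian lemmas; the paper's approach buys the exact constant~$1$ without the delicate constant-tracking you flag at the end. Since your leading coefficient is $2/\pi<1$ and the residual constants ($e^{-1/2}$ from the tail, $\ln\sqrt{\pi/2}$ from the argument shift) are small, there is indeed enough slack to absorb them into the stated bound; you might streamline this by using the same substitution $z=u^2/2$ to convert your $\int_{K/\sigma_1}^\infty u^{-1}e^{-u^2/2}\,\rmd u$ into $\tfrac12\int_{K^2/(2\sigma_1^2)}^\infty z^{-1}e^{-z}\,\rmd z$ and invoking the exponential-integral inequality directly, which avoids the ad hoc split at $u=1$.
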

\begin{proof}
Similar to the proof of Lemma \ref{lem:concentration2}, we use the SVD $X=U\Sigma V^T$, and the change of variables $\tilde{a}=U^Ta$, $\tilde{b}=V^Tb$. Since i.i.d. Gaussian distributions are isotropic, $\tilde{a}$ and $\tilde{b}$ follow distributions $\rmG_1$ and $\rmG_2$, respectively, the same distributions as $a$ and $b$. Therefore,
\begin{align}
& \bbP_{\rmG_1\rmG_2}\left[\left|a^TXb\right|\leq \delta\right] \nonumber\\
= & \bbP_{\rmG_1\rmG_2}\left[\left|\tilde{a}^T\Sigma\tilde{b}\right|\leq \delta\right] \nonumber\\
= & \int_{\bbR^{n_1}}~\rmd\tilde{a} \int_{\bbR^{n_2}}~ \rmd\tilde{b} ~\ind{|\tilde{a}^T\Sigma\tilde{b}|\leq \delta} \cdot p_{\rmG_2}(\tilde{b})\cdot p_{\rmG_1}(\tilde{a}) \nonumber \\
= & \int_{\bbR^{n_1-1}}~\rmd\tilde{a}^{(2:n_1)} \int_{\bbR^{n_2-1}}~ \rmd\tilde{b}^{(2:n_2)} ~\phi(\tilde{a},\tilde{b})\cdot p_{\rmG_2^{(2:n_2)}}(\tilde{b}^{(2:n_2)})\cdot p_{\rmG_1^{(2:n_1)}}(\tilde{a}^{(2:n_1)}), \label{eq:big2}
\end{align}
where
\begin{align}
\phi(\tilde{a},\tilde{b}) = & \int_{-\infty}^{\infty}~\rmd\tilde{a}^{(1)} \int_{-\infty}^{\infty}~ \rmd\tilde{b}^{(1)} ~\ind{|\tilde{a}^T\Sigma\tilde{b}|\leq \delta} \cdot p_{\rmG_2^{(1)}}(\tilde{b}^{(1)}) \cdot p_{\rmG_1^{(1)}}(\tilde{a}^{(1)}) \nonumber\\
= & \frac{1}{2\pi \sigma_1\sigma_2}\int_{-\infty}^{\infty}~\rmd\tilde{a}^{(1)} \int_{-\infty}^{\infty}~ \rmd\tilde{b}^{(1)} ~\ind{\left|\tilde{b}^{(1)}+\frac{1}{\norm{X}_2 \tilde{a}^{(1)}}\tilde{a}^{(2:n_1)T}\Sigma^{(2:n_1,2:n_2)}\tilde{b}^{(2:n_2)}\right| \leq \frac{\delta}{\norm{X}_2|\tilde{a}^{(1)}|}}  \nonumber\\
& ~~~~\cdot e^{-\frac{(\tilde{a}^{(1)})^2}{2\sigma_1^2}-\frac{(\tilde{b}^{(1)})^2}{2\sigma_2^2}} \nonumber\\
\leq & \frac{1}{2\pi \sigma_1\sigma_2}\int_{-\infty}^{\infty}~\rmd\tilde{a}^{(1)} \int_{-\infty}^{\infty}~ \rmd\tilde{b}^{(1)} ~\ind{\left|\tilde{b}^{(1)}\right| \leq \frac{\delta}{\norm{X}_2|\tilde{a}^{(1)}|}} \cdot e^{-\frac{(\tilde{a}^{(1)})^2}{2\sigma_1^2}-\frac{(\tilde{b}^{(1)})^2}{2\sigma_2^2}} \nonumber \\
= & \frac{1}{2\pi \sigma_1\sigma_2} \int_{-\infty}^{\infty}~\rmd\tilde{a}^{(1)} \int_{-\infty}^{\infty}~ \rmd\tilde{b}^{(1)} ~\ind{\left|\tilde{a}^{(1)}\tilde{b}^{(1)}\right| \leq \frac{\delta}{\norm{X}_2}} \cdot e^{-\frac{(\tilde{a}^{(1)})^2}{2\sigma_1^2}-\frac{(\tilde{b}^{(1)})^2}{2\sigma_2^2}} \nonumber\\
= & \frac{1}{2\pi} \int_{-\infty}^{\infty}~\rmd u \int_{-\infty}^{\infty}~ \rmd v ~\ind{\left|uv\right| \leq \frac{\delta}{\norm{X}_2\sigma_1\sigma_2}} \cdot e^{-\frac{u^2+v^2}{2}}, \nonumber
\end{align}
where $u = \frac{\tilde{a}^{(1)}}{\sigma_1}$, and $v=\frac{\tilde{b}^{(1)}}{\sigma_2}$. Rewrite the integral in polar coordinates:
\begin{align}
\phi(\tilde{a},\tilde{b}) \leq & \frac{1}{2\pi} \int_{0}^{\infty} \int_{0}^{2\pi}~\ind{\left|\frac{r^2}{2}\sin 2\theta\right| \leq \frac{\delta}{\norm{X}_2\sigma_1\sigma_2}} ~\rmd \theta ~ e^{-\frac{r^2}{2}} \cdot r ~\rmd r \nonumber\\
= & \frac{1}{2\pi} \int_{0}^{\sqrt{\frac{2\delta}{\norm{X}_2\sigma_1\sigma_2}}} \int_{0}^{2\pi} ~\rmd \theta ~ e^{-\frac{r^2}{2}} \cdot r ~\rmd r \nonumber\\
& + \frac{1}{2\pi} \int_{\sqrt{\frac{2\delta}{\norm{X}_2\sigma_1\sigma_2}}}^{\infty} \int_{0}^{2\pi}~\ind{\left|\sin 2\theta\right| \leq \frac{2\delta}{r^2\norm{X}_2\sigma_1\sigma_2}} ~\rmd \theta ~ e^{-\frac{r^2}{2}} \cdot r ~\rmd r \nonumber\\
= & 1-e^{-\frac{\delta}{\norm{X}_2\sigma_1\sigma_2}} \nonumber\\
& + \frac{4}{2\pi} \int_{\sqrt{\frac{2\delta}{\norm{X}_2\sigma_1\sigma_2}}}^{\infty} \int_{0}^{2\pi}~\ind{\left|2\theta\right| \leq \arcsin\frac{2\delta}{r^2\norm{X}_2\sigma_1\sigma_2}} ~\rmd \theta ~ e^{-\frac{r^2}{2}} \cdot r ~\rmd r \nonumber\\
\leq & 1-e^{-\frac{\delta}{\norm{X}_2\sigma_1\sigma_2}} \nonumber\\
& + \frac{4}{2\pi} \int_{\sqrt{\frac{2\delta}{\norm{X}_2\sigma_1\sigma_2}}}^{\infty} \int_{0}^{2\pi}~\ind{\left|2\theta\right| \leq \frac{\pi\delta}{r^2\norm{X}_2\sigma_1\sigma_2}} ~\rmd \theta ~ e^{-\frac{r^2}{2}} \cdot r ~\rmd r \nonumber\\
\leq & 1-e^{-\frac{\delta}{\norm{X}_2\sigma_1\sigma_2}} + \frac{\delta}{\norm{X}_2\sigma_1\sigma_2} \int_{\frac{\delta}{\norm{X}_2\sigma_1\sigma_2}}^{\infty} \frac{1}{z} ~ e^{-z}  ~\rmd z \nonumber\\
\leq & \frac{\delta}{\norm{X}_2\sigma_1\sigma_2} + \frac{\delta}{\norm{X}_2\sigma_1\sigma_2}\ln\left(1+\frac{\norm{X}_2\sigma_1\sigma_2}{\delta}\right) \label{eq:use_exp}\\
\leq & \frac{\delta}{\varepsilon\sigma_1\sigma_2}\left(1+\ln\left(1+\frac{E\sigma_1\sigma_2}{\delta}\right)\right), \label{eq:small2}
\end{align}
where \eqref{eq:use_exp} follows from
\begin{align}
1-e^{-x} \leq x,& \qquad \forall x>0, \nonumber\\
\int_{x}^{\infty} \frac{1}{z}e^{-z} ~ \rmd z  \leq e^{-x}\ln (1+\frac{1}{x})\leq \ln (1+\frac{1}{x}),& \qquad \forall x>0, \label{eq:exponential}
\end{align}
and \eqref{eq:exponential} is an established bound (see \cite[5.1.20]{Abramowitz1964}).

Substituting \eqref{eq:small2} into \eqref{eq:big2}, we have
\[
\bbP_{\rmG_1\rmG_2}\left[\left|a^TXb\right|\leq \delta\right] \leq \frac{\delta}{\varepsilon\sigma_1\sigma_2}\left(1+\ln\left(1+\frac{E\sigma_1\sigma_2}{\delta}\right)\right),
\]
thus completing the proof.
\end{proof}

\begin{lemma}\label{lem:concentration3}
Suppose $a\in\bbR^{n}$ is a random vector following uniform distribution on $R\calB_{n}$. If symmetric matrix $X\in\bbR^{n\times n}$ satisfies $\norm{X}_2\geq \varepsilon$, then
\[
\bbP_{\rmU}\left[\left|a^TX a\right|\leq \delta \right] \leq \frac{2\sqrt{2\delta} \cdot V_{n-1}}{\sqrt{\varepsilon} R\cdot V_n}.
\]
\end{lemma}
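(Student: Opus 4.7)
The plan is to mimic the structure of the proof of Lemma \ref{lem:concentration1}, but exploit the symmetry of $X$ via eigendecomposition and carefully handle the resulting quadratic form. Since $X$ is symmetric, write $X = U\Lambda U^T$ with $U$ orthogonal and $\Lambda = \diag([\lambda_1,\ldots,\lambda_n])$ ordered so that $|\lambda_1| = \norm{X}_2 \geq \varepsilon$. Set $\tilde{a} \defeq U^T a$; by the rotational invariance of the uniform distribution on $R\calB_n$, $\tilde{a}$ is also uniformly distributed on $R\calB_n$. Consequently
\[
\bbP_{\rmU}[|a^T X a|\leq \delta] = \bbP_{\rmU}\Bigl[\Bigl|\lambda_1 (\tilde{a}^{(1)})^2 + \sum_{i=2}^{n}\lambda_i (\tilde{a}^{(i)})^2\Bigr|\leq \delta\Bigr].
\]

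Next I would condition on $\tilde{a}^{(2:n)}$ and analyze the resulting univariate quadratic event in $\tilde{a}^{(1)}$. Writing $c \defeq \sum_{i\geq 2}\lambda_i (\tilde{a}^{(i)})^2$, the event becomes $(\tilde{a}^{(1)})^2 \in [\alpha,\beta]$ for some interval with $\beta-\alpha = \frac{2\delta}{|\lambda_1|}\leq \frac{2\delta}{\varepsilon}$ (taking $\alpha=0$ when the interval straddles zero). The key elementary inequality is that the Lebesgue measure of $\{s\in\bbR:\ s^2\in[\alpha,\beta]\}$ is at most $2\sqrt{\beta-\alpha}$, since $\sqrt{\beta}-\sqrt{\alpha}\leq \sqrt{\beta-\alpha}$ for $0\leq\alpha\leq\beta$ and the two-interval case $[-\sqrt{\beta},-\sqrt{\alpha}]\cup[\sqrt{\alpha},\sqrt{\beta}]$ gives total length $2(\sqrt{\beta}-\sqrt{\alpha})$. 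Thus the slice integral in $\tilde{a}^{(1)}$ is bounded by $2\sqrt{2\delta/\varepsilon}$, uniformly in $\tilde{a}^{(2:n)}$.

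Finally, dropping the ball constraint on $\tilde{a}^{(1)}$ (this only enlarges the feasible set) and integrating out $\tilde{a}^{(2:n)}$ over $R\calB_{n-1}$, I obtain
\[
\bbP_{\rmU}[|a^T X a|\leq \delta] \leq \frac{1}{R^n V_n}\int_{R\calB_{n-1}}\!\!\rmd\tilde{a}^{(2:n)}\cdot 2\sqrt{\tfrac{2\delta}{\varepsilon}} = \frac{R^{n-1}V_{n-1}}{R^n V_n}\cdot 2\sqrt{\tfrac{2\delta}{\varepsilon}} = \frac{2\sqrt{2\delta}\cdot V_{n-1}}{\sqrt{\varepsilon}\,R\cdot V_n},
\]
which is the claimed bound.

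The main obstacle is the quadratic (rather than linear) dependence on $\tilde{a}^{(1)}$: unlike Lemma \ref{lem:concentration1} where linearity gives an immediate length of $2\delta/\|x\|_2$, here the event $(\tilde{a}^{(1)})^2\in[\alpha,\beta]$ can split into two intervals and the appropriate length bound $2\sqrt{\beta-\alpha}$ produces the square-root factor $\sqrt{\delta/\varepsilon}$ (consistent with the weaker concentration of measure in assumption (A4)). Verifying that this $\sqrt{\cdot}$ bound holds uniformly over all values of $c$ -- including sign changes and the degenerate case when the interval straddles zero -- is the delicate point, but it is a single elementary inequality and the rest of the argument is routine.
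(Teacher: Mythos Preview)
Your proposal is correct and essentially identical to the paper's proof: both diagonalize $X$ by an orthogonal change of variables, slice along the eigendirection of largest modulus, and bound the one-dimensional measure of $\{s:s^2\in[\alpha,\beta]\}$ by $2\sqrt{\beta-\alpha}$ before integrating the remaining coordinates over $R\calB_{n-1}$. You are actually more explicit than the paper about why the $2\sqrt{\beta-\alpha}$ bound holds uniformly in $c$ (the paper simply asserts the bound $2\sqrt{2\delta/\norm{X}_2}$ without spelling out the case analysis), so your sketch would need no further work.
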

\begin{proof}
Suppose the eigendecomposition of symmetric matrix $X$ is $X = U\Lambda U^T$, where $U$ is an orthogonal matrix, and $\Lambda$ is a diagonal matrix whose diagonal entries are the eigenvalues of $X$. Suppose $\Lambda^{(1,1)}\geq\Lambda^{(2,2)}\geq \cdots\geq \Lambda^{(n,n)}$, then $\max\{\Lambda^{(1,1)},-\Lambda^{(n,n)}\} = \norm{X}_2\geq \varepsilon$. Without loss of generality, let $\Lambda^{(1,1)} = \norm{X}_2$.

Let $\tilde{a} \defeq U^Ta$, then $\tilde{a}$ also follows the uniform distribution on $R\calB_n$. Therefore,
\begin{align}
& \bbP_{\rmU}\left[\left|a^TXa\right|\leq \delta\right] \nonumber\\
= & \bbP_{\rmU}\left[\left|\tilde{a}^T\Lambda\tilde{a}\right|\leq \delta\right] \nonumber\\
= & \frac{\int_{R\calB_n} ~\rmd\tilde{a} ~\ind{|\tilde{a}^T\Lambda\tilde{a}|\leq \delta} }{\int_{R\calB_n} ~\rmd\tilde{a}} \nonumber\\
=& \frac{1}{R^nV_n} \int\limits_{R\calB_{n-1}}  ~\rmd\tilde{a}^{(2:n)} ~ \phi(\tilde{a}),  \label{eq:big3}
\end{align}
where
\begin{align}
\phi(\tilde{a}) =& \int_{-R}^{R} ~\rmd\tilde{a}^{(1)} ~\ind{|\tilde{a}^T\Lambda\tilde{a}|\leq \delta} \cdot \ind{|\tilde{a}^{(1)}|^2 \leq R^2-\norm{\tilde{a}^{(2:n)}}_2^2} \nonumber\\
\leq & \int_{-R}^{R} ~\rmd\tilde{a}^{(1)} \ind{\left|\left(\tilde{a}^{(1)}\right)^2+\frac{1}{\norm{X}_2}\tilde{a}^{(2:n)T}\Lambda^{(2:n,2:n)}\tilde{a}^{(2:n)}\right| \leq \frac{\delta}{\norm{X}_2}}             \nonumber\\
\leq & 2\sqrt{\frac{2\delta}{\norm{X}_2}} \nonumber \\
\leq & 2\sqrt{\frac{2\delta}{\varepsilon}}. \label{eq:small3}
\end{align}
Substituting \eqref{eq:small3} into \eqref{eq:big3}, we obtain
\[
\bbP_{\rmU}\left[\left|a^TXa\right|\leq \delta\right] \leq 2\sqrt{\frac{2\delta}{\varepsilon}}\cdot \frac{R^{n-1}V_{n-1}}{R^nV_n} = \frac{2\sqrt{2\delta} \cdot V_{n-1}}{\sqrt{\varepsilon} R\cdot V_n}.
\]
\end{proof}

\begin{lemma}\label{lem:concentration3_gaussian}
Suppose $a\in\bbR^{n}$ is a random vector whose entries are i.i.d. following a Gaussian distribution $N(0,\sigma^2)$. If symmetric matrix $X\in\bbR^{n\times n}$ satisfies $\norm{X}_2\geq \varepsilon$, then
\[
\bbP_{\rmG}\left[\left|a^TX a\right|\leq \delta \right] \leq \frac{2\sqrt{\delta}}{\sqrt{\pi\varepsilon}\sigma}.
\]
\end{lemma}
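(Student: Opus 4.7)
The plan is to mirror the proof of Lemma \ref{lem:concentration3}, replacing the uniform-on-ball argument by an isotropy-based change of variables for the Gaussian distribution, and bounding a one-dimensional conditional probability by the supremum of a Gaussian density. First, diagonalize the symmetric matrix as $X = U\Lambda U^T$ and assume, after reordering, that $|\Lambda^{(1,1)}| = \norm{X}_2 \geq \varepsilon$. By isotropy of the i.i.d.\ Gaussian distribution, $\tilde a \defeq U^T a$ has the same joint distribution as $a$, so
\[
a^T X a \;=\; \tilde a^T \Lambda \tilde a \;=\; \Lambda^{(1,1)}\bigl(\tilde a^{(1)}\bigr)^2 + \sum_{i=2}^{n}\Lambda^{(i,i)}\bigl(\tilde a^{(i)}\bigr)^2,
\]
with the coordinates of $\tilde a$ independent $N(0,\sigma^2)$.

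Next I would factor the probability using independence, fixing $\tilde a^{(2:n)}$ arbitrarily. The event $|a^T X a|\leq \delta$ then constrains $\bigl(\tilde a^{(1)}\bigr)^2$ to lie in some interval of length $2\delta/\norm{X}_2$. Intersecting with $[0,\infty)$ and taking square roots via the concavity bound $\sqrt{B}-\sqrt{A}\leq\sqrt{B-A}$ (for $B\geq A\geq 0$) shows that $\tilde a^{(1)}$ is confined to a set of Lebesgue measure at most $2\sqrt{2\delta/\norm{X}_2}$. Multiplying this measure by the supremum of the one-dimensional $N(0,\sigma^2)$ density, $p_{\rmG^{(1)}}(0)=1/(\sqrt{2\pi}\sigma)$, gives a conditional probability no larger than
\[
\frac{2\sqrt{2\delta/\norm{X}_2}}{\sqrt{2\pi}\,\sigma} \;\leq\; \frac{2\sqrt{\delta}}{\sqrt{\pi\varepsilon}\,\sigma}.
\]
Since this estimate holds uniformly in $\tilde a^{(2:n)}$, integrating out $\tilde a^{(2:n)}$ preserves it and yields the claim.

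The main technical obstacle I foresee is obtaining the $\sqrt{\delta}$ scaling cleanly. Unlike in Lemma \ref{lem:concentration1_gaussian} and Lemma \ref{lem:concentration2_gaussian}, where the target coordinate enters linearly and a ``divide by $\tilde a^{(1)}$'' reduction suffices, here $\tilde a^{(1)}$ appears quadratically, and that reduction would produce a singular integrand near zero and yield the wrong $\delta$ dependence. The preimage-of-squaring-map computation combined with the elementary inequality $\sqrt{B}-\sqrt{A}\leq\sqrt{B-A}$ bypasses this singularity and reproduces the square-root scaling already established for the uniform case in Lemma \ref{lem:concentration3}.
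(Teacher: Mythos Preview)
Your proposal is correct and follows essentially the same approach as the paper: diagonalize $X$, use Gaussian isotropy to reduce to $\tilde a^T\Lambda\tilde a$, condition on $\tilde a^{(2:n)}$, observe that $(\tilde a^{(1)})^2$ is confined to an interval of length $2\delta/\norm{X}_2$, bound the resulting Lebesgue measure by $2\sqrt{2\delta/\norm{X}_2}$, and multiply by the peak Gaussian density $1/(\sqrt{2\pi}\sigma)$. The only cosmetic difference is that the paper assumes without loss of generality $\Lambda^{(1,1)}=\norm{X}_2$ (replacing $X$ by $-X$ if necessary) rather than carrying the absolute value, and leaves the square-root inequality $\sqrt{B}-\sqrt{A}\leq\sqrt{B-A}$ implicit.
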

\begin{proof}
Similar to the proof of Lemma \ref{lem:concentration3}, we use the eigendecomposition $X = U\Lambda U^T$, and $\tilde{a} \defeq U^Ta$. Without loss of generality, let $\Lambda^{(1,1)} = \norm{X}_2\geq \varepsilon$.

Then we have
\begin{align}
& \bbP_{\rmG}\left[\left|a^TXa\right|\leq \delta\right] \nonumber\\
= & \bbP_{\rmG}\left[\left|\tilde{a}^T\Lambda\tilde{a}\right|\leq \delta\right] \nonumber\\
=& \int\limits_{\bbR^{n-1}}  ~\rmd\tilde{a}^{(2:n)} ~ \phi(\tilde{a})\cdot p_{\rmG^{(2:n)}}(\tilde{a}^{(2:n)}),  \label{eq:big4}
\end{align}
where
\begin{align}
\phi(\tilde{a}) =& \int_{-\infty}^{\infty} ~\rmd\tilde{a}^{(1)} ~\ind{|\tilde{a}^T\Lambda\tilde{a}|\leq \delta} \cdot p_{\rmG^{(1)}}(\tilde{a}^{(1)}) \nonumber\\
\leq & \frac{1}{\sqrt{2\pi}\sigma}\int_{-\infty}^{\infty} ~\rmd\tilde{a}^{(1)} \ind{\left|\left(\tilde{a}^{(1)}\right)^2+\frac{1}{\norm{X}_2}\tilde{a}^{(2:n)T}\Lambda^{(2:n,2:n)}\tilde{a}^{(2:n)}\right| \leq \frac{\delta}{\norm{X}_2}} \cdot e^{-\frac{(a^{(1)})^2}{2\sigma^2}}      \nonumber\\
\leq & \frac{1}{\sqrt{2\pi}\sigma}\times 2\sqrt{\frac{2\delta}{\norm{X}_2}} \nonumber \\
\leq & \frac{2\sqrt{\delta}}{\sqrt{\pi\varepsilon}\sigma}. \label{eq:small4}
\end{align}
Substituting \eqref{eq:small4} into \eqref{eq:big4}, we obtain
\[
\bbP_{\rmG}\left[\left|a^TXa\right|\leq \delta\right] \leq \frac{2\sqrt{\delta}}{\sqrt{\pi\varepsilon}\sigma}.
\]
\end{proof}


\end{document}